\newcommand{\nop}[1]{}
\newtheorem{definition}{\bf Definition}
\newtheorem{lemma}{\bf Lemma}[section]
\title{Prediction-Based Task Assignment in Spatial Crowdsourcing (Technical Report)}
\author{
    {Peng Cheng$^{\#}$, Xiang Lian$^{*}$, Lei Chen$^{\#}$, Cyrus Shahabi$^{\dagger}$}
    \vspace{1.6mm}\\
    \fontsize{10}{10}\selectfont\itshape
    $^{\#}$Hong Kong University of Science and Technology, Hong Kong, China\\
    \fontsize{9}{9}\selectfont\ttfamily\upshape
    \{pchengaa,  leichen\}@cse.ust.hk
    \vspace{1.2mm}\\
    \fontsize{10}{10}\selectfont\rmfamily\itshape
    $^{*}$Kent State University, Ohio, USA\\
    \fontsize{9}{9}\selectfont\ttfamily\upshape
    xlian@kent.edu
    \vspace{1.2mm}\\
    \fontsize{10}{10}\selectfont\rmfamily\itshape
    $^{\dagger}$University of Southern California, California, USA\\
    \fontsize{9}{9}\selectfont\ttfamily\upshape
    shahabi@usc.edu
}
\begin{document}

\maketitle

\hyphenpenalty=5000

\begin{abstract}
With the rapid advancement of mobile devices and crowdsourcing
platforms, \textit{spatial crowdsourcing} has attracted
much attention from various research communities. A
spatial crowdsourcing system periodically matches a
number of location-based workers with nearby spatial tasks (e.g.,
taking photos or videos at some specific locations). Previous studies
on spatial crowdsourcing focus on task assignment
strategies that maximize an assignment score based  solely  on the available information about workers/tasks at the time of assignment. These strategies can only achieve local
optimality by neglecting the workers/tasks that may join
the system in a future time. In contrast, in this paper, we aim to improve the global assignment, by considering both present and future (via predictions) workers/tasks. In particular, we
formalize a new optimization problem, namely \textit{maximum quality task assignment} (MQA). The optimization objective of MQA is to maximize a global assignment
quality score,
under a traveling budget constraint.
To tackle this problem, we design an effective grid-based
prediction method to estimate the spatial distributions of workers/tasks
in the future, and then utilize the predictions to assign workers to tasks at any given time instance. We prove that the MQA problem is
NP-hard, and thus intractable. Therefore, we propose efficient
heuristics to tackle the MQA problem, including
\textit{MQA greedy} and \textit{MQA divide-and-conquer} approaches,
which can efficiently assign workers to spatial tasks with
high quality scores and low budget consumptions. Through extensive
experiments, we demonstrate the efficiency and effectiveness of our approaches on both real and synthetic datasets.
\end{abstract}

\section{Introduction}
\label{sec:intro}

Mobile devices not only bring  convenience to
our daily life, but also enable people to easily perform location-based tasks in their vicinity, such as taking photos/videos
(e.g., street view of Google Maps \cite{GoogleMapStreetView}),
reporting traffic conditions (e.g., Waze \cite{waze}), and identifying the status of
display shelves at neighborhood stores (e.g., Gigwalk
\cite{gigwalk}). Recently, to exploit these phenomena, a new
framework, namely \textit{spatial crowdsourcing}
\cite{kazemi2012geocrowd},  for requesting workers to perform
spatial tasks, has drawn much attention from both academia (e.g.,
MediaQ \cite{kim2014mediaq}) and industry (e.g., TaskRabbit
\cite{taskrabbit}). A typical spatial crowdsourcing system (e.g.,
gMission \cite{chen2014gmission}) utilizes a number of
dynamically moving workers to accomplish spatial tasks (e.g., taking
photos/videos), which requires workers to physically go to the
specified locations to complete these tasks.

We first provide the following motivating example.

\vspace{0.5ex}\noindent\textbf{Example 1 (The Spatial Crowdsourcing
Problem with Multiple Time Instances)} {\it Consider a scenario of
spatial crowdsourcing in Figure \ref{fig:example}, where spatial
tasks, $t_1$ $\sim$ $t_3$, are represented by red circles, and
workers, $w_1$ $\sim$ $w_3$, are denoted by blue triangles. In
particular, Figure \ref{fig:fig1a} shows a worker, $w_1$, and two
tasks, $t_1$ and $t_2$, which join the system  at
a timestamp $p$ (denoted by markers with solid border). Figure
\ref{fig:fig1b} depicts two more workers, $w_2$ and $w_3$, and one
more task, $t_3$, which arrive at the system at a future timestamp $(p+1)$ (denoted by markers with the
dashed border).

Each worker $w_i$ ($1\leq i\leq 3$) has a particular expertise to
perform different types of spatial tasks $t_j$ ($1\leq j\leq 3$).
Thus, we assume that the competence of a worker $w_i$ to perform task $t_j$
can be captured by a quality score $q_{ij}$ (as shown in Table
\ref{tab:distance_quality}). Furthermore, each worker, $w_i$, is
provided with some reward (e.g., monetary or points) to
cover the traveling cost from $w_i$ to $t_j$, which is proportional
to the traveling distance, $dist(w_i, t_j)$ (as depicted in Table
\ref{tab:distance_quality}), where $dist(x, y)$ is a distance
function from $x$ to $y$. Thus, workers are persuaded
to perform tasks even if they are far away from the tasks' locations, which
improves the task completion rate, especially for
workers/tasks with unbalanced location distributions.

\begin{figure}[t!]
	\centering
	\setcounter{subfigure}{-1}
	\subfigure{
		\includegraphics[scale=0.2]{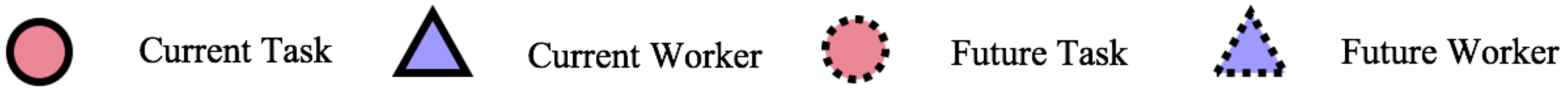}
		\label{fig:bar}
	}\\\vspace{-1ex}
	\hspace{-4ex}\subfigure[][{\scriptsize assignment at timestamp $p$}]{
		\scalebox{0.19}[0.19]{\includegraphics{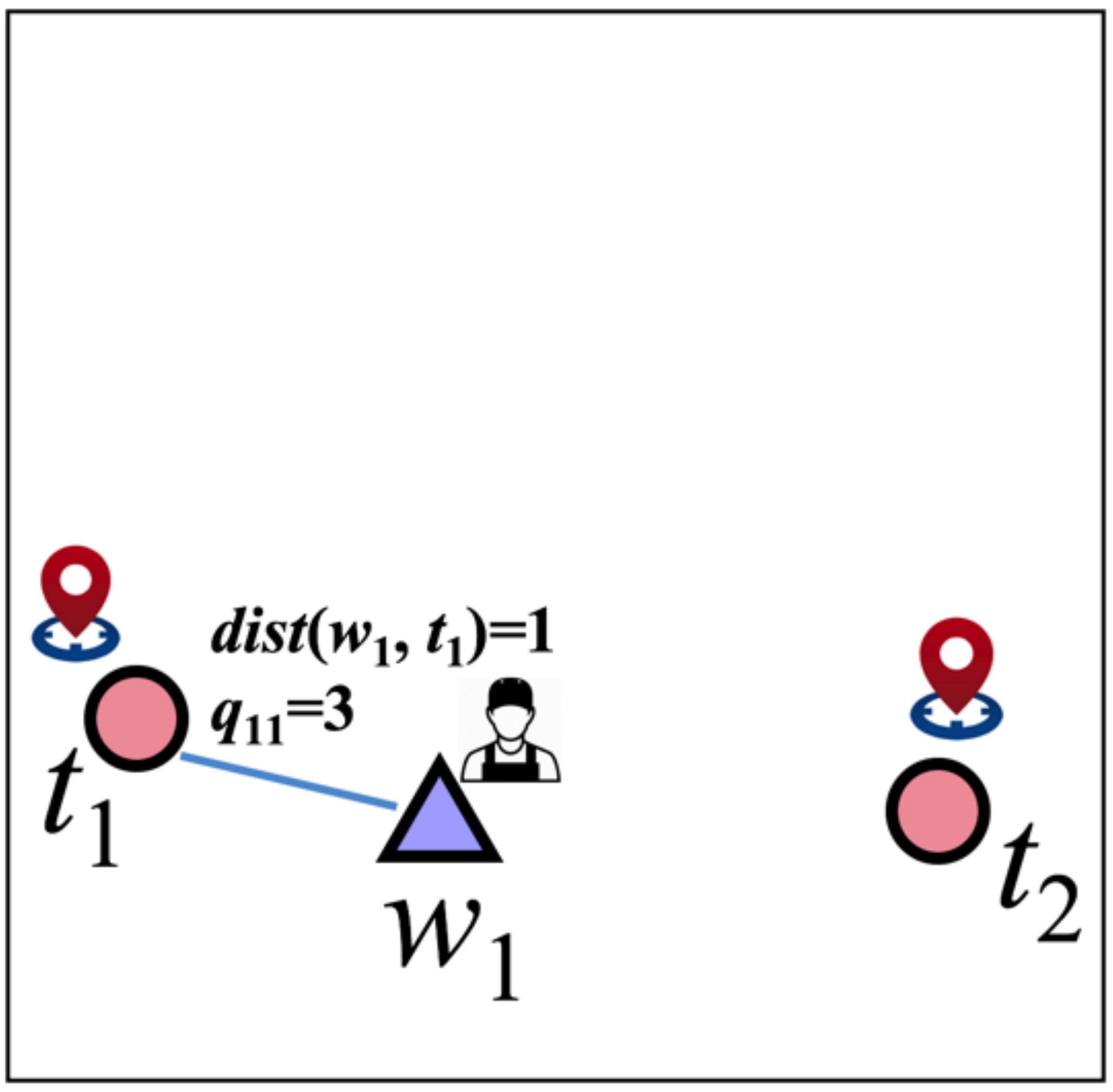}}
		\label{fig:fig1a}
	}
	\hspace{0.5ex}\subfigure[][{\scriptsize assignment at timestamp $(p+1)$}]{
		\scalebox{0.19}[0.19]{\includegraphics{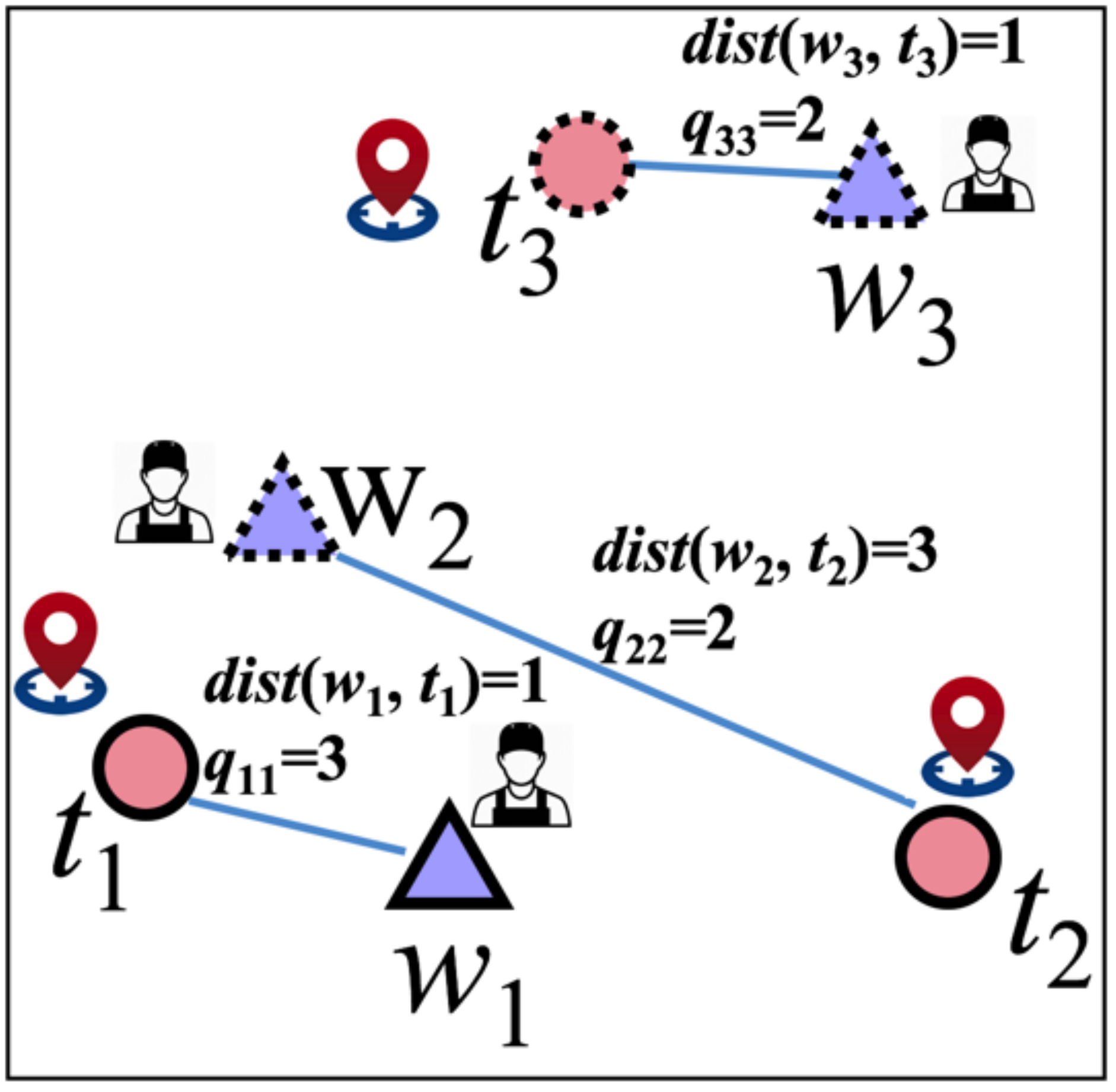}}
		\label{fig:fig1b}
	}
	\vspace{-2ex}
	\caption{\small Locally Optimal Worker-and-Task Assignments in the Spatial Crowdsourcing System.}
	\label{fig:example}\vspace{-5ex}
\end{figure}

Given a maximum reward budget at each time instance, a spatial
crowdsourcing problem is to assign workers to tasks at
both current and future timestamps, $p$ and $(p+1)$, respectively,
such that the overall quality score of assignments is maximized while
the reward for workers is under an allocated budget. 

}

In Example 1, the traditional spatial crowdsourcing approaches
\cite{cheng2014reliable, kim2014mediaq}
considered the worker-and-task assignments only
based on workers/tasks that are available in the system at each
time instance. For example, in the first time instance, $p$, of Figure \ref{fig:fig1a},
only $t_1$, $t_2$, and $w_1$ exist.
Therefore, we assign worker $w_1$ to task $t_1$ (rather than task
$t_2$), since it holds that $dist(w_1, t_1)< dist(w_1, t_2)$ and
$q_{11}>q_{12}$ (i.e., worker $w_1$ can accomplish task $t_1$ with
lower budget consumption and higher quality, compared with task
$t_2$, as given in Table \ref{tab:distance_quality}). Next, in the
second time instance of Figure \ref{fig:fig1b}, two new tasks, $t_2$ and $t_3$,
and two new workers, $w_2$ and $w_3$, become available in the system at
timestamp $(p+1)$. Traditional spatial crowdsourcing approaches
would create assignment pairs $\langle w_2, t_2\rangle$ and $\langle
w_3, t_3\rangle$ at this time instance (see lines in Figure
\ref{fig:fig1b}). As a result, such an assignment strategy at two
separate time instances leads to the overall traveling cost 5 $(=1+3+1)$ and
overall quality score 7 $(=3+2+2)$.

Note that the assignment strategy above does not take into account
future workers/tasks that may join the system at a later time instance. In \cite{kazemi2012geocrowd}, it is shown that an optimal assignment strategy exists for a single time instance but this local optimality may not yield a global optimal assignment across all time instances. 
In our running example, at
timestamp $p$, a clairvoyant algorithm that knows the future workers/tasks that arrive
at timestamp $(p+1)$, may provide a better 
global assignment strategy (i.e., with lower overall traveling cost and higher
overall quality score). Based on this observation, in this paper, we will
formulate a new optimization problem, namely
\textit{maximum quality task assignment} (MQA) with the goal of
assigning moving workers to spatial tasks under a budget constraint to achieve a better global assignment across multiple time instances.

\begin{table}
	\centering \vspace{-7ex}
	{\scriptsize
		\caption{\small Distances and Quality Scores of Worker-and-Task
			Pairs}\label{tab:distance_quality}\vspace{-2ex}
		\begin{tabular}{c|c|c}
			{\bf worker-and-task pair, $\langle w_i, t_j\rangle$} & {\bf  distance, $dist(w_i, t_j)$} & {\bf quality score, $q_{ij}$}\\
			\hline \hline
			$\langle w_1, t_1\rangle$ & 1 & 3\\
			$\langle w_1, t_2\rangle$ & 2 & 2\\
			$\langle w_1, t_3\rangle$ & 4 & 2\\\hline
			$\langle w_2, t_1\rangle$ & 1 & 4\\
			$\langle w_2, t_2\rangle$ & 3 & 2\\
			$\langle w_2, t_3\rangle$ & 2 & 1\\\hline
			$\langle w_3, t_1\rangle$ & 5 & 2\\
			$\langle w_3, t_2\rangle$ & 3 & 1\\
			$\langle w_3, t_3\rangle$ & 1 & 2\\
			\hline
		\end{tabular}
	}
	\vspace{-7ex}
\end{table}

\vspace{0.5ex}\noindent\textbf{Example 2 (The Maximum Quality Task Assignment Problem)} {\it In the example of Figure
\ref{fig:example}, the MQA problem is to maximize the overall quality score of assignments under traveling cost budget constraints across multiple time instances. As shown in Figure
\ref{fig:tasks_m}, at timestamp $p$, we can have the
prediction-based assignments: $\langle w_2, t_1\rangle$, $\langle
w_1, t_2\rangle$, and $\langle w_3, t_3\rangle$, which can achieve
a better global assignment with smaller traveling cost $4$ $(=1+2+1)$ and
higher quality score $8$ $(=4+2+2)$, compared to locally optimal
assignments (without prediction) in Figure \ref{fig:example} with
the traveling cost, $5$, and the quality score $7$.

}

As shown in \cite{kazemi2012geocrowd} and from the example above, we can see that even an optimal local
assignment based on the currently available tasks/workers
at each time instance may not lead to a global optimal solution. 
In \cite{kazemi2012geocrowd} , some heuristics based on {\bf past} worker/task distributions (e.g., based on location entropy) were proposed to address this challenge.  Here, instead, we strive to predict/estimate {\bf future} task/worker distributions to provide a better global assignment strategy.  Moreover, in \cite{kazemi2012geocrowd}, the objective was to maximize the number of assigned tasks without
considering the traveling budget constraint and the quality score of task assignment. In contrast, the optimization objective of MQA is to maximize the assignment quality score under budget constraints.

\begin{figure}[t!]\vspace{-7ex}
    \centering
 \setcounter{subfigure}{-1}
    \subfigure{
       \includegraphics[scale=0.2]{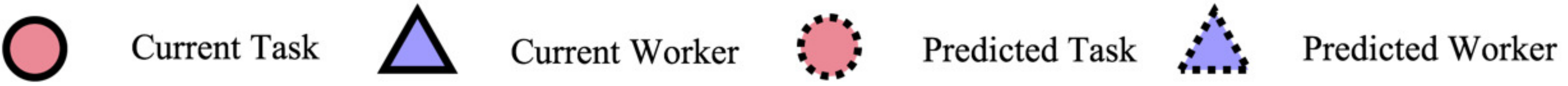}
       \label{fig:bar}
        }\\\vspace{-1ex}
       \scalebox{0.17}[0.17]{\includegraphics{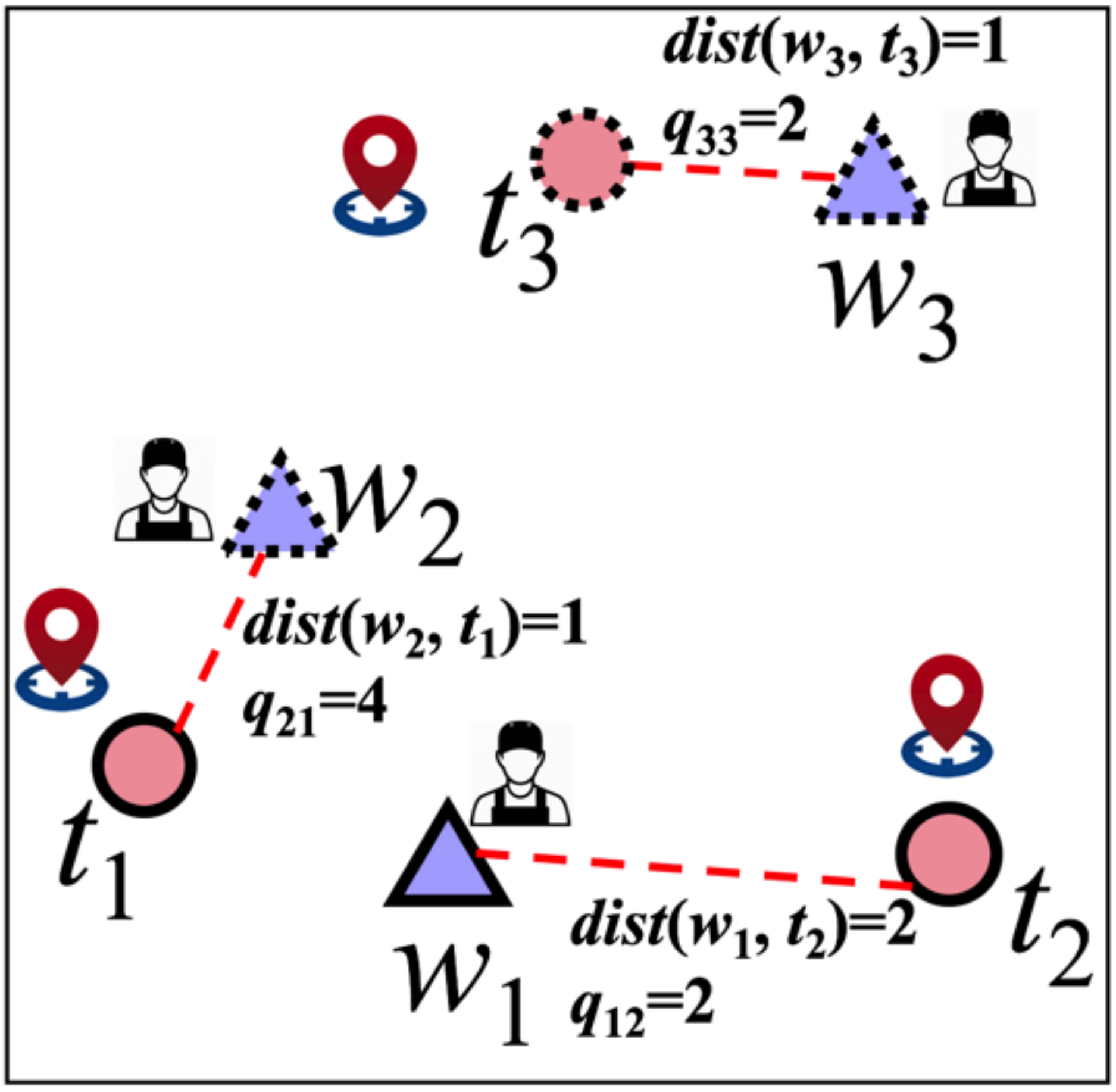}}
\vspace{-2ex}
    \caption{\small Globally Optimal Assignments for the Maximum Quality Task Assignment (MQA) Problem.}
    \label{fig:tasks_m}\vspace{-4ex}
\end{figure}

Different from prior studies in spatial crowdsourcing, the MQA problem requires designing an
accurate prediction approach for estimating future location
distributions of tasks and workers (and the quality distributions of
worker-and-task assignment pairs as well), and considering the
assignments over the estimated location/quality variables of future
tasks/workers,
which is quite challenging. Furthermore, in this paper, we prove
that the MQA problem is NP-hard for any given time instance, by reducing it from the 0-1
Knapsack problem \cite{vazirani2013approximation}. As a result, the
MQA problem is not tractable. Therefore, in order to efficiently
tackle the MQA problem, we will propose effective heuristics, including \textit{MQA greedy} and
\textit{MQA divide-and-conquer} approaches, over both current and
predicted workers/tasks, which can efficiently compute a better global assignment with high quality scores under the budget
constraints.

Specifically, we make the following contributions.

\begin{itemize}

\item We formally define the \textit{maximum quality task assignment} (MQA) problem in Section \ref{sec:problem_def}. We prove that the
MQA problem is NP-hard in Section
\ref{sec:reduction}.

\item We propose an effective grid-based prediction approach to estimate location/quality distributions/statistics for future tasks /workers in Section
\ref{sec:prediction}.

\item We design an efficient \textit{MQA greedy } algorithm to iteratively select
one best assignment pair each time over current/future tasks/workers
in Section \ref{sec:greedy}.

\item We illustrate a novel \textit{MQA divide-and-conquer } algorithm
to recursively divide the problem into subproblems and merge
assignment results in subproblems in Section
\ref{sec:D&C}.

\item We verify the effectiveness
and efficiency of our proposed MQA approaches with extensive
experiments on real and synthetic data sets in Section
\ref{sec:exp}.\vspace{-1ex}
\end{itemize}

In addition to the contributions listed above, in this paper, we
review previous works in spatial crowdsourcing in Section
\ref{sec:related}, and conclude in Section \ref{sec:conclusion}.

\vspace{-2ex}
\section{Problem Definition}
\label{sec:problem_def}

%
%In this section, we present the formal definition of the
%prediction-based spatial crowdsourcing (PB-SC) problem.

\subsection{Dynamically Moving Workers}

\begin{definition} $($Dynamically Moving Workers$)$ Let
    $W_p$ $=\{w_1,$ $w_2,$ $...,$ $w_n\}$ be a set of $n$ moving
    workers at timestamp $p$. Each worker $w_i$ ($1\leq i\leq n$)  is located at position
    $l_i(p)$ at timestamp $p$, and freely moves with velocity $v_i$.
     \label{definition:worker}
\end{definition}

In Definition \ref{definition:worker}, worker $w_i$ can dynamically
move with speed $v_i$ in any direction. At each timestamp $p$,
worker $w_i$ are located at positions $l_i(p)$. Workers can freely
join or leave the spatial crowdsourcing system.

\subsection{Time-Constrained Spatial Tasks}

%We define spatial tasks with a deadline by which time the worker should be at the task location to be able to complete the task. 

\begin{definition}
$($Time-Constrained Spatial Tasks$)$ Let $T_p=\{t_1,$ $t_2,$ $...,$
$t_m\}$ be a set of time-constrained spatial tasks at timestamp $p$.
Each spatial task $t_j$ ($1\leq j\leq m$) is located at a specific
location $l_j$, and workers are expected to reach the location of
task $t_j$ before the deadline $e_j$.
\label{definition:task}
\end{definition}

In Definition \ref{definition:task}, a task requester creates a
time-constrained spatial task $t_j$ (e.g., taking a photo), which
requires workers to be physically at the specific location $l_j$ before the deadline $e_j$. In this
paper, we assume that each spatial task can be performed by a single
worker.

\subsection{The Maximum Quality Task Assignment Problem}

In this section, we will formalize the problem of
\textit{maximum quality task assignment} (MQA), which
assigns time-constrained spatial tasks to spatially scattered
workers with the objective of maximizing the overall quality score of assignments under the traveling budget constraint.

\vspace{0.5ex}\noindent {\bf Task Assignment Instance Set.} Before
we present  the MQA problem, we first introduce the notion of the
\textit{task assignment instance set}.

\begin{definition}
$($Task Assignment Instance Set, $I_p$$)$ At timestamp $p$, given a
worker set $W_p$ and a task set $T_p$, a \textit{task assignment
instance set}, $I_p$, is a set of valid worker-and-task
assignment pairs in the form $\langle w_i, t_j\rangle$, where each
worker $w_i \in W_p$ is assigned to at most one task $t_j\in T_p$,
and each task $t_j \in T_p$ is accomplished by at most one worker
$w_i \in W_p$.
\label{definition:instance}
\end{definition}\vspace{-2ex}

Intuitively, $I_p$ in Definition \ref{definition:instance} is one
possible (valid) worker-and-task assignment between worker set $W_p$
and task set $T_p$. A valid assignment pair $\langle w_i,
t_j\rangle$ is in $I_p$, if and only if this pair satisfies the
condition that worker $w_i$ can reach the location $l_j$ of task
$t_j$ before the deadline $e_j$.

{\bf \it \underline{The Reward of the Worker-and-Task Assignment
Pair}.} As discussed earlier in Section \ref{sec:intro}, we assume
that each worker-and-task assignment pair $\langle w_i, t_j\rangle$
is associated with a traveling cost, $c_{ij}$, which corresponds to
the reward (e.g., monetary or points) to cover the
transportation fee from $l_i(p)$ to $l_j$ (e.g., cost of gas or
public transportation). Without loss of generality we assume that the reward, $c_{ij}$, is computed by:
$c_{ij} = C\cdot dist(l_i(p), l_j)$, where $C$ is the unit cost per
mile, and $dist(l_i(p),$ $l_j)$ is the distance between locations of
worker $w_i$ and task $t_j$. For simplicity, we will use the
Euclidean distance function $dist(l_i(p),$ $l_j)$ in this paper.

{\bf \it \underline{The Quality Score of the Worker-and-Task
Assignment Pair}.} Each worker-and-task assignment pair $\langle
w_i, t_j\rangle$ is also associated with a \textit{quality score},
denoted as $q_{ij}$ ($\in [0, 1]$), which indicates the quality of
the task $t_j$ that is completed by worker $w_i$. 
Due to different types of tasks with various difficulty
levels (e.g., taking photos vs. repairing houses) and different expertise or competence of workers in performing
tasks, we assume that
different worker-and-task pairs $\langle w_i, t_j\rangle$ may be
associated with diverse quality scores $q_{ij}$.

\vspace{0.5ex}\noindent {\bf The MQA Problem.} Next, we provide the
definition of our \textit{maximum quality task assignment}
(MQA) problem.

\begin{definition}
$($Maximum Quality Task Assignment (MQA) $)$ Given a set of time
instances $P$ and a maximum reward budget $B$ for each time instance, the
problem of \textit{maximum quality task assignment} (MQA)
is to assign the available workers in $W_p$ to tasks in
$T_p$ to provide a task assignment instance
set, $I_p$, at timestamp $p \in P$, such
that:

    \begin{enumerate}
        \item at any timestamp $p\in P$, each worker $w_i\in W_p$ is assigned to at most one spatial
        task $t_j\in T_p$ such that his/her arrival time at location
        $l_j$ is before deadline $e_j$;
        \item at timestamp $p\in P$, the total reward (i.e., the traveling cost) of all the assigned workers does
        not exceed budget $B$, that is, $\sum_{\forall \langle w_i, t_j\rangle \in
                I_p}c_{ij}$ $\leq B$; and
        \item the overall quality score of the assigned tasks of all timestamps in $P$ is maximized, that is,\vspace{-1ex}
        {\scriptsize
        \begin{eqnarray}
        \text{maximize } \sum_{\forall p \in P}\sum_{\forall \langle w_i, t_j\rangle \in I_p} q_{ij}.\label{eq:tbc_sc}\vspace{-4ex}
        \end{eqnarray}
    }
    \end{enumerate}
    \label{definition:PA_SC}\vspace{-1ex}
\end{definition}

Intuitively, the MQA problem (given in Definition
\ref{definition:PA_SC}) assigns workers to tasks at multiple time instances in  $P$, such that (1) at each time instance, an assignment
pair $\langle w_i, t_j\rangle$ is valid (i.e., satisfying the time
constraints, $e_j$); (2) at each time instance, the total reward of
assignment pairs is less than or equal to the maximum budget $B$;
and (3) the overall quality score of assignments of all the time instances is
maximized.

As discussed earlier in Section \ref{sec:intro}, it may not be globally
optimal to simply optimize the
assignments at each individual
time instance separately. The MQA approaches aim to provide a better global assignment by taking into account tasks/workers
not only at the current timestamp $p$, but also at the future
timestamp $(p+i) \in P$.

Table \ref{table0} summarizes the commonly used symbols.

\begin{table}
    \begin{center}\vspace{-5ex}
        \caption{\small Symbols and Descriptions.} \vspace{-2ex}\label{table0}
        {\small\scriptsize
            \begin{tabular}{l|l}
                {\bf Symbol} & {\bf \qquad \qquad \qquad\qquad\qquad Description} \\ \hline \hline
                $T_p$   & a set of $m$ time-constrained spatial tasks $t_j$ at timestamp $p$\\
                $W_p$   & a set of $n$ dynamically moving workers $w_i$ at timestamp $p$\\
                $\hat{w_i}$ (or $\hat{t_j}$) & the predicted worker (or task) \\
                $\tilde{w_i}$ (or $\tilde{t_j}$) & the worker (or task) in either current or next time instance\\
                $I_p$   & the task assignment instance set at timestamp $p$ \\
                $e_j$   & the deadline of arriving at the location of task $t_j$\\
                $l_i(p)$  & the position of worker $w_i$ at timestamp $p$\\
                $l_j$   & the position of task $t_j$\\
                $v_i$   & the velocity of worker $w_i$ \\
                $c_{ij}$ & the traveling cost from the location of worker $w_i$ to that of task $t_j$\\
                $q_{ij}$ & the quality score of assigning worker $w_i$ to perform task $t_j$\\
                $B$   & the maximum reward budget (i.e., traveling cost) at each time instance \\
                $C$ & the unit price of the traveling cost by workers\\
                $P$ & a set of time instances\\
                $w$ & the size of the sliding window to do the
                prediction\\
                \hline
                \hline
            \end{tabular}
        }
    \end{center}\vspace{-7ex}
\end{table}

\subsection{Hardness of the MQA Problem}
\label{sec:reduction}

In order to give a sense on the size of the solution space, with $n$ dynamically moving workers and $m$ time-constrained spatial
tasks, in the worst case, there are an exponential number of
possible worker-and-task assignment strategies, which leads to high
time complexity (i.e., $O((m + 1)^n)$). Then, we prove
that the MQA problem is NP-hard, by reducing it from a well-known
NP-hard problem, \textit{0-1 Knapsack problem}
\cite{vazirani2013approximation}.

\begin{lemma} (Hardness of the MQA Problem)
The maximum quality task assignment (MQA) problem is
NP-hard. \label{lemma:np}
\end{lemma}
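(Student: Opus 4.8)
The plan is to reduce the well-known \emph{0-1 Knapsack problem} to the MQA problem restricted to a single time instance; since even this special case turns out to be NP-hard, MQA itself is NP-hard. Recall that a 0-1 Knapsack instance consists of $n$ items, where item $k$ carries a value $v_k \geq 0$ and a weight $s_k \geq 0$, together with a capacity $W$, and the goal is to choose a subset $S \subseteq \{1, \dots, n\}$ maximizing $\sum_{k \in S} v_k$ subject to $\sum_{k \in S} s_k \leq W$.

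Given such an instance, I would construct an MQA instance with a single timestamp $p$ as follows. Introduce $n$ workers $w_1, \dots, w_n$ and $n$ tasks $t_1, \dots, t_n$, and arrange their locations so that each worker $w_k$ lies close to its ``own'' task $t_k$ but far from every other task. Concretely, I would place each pair $\langle w_k, t_k\rangle$ in a well-separated region of the plane and set $dist(l_k(p), l_k) = s_k / C$, so that the reward of the diagonal pair is $c_{kk} = s_k$. The deadlines $e_j$ and velocities are chosen so that worker $w_k$ can reach only task $t_k$ before its deadline; every off-diagonal pair $\langle w_k, t_j\rangle$ with $j \neq k$ is thereby rendered invalid and cannot appear in any instance set $I_p$. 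I would set the quality score of each diagonal pair to a normalized version of the item value, $q_{kk} = v_k / v_{\max}$ with $v_{\max} = \max_k v_k$, and finally set the budget $B = W$.

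With this construction, the only feasible assignment pairs are the diagonal ones $\langle w_k, t_k\rangle$, so every valid instance set $I_p$ is in one-to-one correspondence with a subset $S \subseteq \{1, \dots, n\}$ of selected items. The matching constraints of Definition \ref{definition:instance} are then automatically satisfied, since each worker and each task appears in exactly one feasible pair. Under this correspondence, the budget constraint $\sum_{\langle w_k, t_k\rangle \in I_p} c_{kk} \leq B$ becomes exactly the capacity constraint $\sum_{k \in S} s_k \leq W$, while the MQA objective $\sum_{\langle w_k, t_k\rangle \in I_p} q_{kk}$ equals $\frac{1}{v_{\max}} \sum_{k \in S} v_k$, a positive rescaling of the Knapsack objective. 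Hence an optimal MQA assignment yields an optimal Knapsack selection and vice versa, and the whole construction is computable in polynomial time.

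The step I expect to require the most care is the geometric realization that forces only the diagonal pairs to be feasible while simultaneously encoding the weights as Euclidean distances. I must verify that locations, velocities, and deadlines can be chosen consistently --- for instance, by placing the $n$ pairs in widely separated clusters and picking deadlines tight enough to forbid cross-cluster travel --- and also that normalizing the values into $[0,1]$ does not change which subset is optimal (it does not, since scaling by the positive constant $1/v_{\max}$ preserves the argmax). Everything else follows routinely from the exact correspondence between feasible instance sets and item subsets.
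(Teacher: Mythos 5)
Your proposal is correct and follows essentially the same route as the paper: a reduction from 0-1 Knapsack in which each item becomes a dedicated worker-task pair with travel cost equal to its weight and quality score encoding its value, off-diagonal pairs are made unusable, and the budget plays the role of the capacity. Your version is in fact slightly more careful than the paper's, since you enforce infeasibility of cross pairs via deadlines and geometric separation (rather than merely asserting $c_{ij}\gg c_{ii}$) and you normalize the values into $[0,1]$ to respect the stated range of $q_{ij}$.
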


{\small
\begin{proof}
	Please refer to Appendix A.
\end{proof}
}

From Lemma \ref{lemma:np}, we can see that the MQA problem is
NP-hard, and thus intractable. Alternatively, we will later design
efficient heuristics to tackle the MQA problem and
achieve better global assignment strategies.

\vspace{-1ex}
\subsection{Framework}
\label{subsec:framework}

Figure \ref{alg:framework} illustrates a general framework, namely
procedure {\sf MQA\_Framework}, for solving the MQA problem,
which assigns workers with spatial tasks at multiple time instances in $P$, based on the predicted location/quality
distributions of workers/tasks.

Specifically, at timestamp $p$, we first retrieve a
set, $T_p$, of available spatial tasks, and a set, $W_p$, of
available workers (lines 1-3). Here, the task set $T_p$ (or worker
set $W_p$) contains both tasks (or workers) that have not been
assigned at the last time instance and the ones that newly arrive at the
system after the last time instance (note: those workers who finished tasks
in previous time instances are also treated as ``new workers'' that join the
system, thus the workers can continuously contribute to the platform). In order to achieve better global assignments, we need to predict workers and tasks at a future timestamp $(p+1)$
that newly join the system, and obtain two sets $W_{p+1}$ and
$T_{p+1}$, respectively (line 4).

Subsequently, with both current and future sets of tasks/workers (i.e.,
$T_p$/$W_p$ and $T_{p+1}$/$W_{p+1}$, respectively), we can apply our
proposed algorithms in this paper (including \textit{MQA greedy} and
\textit{MQA divide-and-conquer}), and retrieve better
assignment pairs in assignment instance set $I_p$ (line 5). Finally, we notify each worker $w_i$
to do his/her task $t_j$ (lines 6-7).

\begin{figure}[ht]
    \begin{center}\vspace{-3ex}
        \begin{tabular}{l}
            \parbox{3.1in}{
                \begin{scriptsize}
                    \begin{tabbing}
                        12\=12\=12\=12\=12\=12\=12\=12\=12\=12\=12\=\kill
                        {\bf Procedure {\sf MQA\_Framework}} \{ \\
                        \> {\bf Input:} a set of time instances $P$\\
                        \> {\bf Output:} a worker-and-task assignment strategy across the time instances in $P$\\
                        \> (1) \> \> \textbf{for} time instance $p \in P$\\
                        \> (2) \> \> \> retrieve all the available spatial tasks in set $T_p$\\
                        \> (3) \> \> \> retrieve all the available workers in set $W_p$\\
                        \> (4) \> \> \> predict new future tasks/workers in $T_{p+1}$ and $W_{p+1}$ at the next time instance\\
                        \> (5) \> \> \> apply the \textit{MQA greedy} or \textit{MQA divide-and-conquer} approach to obtain a  \\
                        \> \> \> \> \> assignment instance set, $I_p$, w.r.t. $T_p$, $W_p$, $T_{p+1}$ and $W_{p+1}$\\
                        \> (6) \> \> \> \textbf{for} each pair $\langle w_i, t_j \rangle$ in $I_p$\\
                        \> (7) \> \> \> \> inform worker $w_i$ to perform task $t_j$\}\\
                        
                    \end{tabbing}
                \end{scriptsize}
            }
        \end{tabular}
    \end{center}\vspace{-5ex}
    \caption{\small A Framework for Tackling the MQA Problem.}
    \label{alg:framework}\vspace{-2ex}
\end{figure}

\section{The Grid-based Worker/Task \\Prediction Approach}
\label{sec:prediction}

In order to achieve better global assignments in our MQA problem, we
need to accurately predict the future status of workers/tasks that
newly join the spatial crowdsourcing system. Specifically, in this
section, we will introduce a grid-based worker/task prediction
approach, which can efficiently and effectively estimate the number
of future workers/tasks, location distributions of future
workers/tasks, and quality score distributions w.r.t. future
worker-and-task pairs (and their existence probabilities as well).

\begin{table}[t!]
	\centering \vspace{-5ex}
	{\scriptsize
		\caption{\small Example of Grid-Based Prediction}\label{tab:sample_prediction}\vspace{-2ex}
		\begin{tabular}{c|c|c}
			{\bf Cell} & {\bf  Previous Worker Numbers} & {\bf Predicted Worker Number}\\
			\hline \hline
			$C_1$ & [4, 3, 4] & 4\\
			$C_2$ & [2, 3, 3] & 3\\
			$C_3$ & [0, 1, 0] & 0\\
			$C_4$ & [1, 1, 1] & 1\\
			\hline
		\end{tabular}
	}
	\vspace{-6ex}
\end{table}

\begin{figure}[t!]\vspace{-6ex}
	\centering
	\scalebox{0.18}[0.18]{\includegraphics{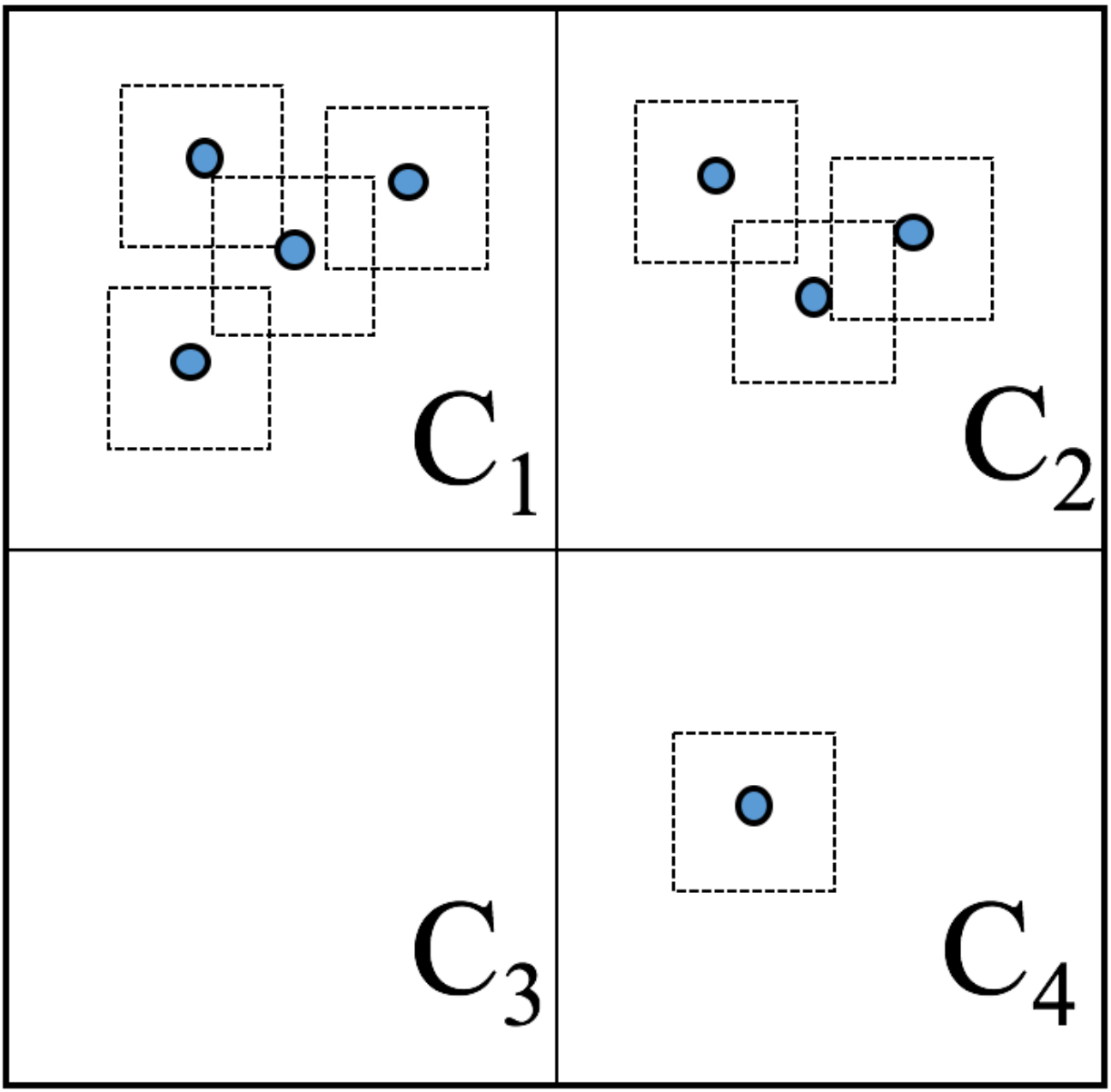}}
	\vspace{-2ex}
	\caption{\small Example of Predicted Workers.}
	\label{fig:prediction_worker}\vspace{-5ex}
\end{figure}

\subsection{The Grid-Based Prediction Algorithm}
\label{subsec:prediction}

In this section, we discuss how to predict the number of tasks
/workers, and their location distributions in a 2-dimensional data
space $\mathcal{U}=[0,1]^2$. In particular, we consider a grid
index, $\mathcal{I}$, over tasks and workers, which divides the data
space $\mathcal{U}$ into $\gamma^2$ cells, each with the side length
$1/\gamma$, where the selection of the best $\gamma$ value can be
guided by a cost model in \cite{cheng2014reliable}. Next, we
estimate the potential workers/tasks that may fall into each cell at a
future timestamp, which is inferred from historical data of
the most recent sliding window of size $w$.

In particular, our grid-based prediction algorithm first predicts
the future numbers of workers/tasks from the latest $w$ worker/task
sets in each cell, then generates possible worker (or task) samples
in $W_{p+1}$ (or $T_{p+1}$) for each cell of the grid index
$\mathcal{I}$.

First, we initialize a worker set $W_{p+1}$ and a task set $T_{p+1}$
at the future time instance with empty sets. Subsequently, within each cell
$cell_i$, we can obtain its $w$ latest worker counts,
$|W_{p-w+1}^{(i)}|$, $|W_{p-w+2}^{(i)}|$, ..., and $|W_{p}^{(i)}|$,
which form a \textit{sliding window} of a time series (with size
$w$). Due to the temporal correlation of worker counts in the
sliding window, in this paper, we utilize the \textit{linear
regression} \cite{lawson1974solving} over these $w$ worker counts to
predict the future number, $|W_{p+1}^{(i)}|$, of workers in cell,
$cell_i$, that newly join the system at timestamp $(p+1)$. Note
that other prediction methods can be also plugged into our
grid-based prediction framework, which we plan to study in
our future work. Similarly, we can estimate the number,
$|T_{p+1}^{(i)}|$, of tasks in $cell_i$ at
timestamp $(p+1)$.

According to the predicted numbers of workers/tasks, we can
uniformly generate $|W_{p+1}^{(i)}|$ worker samples (or
$|T_{p+1}^{(i)}|$ task samples) within each cell $cell_i$, and add
them to the predicted worker set $W_{p+1}$ (or task set $T_{p+1}$).
We use sampling with replacement to generate worker/task samples, which means two samples can be generated at the same location. For the
pseudo code of the MQA prediction algorithm, please refer to
Appendix B.

\vspace{1ex}\noindent \textbf{Example 3 (The Grid-Based Prediction)} 
{\it 
	Consider that a space is divided into 4 cells, $C_1$ to $C_4$, as shown in Figure \ref{fig:prediction_worker}. Based on the historical records on the number of workers in each cell $C_i$, we want to predict the workers that will appear in $C_i$ at the next time instance. Table \ref{tab:sample_prediction} presents the number of workers for each cell in current time instance $p$ and previous two time instances $p-1$ and $p-2$. For example, there 4 workers at time $p-2$, 3 workers at time $p-1$ and 4 workers at time $p$ in cell $C_1$. Then, we predict the number of workers in cell $C_1$ at time $p+1$ is 4. We uniformly generate 4 worker samples. After generating predicted workers for each cell, we can capture the distribution of workers as shown in Figure \ref{fig:prediction_worker}.
}

\vspace{0.5ex} \noindent \textbf{Location Distributions of the
Predicted Workers/Tasks.} As each cell is small, from the global view, the distributions of tasks/workers in the entire space are approximately captured. However, in each cell, discrete samples may be of
small sample sizes, which may lead to low prediction accuracy. For
example, if the sample size is only 1, then different possible
locations of this one single sample (generated in the cell) may
dramatically affect our MQA assignment results.

Inspired by this, instead of using discrete samples predicted, in
this paper, we will alternatively consider continuous
\textit{probability density function} (pdf) for location
distributions of workers/tasks. Specifically, we apply the
\textit{kernel density estimation} over samples in each cell to
describe the distributions of samples' locations. That is, centered
at each sample $s_i$ generated in $cell_i$, we can
obtain the continuous pdf function of this worker/task sample, $f(x)
= \prod_{r=1}^2 \left(\frac{1}{h_r} K\left(
\frac{x[r]-s[r]}{h_r}\right)\right)$, where $h_r$ ($\in (0, 1)$) is
the bandwidth on the $r$-th dimension, and function $K(\cdot)$ is a
\textit{uniform kernel function} \cite{hansen2009lecture}, given by
$K(u)=\frac{1}{2}\cdot \mathbf{1}{(|u|\leq 1)}$. Here,
$\mathbf{1}{(|u|\leq 1)} = 1$, when $|u|\leq 1$ holds. Note that
the choice of the kernel function is not significant for
approximation results \cite{hansen2009lecture}, thus, in this paper,
we use uniform kernel function $K(\cdot)$.
From the pdf function $f(x)$ of each sample $s_i\in cell_i$,
each dimension $r$ can be bounded by an interval $[s_i[r]- h_r,
s_i[r]+ h_r]$.

Typically, in the literature \cite{hansen2009lecture}, we set $h_r = \hat{\sigma}
C_v(k)n^{-1/(2v+1)}$, where $\hat{\sigma}$ is the standard deviation of
samples (derived from current worker/task statistics), $v$ is the
order of the kernel ($v$ is set to 2 here), and $C_v(k) = 1.8431$ ($=
2\left(\frac{\pi^{1/2}(v!)^3R(k)}{2v(2v)!k_v^2(k)}\right)^{1/(2v+1)}$,
for $k_v(k) = 1/3, R(k) = 1/2$ with Uniform kernel functions).

\subsection{Statistics of the Predicted Workers/Tasks}
\label{subsec:statistics}

For the ease of the presentation, in this paper, we denote those
future workers $w_i$ and tasks $t_j$ that are predicted as
$\widehat{w_i}$ and $\widehat{t_j}$, respectively.

Due to the predicted future workers/tasks, in our PB-SC problem, we
need to consider worker-and-task assignment pairs that may involve
predicted workers/tasks. That is, we have 3 cases, $\langle
\widehat{w_i}, t_j\rangle$, $\langle w_i, \widehat{t_j}\rangle$, and
$\langle \widehat{w_i}, \widehat{t_j}\rangle$, where $\widehat{w_i}$
and $\widehat{t_j}$ are the predicted worker and task samples,
respectively, following uniform distributions represented by kernel
functions $K(\cdot)$ (as mentioned in Section
\ref{subsec:prediction}).

Due to the existence of these predicted workers/tasks, the traveling
costs and the quality scores of assignment pairs now become random
variables (rather than fixed values). In this section, we will
discuss how to obtain statistics (e.g., mean and variance) of
traveling costs, quality scores, and confidences, associated with
assignment pairs.

\vspace{0.5ex} \noindent \textbf{The Traveling Cost of Pairs with
the Predicted Workers/Tasks.} The traveling cost,
$\widehat{c_{ij}}$, of worker-and-task pairs involving the predicted
workers/tasks (i.e., $\langle \widehat{w_i}, t_j\rangle$, $\langle
w_i, \widehat{t_j}\rangle$, or $\langle \widehat{w_i},
\widehat{t_j}\rangle$) can be given by $C\cdot dist(\widehat{w_i},
t_j)$, $C\cdot dist(w_i, \widehat{t_j})$, or $C\cdot
dist(\widehat{w_i}, \widehat{t_j})$, respectively.

We discuss the general case of computing statistics of variable
$\widehat{c_{ij}}=C\cdot dist(\widehat{w_i}, \widehat{t_j})$. Since
it is nontrivial to calculate the statistics of the Euclidean
distance between variables $\widehat{w_i}$ and $\widehat{t_j}$, we
alternatively consider statistics (mean and variance) of the squared
Euclidean distance variable $Z^2=dist^2(\widehat{w_i},
\widehat{t_j})$ ($ = \sum_{r=1}^2 (\widehat{w_i}[r] -
\widehat{t_j}[r])^2$), where $\widehat{w_i}$ and $\widehat{t_j}$ are
two variables uniformly residing in a 2D space.

{\it \underline{The Computation of Mean $E(Z^2)$.}} Let variable
$Z_r = \widehat{w_i}[r] - \widehat{t_j}[r]$, for $r=1, 2$, whose
mean $E(Z_r)$ and variance $Var(Z_r)$ can be easily computed (i.e.,
$E(Z_r)=s_i[r] - s_j[r]$ and $Var(Z_r)
=\frac{(h_r(\widehat{w_i}[r]))^2+(h_r(\widehat{t_j}[r]))^2}{3}$
respectively).

Then, we have $Z^2 = Z_1^2 + Z_2^2$. Thus, the mean $E(Z^2)$ can be
given by:\vspace{-5ex}

{\scriptsize
\begin{eqnarray}
E(Z^2) = E(Z_1^2) + E(Z_2^2).\label{eq:mean_Z_square}
\end{eqnarray}\vspace{-5ex}
}

{\it \underline{The Computation of Variance $Var(Z^2)$.}} Moreover,
for variance $Var(Z^2)$, it holds that:\vspace{-2ex}

{\scriptsize
\begin{eqnarray}
Var(Z^2) &\hspace{-2ex}=& \hspace{-2ex} E(Z^4) - (E(Z^2))^2 \label{eq:variance_Z_square}\\
&\hspace{-2ex}=& \hspace{-2ex}E((Z_1^2 + Z_2^2)^2) - (E(Z^2))^2\notag\\
&\hspace{-2ex}=& \hspace{-2ex}E(Z_1^4)+2\cdot E(Z_1^2)\cdot E(Z_2^2)
+ E(Z_2^4) - (E(Z^2))^2.\notag
\end{eqnarray}\vspace{-4ex}
}

From Eqs.~(\ref{eq:mean_Z_square}) and (\ref{eq:variance_Z_square})
above, the remaining issues are to compute $E(Z_r^2)$ and $E(Z_r^4)$
(for $r=1, 2$).

{\it \underline{The Computation of $E(Z_r^2)$.}} For $E(Z_r^2)$,
since $Z_r = \widehat{w_i}[r] - \widehat{t_j}[r]$, we
have:\vspace{-2ex}

{\scriptsize
\begin{eqnarray}
E(Z_r^2) &=& Var(Z_r) + (E(Z_r))^2 \label{eq:mean_Zr_square}\\
&=& Var(\widehat{w_i}[r]) + Var(\widehat{t_j}[r]) +
(E(\widehat{w_i}[r])- E(\widehat{t_j}[r]))^2.\notag
\end{eqnarray}\vspace{-4ex}
}

{\it \underline{The Computation of $E(Z_r^4)$.}} For $E(Z_r^4)$, we
can derive that:\vspace{-2ex}

{\scriptsize
\begin{eqnarray}
E(Z_r^4) &=& E((\widehat{w_i}[r] - \widehat{t_j}[r])^4)\notag\\
&=& E(\widehat{w_i}[r]^4)-4\cdot E(\widehat{w_i}[r]^3)\cdot
E(\widehat{t_j}[r])\notag\\
&& + 6\cdot E(\widehat{w_i}[r]^2)\cdot E(\widehat{t_j}[r]^2) -
4\cdot E(\widehat{w_i}[r])\cdot E(\widehat{t_j}[r]^3)\notag\\
&& + E(\widehat{t_j}[r]^4).\label{eq:mean_Zr_4}
\end{eqnarray}\vspace{-4ex}
}

In Eq.~(\ref{eq:mean_Zr_4}), variable $\widehat{w_i}[r]$ follows the
uniform distribution within bound $[lb\_w, ub\_w]$ (for the $r$-th
dimension of uniform kernel function $K(\cdot)$ in Section
\ref{subsec:prediction}). We can thus infer that:\vspace{-2ex}

{\scriptsize
\begin{eqnarray}
E(\widehat{w_i}[r]^4) &=& \int_{lb\_w}^{ub\_w} x^4
\frac{1}{ub\_w-lb\_w} dx = \frac{ub\_w^5
-lb\_w^5}{5(ub\_w-lb\_w)},\notag\\
E(\widehat{w_i}[r]^3) &=& \int_{lb\_w}^{ub\_w} x^3
\frac{1}{ub\_w-lb\_w} dx = \frac{ub\_w^4
-lb\_w^4}{4(ub\_w-lb\_w)},\notag\\
E(\widehat{w_i}[r]^2) &=& \int_{lb\_w}^{ub\_w} x^2
\frac{1}{ub\_w-lb\_w} dx = \frac{ub\_w^3
-lb\_w^3}{3(ub\_w-lb\_w)},\notag
\end{eqnarray}\vspace{-2ex}
}

\noindent where $[lb\_w, ub\_w] = [s_i[r]-h_r(\widehat{w_i}[r]),
s_i[r]+h_r(\widehat{w_i}[r])]$.

Similarly, we can also obtain $E(\widehat{t_j}[r]^4)$,
$E(\widehat{t_j}[r]^3)$, and $E(\widehat{t_j}[r]^2)$ for task
$\widehat{t_j}[r]$. We omit it here.

This way, by substituting Eqs.~(\ref{eq:mean_Zr_square}) and
(\ref{eq:mean_Zr_4}) into Eqs.~(\ref{eq:mean_Z_square}) and
(\ref{eq:variance_Z_square}), we can obtain mean $E(Z^2)$ and
variance $Var(Z^2)$ of the squared Euclidean distance between two
Uniform distributions.

\vspace{0.5ex} \noindent \textbf{Quality Scores of Pairs with the
Predicted Workers/Tasks.} We consider the three cases to compute
statistics of quality score distributions.

{\it \underline{Case 1: $\langle \widehat{w_i}, t_j\rangle$.}} In
this case, at the current timestamp $p$, we can obtain all
the $n_i$ workers $w_i$ that can reach task $t_j$. Then, we use
quality scores, $q_{ij}$, of their corresponding worker-and-task
pairs $\langle w_i, t_j\rangle$ as samples (each with probability
$1/n_i$), which can describe/estimate future distributions of
quality scores. Correspondingly, with these samples, we can obtain
mean and variance of quality scores between the predicted worker
$\widehat{w_i}$ and the current task $t_j$.

{\it \underline{Case 2: $\langle w_i, \widehat{t_j}\rangle$.}}
Similar to Case 1, we can obtain $m_j$ spatial tasks $t_j$ that can
be reached by worker $w_i$. Then, we obtain $m_j$ quality score
samples from valid pairs  $\langle w_i, t_j\rangle$ (each sample
with probability $1/m_j$), whose mean and variance can be used to
capture the quality score distribution between the current worker
$w_i$ and a predicted task $\widehat{t_j}$.

{\it \underline{Case 3: $\langle \widehat{w_i},
\widehat{t_j}\rangle$.}} Since both worker $\widehat{w_i}$ and task
$\widehat{t_j}$ have predicted distributions, we cannot directly
obtain quality score distributions. Thus, our basic idea is to infer
future quality scores by existing workers $w_i$ and tasks $t_j$ at
the current timestamp $p$. That is, at the current time instance, we
collect quality scores $q_{ij}$ of all pairs $\langle w_i,
t_j\rangle$ as samples, and use them to represent probabilistic
distributions of quality scores of both worker $\widehat{w_i}$ and
task $\widehat{t_j}$ at the future time instance.

\vspace{0.5ex} \noindent \textbf{Existence Probabilities of Pairs
with the Predicted Workers/Tasks.} Some assignment pairs
that involve the predicted worker/task may not be valid, due to the
time constraints of spatial tasks $t_j$ or $\widehat{t_j}$ (i.e.,
deadline $e_j$). Thus, we will associate each pair (with either
worker or task in future) with an existence probability,
$\widehat{p_{ij}}$.

For pair $\langle \widehat{w_i}, t_j\rangle$, we let
$\widehat{p_{ij}} = \min\{\frac{n_i}{|W_p|}, 1\}$, where $n_i$ is
the number of valid workers who can reach task $t_j$ at the current
timestamp $p$, and $|W_p|$ is the total number of (estimated)
workers at timestamp $p$.

Similarly, for pair $\langle w_i, \widehat{t_j}\rangle$, we can
obtain: $\widehat{p_{ij}} = \min\{\frac{m_j}{|T_p|}, 1\}$, where
$m_j$ is the number of valid tasks that worker $w_i$ can reach
before the deadlines, and $|T_p|$ is the total number of (estimated)
tasks at timestamp $p$.

For pair $\langle \widehat{w_i}, \widehat{t_j}\rangle$, let $u_{ij}$
be the total number of valid pairs between $W_p$ and $T_p$ at timestamp $p$. Then, we can estimate the existence
probability of pair $\langle \widehat{w_i}, \widehat{t_j}\rangle$
by: $\widehat{p_{ij}} = \frac{u_{ij}}{|W_p|\cdot |T_{p}|}$.

\section{The MQA Greedy Approach}
\label{sec:greedy}

In this section, we propose an efficient  MQA greedy algorithm (GREEDY) to solve
the MQA problem, which iteratively finds one ``best''
worker-and-task assignment pair, $\langle w_i, t_j\rangle$, each
time, with the highest increase of the quality score and under the
budget constraint of high confidences. Here, in order to achieve high total quality scores, GREEDY is applied over both
current and predicted future workers/tasks.

After all assignment pairs (involving current/future workers/tasks)
are selected, we will only insert into the set $I_p$ those pairs,
$\langle w_i, t_j\rangle$, with both workers and tasks at current
timestamp $p$ (i.e., $w_i\in W_p$ and $t_j\in T_p$).

\subsection{The Comparisons of the Quality Score Increases / Traveling Cost Increases}
\label{subsec:score_increase}

Since MQA greedy algorithm needs to select one worker-and-task
assignment pair, $\langle \tilde{w_i}, \tilde{t_j}\rangle$, at a
iteration with the highest increase of the total quality score, in this
section, we will first formalize the increase of the quality
score, $\Delta_q(\tilde{w_i}, \tilde{t_j})$, for a pair $\langle
\tilde{w_i}, \tilde{t_j}\rangle$, and then compare the increases of
overall quality scores between two pairs $\langle \tilde{w_i},
\tilde{t_j}\rangle$ and $\langle \tilde{w_a}, \tilde{t_b}\rangle$,
where $\tilde{w_i}$, $\tilde{t_j}$, $\tilde{w_a}$, and $\tilde{t_b}$
can be either current or predicted workers/tasks.

\vspace{0.5ex} \noindent {\bf The Calculation of the Quality Score
Increase, $\Delta_q(\tilde{w_i}, \tilde{t_j})$.} Based on
Eq.~(\ref{eq:tbc_sc}), the overall quality score is given by summing
up all quality scores of the selected assignment pairs. Thus, when
we choose a new assignment pair $\langle \tilde{w_i},
\tilde{t_j}\rangle$, the increase of the quality score,
$\Delta_q(\tilde{w_i}, \tilde{t_j})$, is exactly equal to the
quality score of this new pair $\langle \tilde{w_i},
\tilde{t_j}\rangle$, denoted as $\tilde{q_{ij}}$. That
is,\vspace{-3ex}

{\scriptsize
\begin{eqnarray}
\Delta_q(\tilde{w_i}, \tilde{t_j}) &=& \tilde{q_{ij}},
\label{eq:score_increase}
\end{eqnarray}\vspace{-6ex}
}

\noindent where $\tilde{q_{ij}}$ is a fixed value, if both
$\tilde{w_i}$ and $\tilde{t_j}$ are current worker and task,
respectively; otherwise, $\tilde{q_{ij}}$ is a random variable whose
distribution can be given by samples discussed in Section
\ref{subsec:statistics}.

\vspace{0.5ex} \noindent {\bf The Comparisons of the Quality Score
Increase Between Two Pairs $\langle \tilde{w_i}, \tilde{t_j}\rangle$
and $\langle \tilde{w_a}, \tilde{t_b}\rangle$.} Next, we discuss how
to decide which worker-and-task assignment pair is better, in terms
of the quality score increase, between two pairs $\langle
\tilde{w_i}, \tilde{t_j}\rangle$ and $\langle \tilde{w_a},
\tilde{t_b}\rangle$.

Specifically, if both pairs have workers/tasks at current timestamp
$p$, then the quality score increases, $\tilde{q_{ij}}$ and
$\tilde{q_{ab}}$ (given in Eq.~(\ref{eq:score_increase})), are fixed
values. In this case, the pair with higher quality score increase is
better.

On the other hand, in the case that either of the two pairs involves
the predicted workers/tasks, their corresponding quality score
increases, that is, $\tilde{q_{ij}}$ and/or $\tilde{q_{ab}}$, are
random variables. To compare the two quality score increases, we can
compute the probability, $Pr_{\Delta_q(\tilde{w_i}, \tilde{t_j})}$,
that pair $\langle \tilde{w_i}, \tilde{t_j}\rangle$ has the increase
greater than that of the other one. That is, by applying the
\textit{central limit theorem} (CLT)
\cite{grinstead2012introduction, jovanovic1997demo}, we
have:\vspace{-2ex}

{\scriptsize
\begin{eqnarray}
Pr_{\Delta_q(\tilde{w_i}, \tilde{t_j})} &=&
Pr\{\Delta_q(\tilde{w_i}, \tilde{t_j}) >
\Delta_q(\tilde{w_a}\tilde{t_b})\}\label{eq:quality_increase_comparison}\\
&=& Pr\{\tilde{q_{ij}}> \tilde{q_{ab}}\}\notag \\
&=& 1-Pr\{\tilde{q_{ij}}\leq \tilde{q_{ab}}\}\notag\\
&=& 1-Pr\left\{\frac{\tilde{q_{ij}} - \tilde{q_{ab}} -
(E(\tilde{q_{ij}}) - E(\tilde{q_{ab}}))}{Var(\tilde{q_{ij}}) +
Var(\tilde{q_{ab}})}\right.\notag\\
&& \left.\leq \frac{ - (E(\tilde{q_{ij}}) - E(\tilde{q_{ab}}))}{Var(\tilde{q_{ij}}) + Var(\tilde{q_{ab}})}\right\}\notag\\
&=& 1- \Phi\left(\frac{ - (E(\tilde{q_{ij}}) -
E(\tilde{q_{ab}}))}{Var(\tilde{q_{ij}}) +
Var(\tilde{q_{ab}})}\right),\notag
\end{eqnarray}\vspace{-2ex}
}

\noindent where $\Phi(\cdot)$ is the \textit{cumulative density
function} (cdf) of a standard normal distribution.

With Eq.~(\ref{eq:quality_increase_comparison}), we can compute
the probability, $Pr_{\Delta_q(\tilde{w_i}, \tilde{t_j})}$, that
pair $\langle \tilde{w_i}, \tilde{t_j}\rangle$ is better than (i.e.,
with higher score than) pair $\langle \tilde{w_a},
\tilde{t_b}\rangle$. If it holds that $Pr_{\Delta_q(\tilde{w_i},
\tilde{t_j})} > 0.5$, then we say that pair $\langle \tilde{w_i},
\tilde{t_j}\rangle$ is expected to have higher quality score
increase; otherwise, pair $\langle \tilde{w_a}, \tilde{t_b}\rangle$
has higher quality score increase.

\vspace{0.5ex} \noindent {\bf The Comparisons of the Traveling Cost
Increase Between Two Pairs $\langle \tilde{w_i}, \tilde{t_j}\rangle$
and $\langle \tilde{w_a}, \tilde{t_b}\rangle$.} Similar to the
quality score, we can also compute the probability,
$Pr_{\Delta_c(\tilde{w_i}, \tilde{t_j})}$, that the increase of the
traveling cost for pair $\langle \tilde{w_i}, \tilde{t_j}\rangle$ is
smaller than that of pair $\langle \tilde{w_a}, \tilde{t_b}\rangle$.
That is, we can obtain:\vspace{-3ex}

{\scriptsize
\begin{eqnarray}
Pr_{\Delta_c(\tilde{w_i}, \tilde{t_j})} &=&
Pr\{\Delta_c(\tilde{w_i}, \tilde{t_j}) \leq
\Delta_c(\tilde{w_a}, \tilde{t_b})\}\label{eq:cost_increase_comparison}\\
&=& Pr\{\tilde{c_{ij}}\leq \tilde{c_{ab}}\}\notag \\
&=& \Phi\left(\frac{ - (E(\tilde{c_{ij}}) -
E(\tilde{c_{ab}}))}{Var(\tilde{c_{ij}}) +
Var(\tilde{c_{ab}})}\right).\notag
\end{eqnarray}\vspace{-8ex}
}

\subsection{The Pruning Strategy}

As discussed in Section \ref{subsec:score_increase}, one
straightforward method for selecting a ``good'' assignment pair at a
iteration is as follows. We sequentially scan each valid worker-and-task
pair $\langle \tilde{w_i}, \tilde{t_j}\rangle$, and compare its
quality score increase, $\Delta_q(\tilde{w_i}, \tilde{t_j})$, with
that of the best-so-far pair $\langle \tilde{w_a},
\tilde{t_b}\rangle$, in terms of the probability
$Pr_{\Delta_q(\tilde{w_i}, \tilde{t_j})}$. If the pair $\langle
\tilde{w_i}, \tilde{t_j}\rangle$ expects to have higher quality
score (and moreover satisfy the budget constraint), then we consider
it as the new best-so-far pair.

The straightforward method mentioned above considers all possible
valid assignment pairs, and computes their comparison probabilities,
which requires high time complexity, that is, $O(m'\cdot n')$, where
$m'$ and $n'$ are the numbers of tasks and workers at both current
and future time instances, respectively. Therefore, in this section, we
will propose effective pruning methods to quickly discard those
false alarms of assignment pairs, with both high traveling costs and
low quality scores.

\vspace{0.5ex} \noindent {\bf Pruning with Bounds of Quality and
Traveling Cost.} Without loss of generality, for each pair $\langle
\tilde{w_i}, \tilde{t_j}\rangle$, assume that we can obtain its
lower and upper bounds of the traveling cost, $\tilde{c_{ij}}$, and
quality score, $\tilde{q_{ij}}$, where $\tilde{c_{ij}}$ and
$\tilde{q_{ij}}$ are either fixed values (if $\tilde{w_i}$ and
$\tilde{t_j}$ are worker/task at the current time instance) or random
variables (if worker and/or task are from the future time instance). That is, we
denote $\tilde{c_{ij}} \in [lb\_\tilde{c_{ij}}, ub\_\tilde{c_{ij}}]$
and $\tilde{q_{ij}} \in [lb\_\tilde{q_{ij}}, ub\_\tilde{q_{ij}}]$.

This way, in a 2D quality-and-travel-cost space, each
worker-and-task assignment pair, $\langle \tilde{w_i},
\tilde{t_j}\rangle$, corresponds to a rectangle,
$[lb\_\tilde{q_{ij}},$ $ub\_\tilde{q_{ij}}]$ $\times
[lb\_\tilde{c_{ij}}, ub\_\tilde{c_{ij}}]$. Then, based on the idea
of the \textit{skyline} query \cite{borzsony2001skyline}, we can safely
prune those pairs that are \textit{dominated} by candidate pairs, in
terms of the traveling cost and quality score.

\begin{lemma} (The Dominance Pruning) Given a candidate pair $\langle
\tilde{w_a}, \tilde{t_b}\rangle$, a valid worker-and-task pair
$\langle \tilde{w_i}, \tilde{t_j}\rangle$ can be safely pruned, if
and only if it holds that: (1) $ub\_\tilde{c_{ab}} <
lb\_\tilde{c_{ij}}$, and (2) $lb\_\tilde{q_{ab}} >
ub\_\tilde{q_{ij}}$.\label{lemma:dominate_pair}
\end{lemma}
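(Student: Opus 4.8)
The plan is to read ``safely pruned'' as the statement that discarding $\langle \tilde{w_i}, \tilde{t_j}\rangle$ from the candidate set never changes which pair GREEDY keeps, because the candidate $\langle \tilde{w_a}, \tilde{t_b}\rangle$ \emph{dominates} it: it always yields at least as large a quality-score increase and is never more costly. I would prove this by translating the two interval conditions into the probabilistic comparison tests of Eqs.~(\ref{eq:quality_increase_comparison}) and (\ref{eq:cost_increase_comparison}) and showing that those tests are driven to their extreme outcomes, so the comparison is resolved with certainty.

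For the sufficiency (``if'') direction, assume (1) and (2). Condition (2), $lb\_\tilde{q_{ab}} > ub\_\tilde{q_{ij}}$, makes the supports of $\tilde{q_{ij}}$ and $\tilde{q_{ab}}$ disjoint with the candidate's lying entirely above, so $Pr\{\tilde{q_{ij}} > \tilde{q_{ab}}\} = 0$; by Eq.~(\ref{eq:quality_increase_comparison}) this gives $Pr_{\Delta_q(\tilde{w_i}, \tilde{t_j})} = 0 < 0.5$, i.e. the candidate certainly has the larger quality increase. Symmetrically, condition (1), $ub\_\tilde{c_{ab}} < lb\_\tilde{c_{ij}}$, forces $Pr\{\tilde{c_{ij}} \le \tilde{c_{ab}}\} = 0$ in Eq.~(\ref{eq:cost_increase_comparison}), so the candidate is certainly the cheaper pair; hence whenever $\langle \tilde{w_i}, \tilde{t_j}\rangle$ fits the remaining budget, $\langle \tilde{w_a}, \tilde{t_b}\rangle$ also fits. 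Since the candidate wins both the objective comparison and the feasibility comparison, GREEDY never prefers $\langle \tilde{w_i}, \tilde{t_j}\rangle$, and removing it is harmless.

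For necessity (``only if'') I would argue the contrapositive: if either interval condition is violated, the pruning is unsafe. When (2) fails, $lb\_\tilde{q_{ab}} \le ub\_\tilde{q_{ij}}$, the two quality supports overlap, so there is a realizable outcome with $\tilde{q_{ij}} > \tilde{q_{ab}}$; then $\langle \tilde{w_i}, \tilde{t_j}\rangle$ can have the strictly larger increase and may be the correct selection, so it cannot be discarded. When (1) fails, $ub\_\tilde{c_{ab}} \ge lb\_\tilde{c_{ij}}$, there is an outcome with $\tilde{c_{ij}} < \tilde{c_{ab}}$, so $\langle \tilde{w_i}, \tilde{t_j}\rangle$ can be feasible under the budget while the candidate is not. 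In either case dominance fails and the pair is indispensable, hence not prunable.

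The step I expect to be the main obstacle is the necessity direction, where I must exhibit a concrete realization of the (possibly random) cost and quality variables that actually attains the overlapping endpoints. This relies on the supports of the predicted quantities being exactly the stated intervals $[lb\_\cdot, ub\_\cdot]$, which follows from the bounded uniform-kernel representation of Section \ref{subsec:prediction}, and on treating a current (fixed-value) worker or task as a degenerate point interval so that the random and deterministic cases are handled uniformly. A secondary subtlety is to confirm soundness across the later greedy iterations, where $\tilde{w_a}$ or $\tilde{t_b}$ may be consumed; I would note that dominance here is used only as an iteration-local, pairwise rule for maintaining the candidate skyline, so the pairwise argument above is precisely what the claim as stated requires.
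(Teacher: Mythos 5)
Your sufficiency argument is correct and is essentially the paper's proof: the paper simply chains the interval bounds, $\tilde{c_{ab}}\leq ub\_\tilde{c_{ab}} < lb\_\tilde{c_{ij}}\leq \tilde{c_{ij}}$ and $\tilde{q_{ab}}\geq lb\_\tilde{q_{ab}} > ub\_\tilde{q_{ij}}\geq \tilde{q_{ij}}$, to conclude that the pruned pair is worse in both dimensions with certainty; your restatement of the same fact as the comparison probabilities of Eqs.~(\ref{eq:quality_increase_comparison}) and (\ref{eq:cost_increase_comparison}) collapsing to $0$ is equivalent. Where you genuinely differ is that you also attack the ``only if'' direction, which the paper's proof silently omits even though the lemma is stated as a biconditional. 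Your contrapositive sketch --- if the supports overlap there is a realizable joint outcome in which the supposedly dominated pair attains the larger quality increase, or fits the budget when the candidate does not, so discarding it could change the selection --- is the natural way to fill that hole, and your two caveats (treating current workers/tasks as degenerate point intervals, and applying the rule only within one iteration's candidate set $S_p$) are exactly the right scoping. The one place to tighten is the boundary case $lb\_\tilde{q_{ab}} = ub\_\tilde{q_{ij}}$ or $ub\_\tilde{c_{ab}} = lb\_\tilde{c_{ij}}$, where the realizable outcome is only a tie rather than a strict reversal, so you must either argue that a tie already makes pruning unsafe or concede that necessity holds only up to such coincidences. In short, your proof subsumes the paper's: the direction the paper actually proves is handled the same way, and you additionally address the direction the paper asserts but never argues.
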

\begin{proof} Since it holds that $\tilde{c_{ij}} \in [lb\_\tilde{c_{ij}}, ub\_\tilde{c_{ij}}]$ and
$\tilde{q_{ij}} \in [lb\_\tilde{q_{ij}}, ub\_\tilde{q_{ij}}]$, by
lemma assumptions and inequality transition, we have:\vspace{-2ex}

{\scriptsize
$$\tilde{c_{ab}}\leq ub\_\tilde{c_{ab}} < lb\_\tilde{c_{ij}}\leq
\tilde{c_{ij}}, \text{ and}$$ $$\tilde{q_{ab}}\geq
lb\_\tilde{q_{ab}}
> ub\_\tilde{q_{ij}}\geq \tilde{q_{ij}}.$$\vspace{-5ex}
}

As a result, we can see that, compared to pair $\langle \tilde{w_a},
\tilde{t_b}\rangle$, pair $\langle \tilde{w_i}, \tilde{t_j}\rangle$
has both higher traveling cost $\tilde{c_{ij}}$ and lower quality
score $\tilde{q_{ij}}$. Since our GREEDY algorithm only selects one
best pair each iteration (which can maximally increase the quality score
and minimally increase the traveling cost), pair $\langle
\tilde{w_i}, \tilde{t_j}\rangle$ is definitely worse than $\langle
\tilde{w_a}, \tilde{t_b}\rangle$ in both quality and traveling cost
dimensions, and thus can be safely pruned.
\end{proof}

\vspace{0.5ex} \noindent {\bf Pruning with the Increase
Probability.} Lemma \ref{lemma:dominate_pair} utilizes the
lower/upper bounds of the quality score and traveling cost to enable
the dominance pruning. If a pair cannot be simply pruned by Lemma
\ref{lemma:dominate_pair}, we will further consider a more costly
pruning method, by consider the probabilistic information.

\begin{lemma} (The Increase Probability Pruning) Given a candidate
pair $\langle \tilde{w_a}, \tilde{t_b}\rangle$, a valid pair
$\langle \tilde{w_i}, \tilde{t_j}\rangle$ can be safely pruned, if
and only if it holds that: (1) $Pr_{\Delta_q(\tilde{w_i},
\tilde{t_j})}$ (w.r.t. $\langle \tilde{w_a}, \tilde{t_b}\rangle$) is
greater than 0.5, and (2) $Pr_{\Delta_c(\tilde{w_i},
    \tilde{t_j})}$ (w.r.t. candidate pair $\langle \tilde{w_a},
\tilde{t_b}\rangle$) is greater than 0.5 , where
$Pr_{\Delta_q(\tilde{w_i}, \tilde{t_j})}$ and
$Pr_{\Delta_c(\tilde{w_i}, \tilde{t_j})}$
 are given by Eqs.~(\ref{eq:quality_increase_comparison}) and (\ref{eq:cost_increase_comparison}), respectively.\label{lemma:probabilistic_dominate}
\end{lemma}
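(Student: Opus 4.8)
The plan is to prove this by mirroring the argument for the dominance pruning of Lemma~\ref{lemma:dominate_pair}, but with the deterministic bound comparisons replaced by the CLT-based probabilistic comparisons of Eqs.~(\ref{eq:quality_increase_comparison}) and~(\ref{eq:cost_increase_comparison}). First I would translate each probabilistic hypothesis into a comparison of expected values. Recalling that $\Phi(0)=\tfrac{1}{2}$ and that $\Phi(\cdot)$ is strictly increasing, and noting that the denominators $Var(\tilde{q_{ij}})+Var(\tilde{q_{ab}})$ and $Var(\tilde{c_{ij}})+Var(\tilde{c_{ab}})$ are strictly positive, the hypothesis on $Pr_{\Delta_q(\tilde{w_i},\tilde{t_j})}$ pins down the sign of the argument of $\Phi$ in Eq.~(\ref{eq:quality_increase_comparison}), and hence the sign of $E(\tilde{q_{ij}})-E(\tilde{q_{ab}})$; likewise the hypothesis on $Pr_{\Delta_c(\tilde{w_i},\tilde{t_j})}$ fixes the sign of $E(\tilde{c_{ij}})-E(\tilde{c_{ab}})$ via Eq.~(\ref{eq:cost_increase_comparison}). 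Together these two sign conditions say that, in expectation, one of the two pairs is no better than the other in \emph{both} the quality and the cost coordinate.

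For the sufficiency (``if'') direction, I would then reuse the reasoning of Lemma~\ref{lemma:dominate_pair} at the level of expectations: since GREEDY selects, at each iteration, the pair that is expected to maximally raise the quality score while minimally raising the traveling cost, a pair that is expected-dominated in both coordinates can never become the chosen best-so-far pair, and may therefore be discarded without affecting the output. The only change from Lemma~\ref{lemma:dominate_pair} is that the ordering is now in terms of expected (mean) quality and cost rather than certain upper and lower bounds.

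For the necessity (``only if'') direction, I would argue the contrapositive: if one of the two threshold conditions fails (the relevant probability is $\le 0.5$), then the corresponding expected-value inequality is not strict in the required direction, so the tested pair is not expected-dominated along that coordinate. In that case it is Pareto-incomparable to the candidate, and there is a continuation of the greedy process in which it could still be selected; hence it cannot be pruned safely, establishing that both conditions are genuinely required.

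The main obstacle I anticipate is bookkeeping the signs correctly: the two events in Eqs.~(\ref{eq:quality_increase_comparison}) and~(\ref{eq:cost_increase_comparison}) point in opposite directions (one is a ``greater-than'' event, the other a ``less-than-or-equal'' event), so one must be careful about which pair ends up expected-dominated, and the conclusion must be stated for that pair. A secondary caveat worth recording in the proof is that ``safely pruned'' holds only under the Gaussian CLT approximation used to derive Eqs.~(\ref{eq:quality_increase_comparison})--(\ref{eq:cost_increase_comparison}); and that the two conditions must be imposed jointly, since expected-dominance in a single coordinate alone does not justify pruning a Pareto-incomparable pair.
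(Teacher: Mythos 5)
The paper offers no proof of this lemma at all --- only the one-sentence remark that it ``filters out those pairs that have higher probabilities to be inferior'' --- so there is nothing to compare your argument against; but carrying your own plan through exposes a real problem that you flagged as ``bookkeeping the signs'' and then left unresolved. Do the bookkeeping: by Eq.~(\ref{eq:quality_increase_comparison}), $Pr_{\Delta_q(\tilde{w_i},\tilde{t_j})}>0.5$ holds iff the argument of $\Phi$ is negative, i.e.\ iff $E(\tilde{q_{ij}})>E(\tilde{q_{ab}})$; by Eq.~(\ref{eq:cost_increase_comparison}), $Pr_{\Delta_c(\tilde{w_i},\tilde{t_j})}>0.5$ holds iff $E(\tilde{c_{ij}})<E(\tilde{c_{ab}})$. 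The two events do not ``point in opposite directions'' in the sense that matters: both favor $\langle \tilde{w_i},\tilde{t_j}\rangle$ (higher expected quality \emph{and} lower expected cost). So under the lemma's hypotheses the expected-dominated pair is $\langle \tilde{w_a},\tilde{t_b}\rangle$, and your sufficiency step --- ``discard the expected-dominated pair'' --- licenses pruning the \emph{candidate}, not the tested pair $\langle \tilde{w_i},\tilde{t_j}\rangle$ that the lemma claims to prune. Your sketch therefore either proves the converse of the statement or shows the statement is mis-stated (the thresholds should read ``less than $0.5$'', consistent with the paper's own intuitive gloss and with how Lemma~\ref{lemma:dominate_pair} is used). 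This is not a cosmetic detail; it decides whether the lemma as written is true, and a proof cannot defer it.

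Two secondary caveats. First, your ``only if'' direction is not really an argument: exhibiting ``a continuation of the greedy process in which the pair could still be selected'' requires constructing such a continuation, and in any case a pair that survives comparison with this one candidate may still be prunable against another candidate in $S_p$ or by the budget filter, so necessity in the strong form claimed is dubious (the paper's own proof of Lemma~\ref{lemma:dominate_pair} likewise only establishes the ``if'' direction despite the ``if and only if'' phrasing). Second, your reduction of the probabilistic conditions to sign conditions on the means is the right move and is essentially forced by the closed forms in Eqs.~(\ref{eq:quality_increase_comparison})--(\ref{eq:cost_increase_comparison}), but it also makes plain that ``safely pruned'' here can only mean ``never chosen by GREEDY as the best pair in this iteration under the CLT approximation,'' not safety with respect to the exact MQA optimum; stating that scope explicitly, as you propose, is worthwhile.
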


Intuitively, Lemma \ref{lemma:probabilistic_dominate} filters out
those pairs $\langle \tilde{w_i}, \tilde{t_j}\rangle$ that have
higher probabilities to be inferior to other candidate pairs
$\langle \tilde{w_a}, \tilde{t_b}\rangle$, in terms of both
traveling cost and quality score.

Based on Lemmas \ref{lemma:dominate_pair} and
\ref{lemma:probabilistic_dominate}, we can obtain a set, $S_p$, of
candidate pairs that cannot be dominated by other pairs.

\vspace{0.5ex} \noindent {\bf Selection of the Best Pair Among
Candidate Pairs.} Given a number of candidate pairs in set $S_p$,
GREEDY still needs to identify one ``best'' pair with a
high quality score and under the budget constraint. Specifically, we
will first filter out those false alarms in $S_p$ with high
traveling costs (i.e., violating the budget constraint), and then
return one pair with the highest probability to have larger quality
score than others in $S_p$.

Assume that in GREEDY, we have so far selected $L$
pairs, denoted as $\langle \tilde{w_a}, \tilde{t_b}\rangle$. Then,
with a new assignment pair $\langle \tilde{w_i}, \tilde{t_j}\rangle
\in S_p$, if it holds that:\vspace{-2ex}

{\scriptsize
\begin{eqnarray}
Pr\left\{\left(\sum_{\forall \langle \tilde{w_a},
\tilde{t_b}\rangle} \tilde{lb\_c_{ab}}\right) +\tilde{c_{ij}}\leq
B_{max}\right\}\leq \delta,\label{eq:prune_high_travel_cost}
\end{eqnarray}\vspace{-2ex}
}

\noindent then pair $\langle \tilde{w_i}, \tilde{t_j}\rangle \in
S_p$ can be safely ruled out from candidate set $S_p$, where
$\delta$ is a user-specified confidence level that the selected
assignment satisfies the budget constraint $B_{max}$ for both
(remaining) current- and next- time instance budgets.
Eq.~(\ref{eq:prune_high_travel_cost}) can be computed via CLT
\cite{grinstead2012introduction, jovanovic1997demo}.

Next, in the remaining candidate pairs in $S_p$, we will select one
pair $\langle \tilde{w_i}, \tilde{t_j}\rangle$ with the highest
probability, $Pr_{q, max} (\langle \tilde{w_i},
\tilde{t_j}\rangle)$, of having the largest high quality. That is,
we have:\vspace{-3ex}

{\scriptsize
\begin{eqnarray}
Pr_{q, max} (\langle \tilde{w_i}, \tilde{t_j}\rangle) =
\prod_{\forall \langle \tilde{w_a}, \tilde{t_b}\rangle}
Pr_{\Delta_q(\tilde{w_i}, \tilde{t_j})} (\langle \tilde{w_a},
\tilde{t_b}\rangle) \label{eq:max_quality_prob}
\end{eqnarray}\vspace{-3ex}
}

\noindent where $Pr_{\Delta_q(\tilde{w_i}, \tilde{t_j})} (\langle
\tilde{w_a}, \tilde{t_b}\rangle)$ is the probability of quality
score increase, compared with pair $\langle \tilde{w_a},
\tilde{t_b}\rangle$, given by
Eq.~(\ref{eq:quality_increase_comparison}).

Finally, among all the remaining candidate pairs in set $S_p$, we
will choose the one, $\langle \tilde{w_i}, \tilde{t_j}\rangle$, with
the highest probability $Pr_{q, max} (\langle \tilde{w_i},$
$\tilde{t_j}\rangle)$, which will be included as a selected best
assignment pair in GREEDY.

\nop{

\begin{lemma} (Pruning Workers with High Travel Costs) Let $c^{(i)}_{min}$
    be the minimum travel cost for worker $w_i$ to any task $t_j$. If the minimum
    travel cost $c^{(i)}_{min}$ for worker $w_i$ is greater than the remaining budget,
    then we can safely prune worker $w_i$. \label{lemma:expensive_worker}
\end{lemma}
\begin{proof}
From Definition \ref{definition:PA_SC}, we have the global budget
constraint that $\sum_{\forall \langle w_i, t_j\rangle \in I_p}c_{ij}$ $\leq B$.
From the lemma assumption, if it holds that
$c^{(i)}_{min} > B - \sum_{\forall \langle w_a, t_b\rangle \in I'_p}c_{ab}$,
then we have $c^{(i)}_{min} + \sum_{\forall \langle w_a, t_b\rangle \in I'_p}c_{ab}> B$,
which violates the constraint that the total traveling cost should not exceed the global
budget $B$. Thus, we should not assign worker $w_i$ to any task.

Due to the remaining budget will not increase for the rest of the
greedy algorithm iterations, we still cannot assign worker $w_i$
to any task (since that will violate the constraint of the global budget).
Hence, we can safely prune worker $w_i$.
\end{proof}

}

\nop{

\subsection{Calculations Related to Pairs with Predicted Workers/Tasks}
\label{subsec:calculations_pairs} Before introducing the greedy
algorithm, we first explain how to calculate/compare travel costs
and quality scores of worker-and-task pairs.

\noindent \textbf{Comparing the Traveling Costs of Pairs with
Predicted Workers/Tasks.} For two pairs  $\langle w_i, t_j\rangle$
and $\langle w_a, t_b\rangle$, the probability that the cost,
$c_{ij}$ of $\langle w_i, t_j\rangle$ is not higher than that,
$c_{ab}$, of $\langle w_a, t_b\rangle$ is:
\begin{eqnarray}
&& Pr\{c_{ij} \leq c_{ab}\}\notag\\
&=& Pr\{C\cdot dist(w_i, t_j) \leq C\cdot dist(w_a, t_b)\}\notag\\
&=& Pr\{dist(w_i, t_j) \leq dist(w_a, t_b)\}\notag\\
&=& Pr\{dist(w_i, t_j) - dist(w_a, t_b) \leq 0\}.
\label{eq:dist_comp}
\end{eqnarray}

Then, we consider $dist(w_i, t_j)$ as a variable $X$ and $dist(w_a,
t_b)$ as a variable $Y$. Since variables $X$ and $Y$ are obtained
from two different worker-and-task pairs, they can be considered as
independent of each other. Thus, we can apply the \textit{Central
Limit Theorem} (CLT) to Eq. \ref{eq:dist_comp}. Note that, although
CLT assumes a summation of large number of random variables, some
existing studies \cite{grinstead2012introduction, jovanovic1997demo}
shows that 2 variables can also achieve a good approximation of the
probability.

We assume the means of X and Y are $\mu_X$ and $\mu_Y$ respectively;
variances of $X$ and $Y$ are $\sigma^2_X$ and $\sigma^2_Y$
respectively. For simplicity, let variable $Z$ be $X-Y$. Thus,  by
applying CLT, we can simplify Eq. \ref{eq:dist_comp} as
$Pr\{Z\leq0\}$,  which can be further rewritten as: {\scriptsize
\begin{eqnarray}
Pr\{Z\leq0\}&=&Pr\Big\{\frac{Z-(\mu_X - \mu_Y)}{\sqrt{\sigma^2_X + \sigma^2_Y}} \leq \frac{-(\mu_X - \mu_Y)}{\sqrt{\sigma^2_X + \sigma^2_Y}} \Big\}\notag\\
&=&\Phi\Big(\frac{-(\mu_X - \mu_Y)}{\sqrt{\sigma^2_X +
\sigma^2_Y}}\Big)
\end{eqnarray}
}

\noindent where $\Phi(\cdot)$ is the \textit{cumulative density
function} (cdf) of a standard normal distribution, and $(\mu_X -
\mu_Y)$ and $(\sigma^2_X + \sigma^2_Y)$ are the means and variance
of variable $Z$. If $Pr\{Z\leq0\}$ is higher than a threshold,
$0.5$, we consider $c_{ij}$ is smaller than $c_{ab}$; otherwise, we
consider $c_{ij}$ is not smaller than $c_{ab}$. Similarly, we can
compare other pairs.

\noindent \textbf{Comparing the Quality Scores of Pairs with
Predicted Workers/Tasks.} For a pair $\langle \tilde{w_i},
t_j\rangle$ of a predicted worker, $\tilde{w_i}$, and a current task
$t_j$, its quality score is a distribution with a probability
density function as:
\begin{equation}
Pr\{q_{ij} = q_{kj} | \forall w_k \in W_p\} = \frac{1}{|W_p|},
\end{equation}
\noindent where $W_p$ is the set of current workers. Similarly, we
can have the distribution of quality scores of pairs with current
workers and predicted tasks. Note, for the pairs with predicted
workers/tasks and predicted tasks/workers, we consider their quality
scores are same and cannot be compared with each other. Then, we can
compare the pairs of predicted workers/tasks and current
tasks/workers.

Specifically, for two pairs $\langle w_i, t_j\rangle$ and $\langle
w_a, t_b\rangle$, the probability that the quality score, $q_{ij}$,
of $\langle w_i, t_j\rangle$ is not higher than that, $q_{ab}$, of
$\langle w_a, t_b\rangle$ is:
\begin{equation}
Pr\{q_{ij} \leq q_{ab}\} = \sum_{w_x \in W_p} \sum_{t_y \in T_p}
(\frac{1}{|W_p|\cdot|T_p|} \mathbf{1}{(q_{xj}  \leq q_{ay})} ),
\end{equation}

\noindent where $\mathbf{1}(q_{xj}  \leq q_{ay})=1$ when $q_{xj}
\leq q_{ay}$ holds.

If $Pr\{q_{ij} \leq q_{ab}\}$ is higher than a threshold, $0.5$, we
consider $q_{ij}$ is smaller than $q_{ab}$; otherwise, we consider
$q_{ij}$ is not smaller than $q_{ab}$. Similarly, we can compare
other pairs.

}

\begin{figure}[t!]
    \begin{center}\vspace{-4ex}
        \begin{tabular}{l}
            \parbox{3.1in}{
                \begin{scriptsize}
                    \begin{tabbing}
                        12\=12\=12\=12\=12\=12\=12\=12\=12\=12\=12\=\kill
                        {\bf Procedure {\sf MQA\_Greedy}} \{ \\
                        \> {\bf Input:} current and predicted workers $\tilde{w_i}$  in $W$,  current and predicted tasks $\tilde{t_j}$\\
                        \> \> \> \>  in $T$, and the maximum possible budget $B_{max}$\\
                        \> {\bf Output:} a worker-and-task assignment instance set, $I_p$\\
                        \> (1) \> \> $I_p=\emptyset$; \\
                        \> (2) \> \> obtain a list, $\mathcal{L}$, of valid worker-and-task pairs for $\tilde{w_i}\in W$ and $\tilde{t_j}\in T$\\
                        \> (3) \> \> \textbf{for} $k=1$ to $\min\{|W|, |T|\}$ \\
                        \> (4) \> \> \> $S_p = \emptyset$;\\
                        \> (5) \> \> \> \textbf{for} each valid assignment pair $\langle \tilde{w_i}, \tilde{t_j}\rangle \in \mathcal{L}$\\
                        \> (6) \> \> \> \> \textbf{if} $\langle \tilde{w_i}, \tilde{t_j}\rangle$ has $lb\_\tilde{c_{ij}}$ greater than the remaining budget, then \textbf{continue};\\
                        \> (7) \> \> \> \> \textbf{if} pair $\langle \tilde{w_i}, \tilde{t_j}\rangle$ cannot be pruned w.r.t. $S_p$ by Lemma \ref{lemma:dominate_pair}\\
                        \> (8) \> \> \> \> \> \textbf{if} pair $\langle \tilde{w_i}, \tilde{t_j}\rangle$ cannot be pruned w.r.t. $S_p$ by Lemma \ref{lemma:probabilistic_dominate}\\
                        \> (9) \> \> \> \> \> \> add $\langle \tilde{w_i}, \tilde{t_j}\rangle$ to $S_p$\\
                        \> (10)\> \> \> \> \> \> prune other candidate pairs in $S_p$ with $\langle \tilde{w_i}, \tilde{t_j}\rangle$\\
                        \> (11)\> \> \> select one best assignment pair $\langle \tilde{w_i}, \tilde{t_j}\rangle$ in $S_p$ satisfying the budget  \\
                        \>     \> \> \> \> constraint $B_{max}$ in Eq.~(\ref{eq:prune_high_travel_cost}) and with the highest probability  \\
                        \>     \> \> \> \> $Pr_{q, max} (\langle \tilde{w_i}, \tilde{t_j}\rangle)$ in Eq.~(\ref{eq:max_quality_prob})\\
                        \> (12)\> \> \> add the selected pair $\langle \tilde{w_i}, \tilde{t_j}\rangle$ to $I_p$ \\
                        \> (13)\> \> \> remove all those valid pairs $\langle \tilde{w_i}, - \rangle$ or $\langle -, \tilde{t_j}\rangle$ from $\mathcal{L}$\\
                        \> (14)\> \> remove those worker-and-task pairs with the predicted workers/tasks from $I_p$\\
                        \> (15)\> \> \textbf{return} $I_p$\}\\
                    \end{tabbing}
                \end{scriptsize}
            }
        \end{tabular}
    \end{center}\vspace{-5ex}
    \caption{\small The MQA Greedy Algorithm.}
    \vspace{-4ex}
    \label{alg:greedy}
\end{figure}

\subsection{The MQA Greedy Algorithm}

In this subsection, we propose \textit{MQA greedy} algorithm, which iteratively
assigns a worker to a spatial task greedily that can obtain a high the overall
quality score under the budget constraint each iteration.

Figure \ref{alg:greedy} presents the pseudo code of our \textit{MQA greedy}
algorithm, namely {\sf MQA\_Greedy}, which obtains one best
worker-and-task assignment pair each time over both current and
predicted future workers and tasks, where the selected pair
satisfies the budget constraint $B_{max}$ and has the largest
quality score with high confidence, where $B_{max}$ is the available
budget in both current and next time instances.

We first initialize the worker-and-task assignment instance set
$I_p$ with an empty set (line 1). Then, we obtain a list,
$\mathcal{L}$, of valid worker-and-task assignment pairs, which may
involve either current or future workers/tasks, that is,
$\tilde{w_i}\in W$ and $\tilde{t_j}\in T$ (line 2). Next, for each
iteration, we find one best assignment pair with high quality score
and low traveling cost (satisfying the budget constraint) (lines
3-13). In particular, we check each valid assignment pair $\langle
\tilde{w_i}, \tilde{t_j}\rangle$ in the list $\mathcal{L}$ (line 5).
If this pair has the lower bound, $lb\_\tilde{c_{ij}}$, of the
traveling cost greater than (the upper bound of) the remaining
budget (w.r.t. $I_p$ and $B_{max}$), then it does not satisfy the
budget constraint, and we can continue to check the next assignment
pair (line 6).
Then, if the pair $\langle \tilde{w_i}, \tilde{t_j}\rangle$ cannot
be pruned by dominance and increase probability pruning methods in
Lemmas \ref{lemma:dominate_pair} and
\ref{lemma:probabilistic_dominate}, respectively, then $\langle
\tilde{w_i}, \tilde{t_j}\rangle$  is a candidate pair, and we
include in an initially empty candidate set $S_p$  (lines 7-9). In
addition, we can also use candidate pair $\langle \tilde{w_i},
\tilde{t_j}\rangle$ to prune other pairs in set $S_p$ (line 10).
After that, we can insert the best pair from the candidate set $S_p$
into set $I_p$ (lines 11-12), such that the budget constraint
$B_{max}$ is satisfied in Eq.~(\ref{eq:prune_high_travel_cost}) and
the probability $Pr_{q, max} (\langle \tilde{w_i},
\tilde{t_j}\rangle)$ in Eq.~(\ref{eq:max_quality_prob}) is
maximized. Since each worker can be assigned with at most one task
and each task is accomplished by at most one worker, we remove those
valid pairs from $\mathcal{L}$ that contains either worker
$\tilde{w_i}$ or task $\tilde{t_j}$ (line 13). Finally, we remove
those worker-and-task pairs involving future workers/tasks from
$I_p$, and return the set $I_p$ as the solution of the \textit{MQA greedy}
algorithm (lines 14-15).

\section{The MQA Divide-and-Conquer Approach}
\label{sec:D&C}

In this section, we propose an efficient \textit{MQA divide-and-conquer
algorithm} (D\&C), which partitions the MQA problem into $g$
subproblems, recursively conquers the subproblems, and merges
assignment results from subproblems. In this paper, we will divide
the MQA problem with $m'$ current/future tasks into $g$
subproblems, each involving $\lceil m'/g\rceil$ tasks. The D\&C
process continues, until the subproblem sizes become 1 (i.e., with
one single spatial task in subproblems, which can be easily solved
by GREEDY). We will later discuss how to utilize a
cost model to estimate the best $g$ value that can achieve low MQA
processing cost.

\subsection{The Decomposition of the MQA Problem}
\label{subsec:PA_SC_decomposition}

\vspace{0.5ex}\noindent {\bf Decomposing the MQA Problem.}
Specifically, assume that the original MQA problem involves $m'$
current/future spatial tasks for both current and next time instances. Our
goal is to divide this problem into $g$ subproblems $M_s$ (for
$1\leq s\leq g$), such that each subproblem $M_s$ involves a
disjoint subgroup of $\lceil m'/g\rceil$ spatial tasks,
$\tilde{t_j}$, each of which is associated with potentially valid
worker(s) $\tilde{w_i}$ (i.e., with valid worker-and-task assignment
pairs $\langle \tilde{w_i}, \tilde{t_j}\rangle$).

After the decomposition, each subproblem $M_s$ contains
all valid pairs, $\langle \tilde{w_i}, \tilde{t_j}\rangle$, w.r.t.
the decomposed $\lceil m'/g\rceil$ tasks. Since tasks in different
subproblems may be reachable by the same workers, different
subproblems may involve the same (conflicting) workers, whose
conflictions should be resolved when we merge solutions (the
selected assignment pairs) to these subproblems (as
discussed in Section \ref{subsec:merge}).

\vspace{1ex}\noindent \textbf{Example 4 (The MQA Problem
Decomposition)} {\it Figure \ref{fig:decomposing} shows an example
of decomposing the MQA problem (as shown in Figure
\ref{subfig:beforedecomposing}) into 3 subproblems (as depicted in
Figure \ref{subfig:decomposed}), where each subproblem contains one
single spatial task (i.e., subproblem size = 1), associated with its
related valid workers. Here, the dashed border indicates the
predicted future workers (i.e., $w_4$ and $w_5$) or task (i.e.,
$t_3$). In this example, the first subproblem in Figure
\ref{subfig:decomposed} contains task $t_1$, which can be reached by
workers $w_1$ and $w_4$. Different tasks may have conflicting
workers, for example, tasks $t_1$ and $t_2$ from subproblems $M_1$
and $M_2$, respectively, share the same (conflicting) worker $w_1$.

}

\begin{figure}[t!]\vspace{-4ex}
    \centering
    \setcounter{subfigure}{-1}
    \subfigure{
        \scalebox{.085}[.085]{\includegraphics{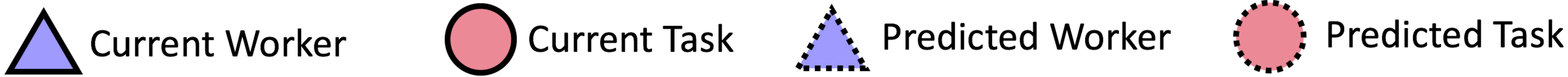}}
        \label{subfig:composing_bar}}\vspace{-1ex}
    \subfigure[][{\scriptsize The MQA Problem}]{
        \scalebox{.085}[.085]{\includegraphics{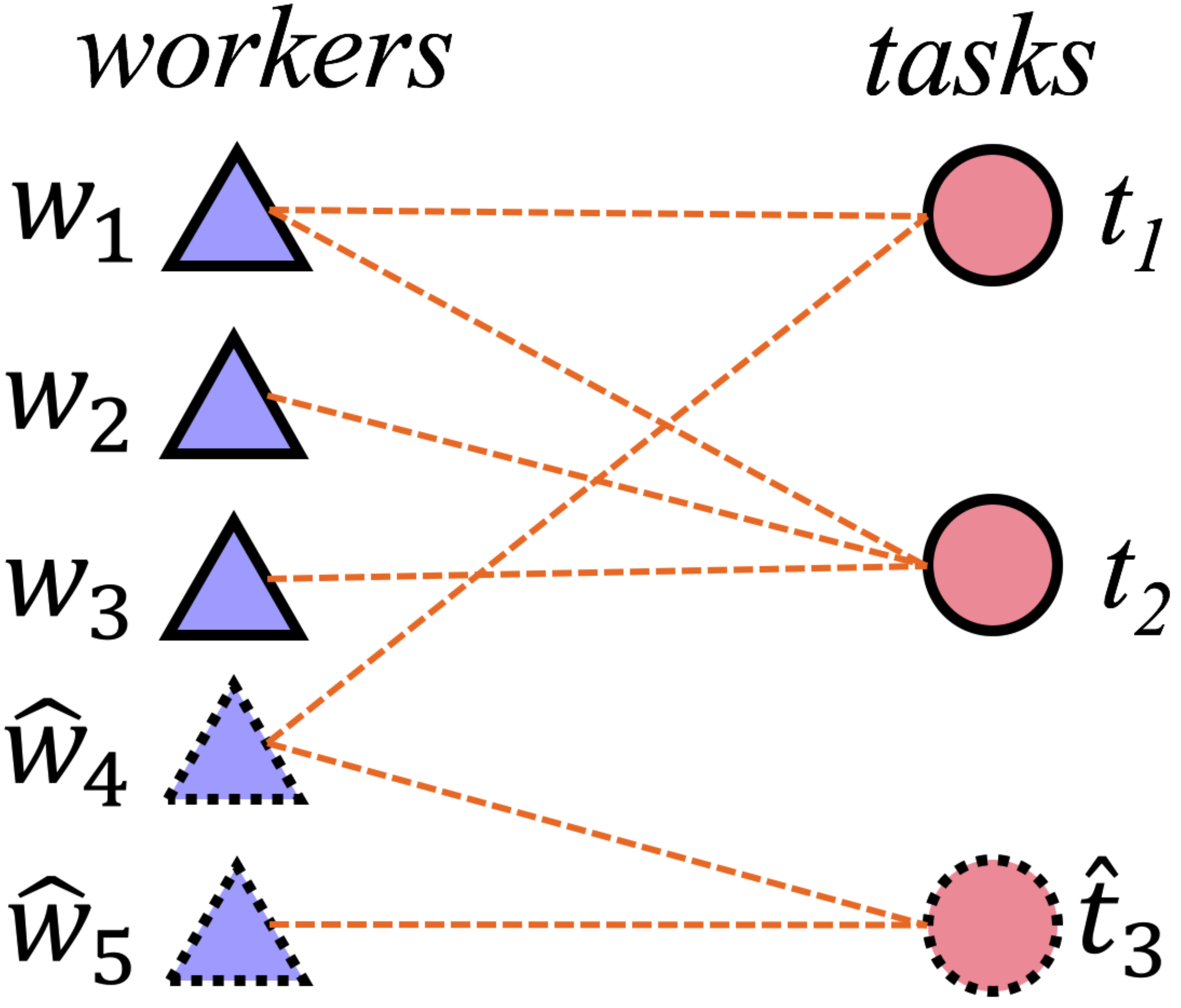}}
        \label{subfig:beforedecomposing}}\hspace{5ex}
    \subfigure[][{\scriptsize The Decomposed Subproblems}]{
        \scalebox{0.08}[0.08]{\includegraphics{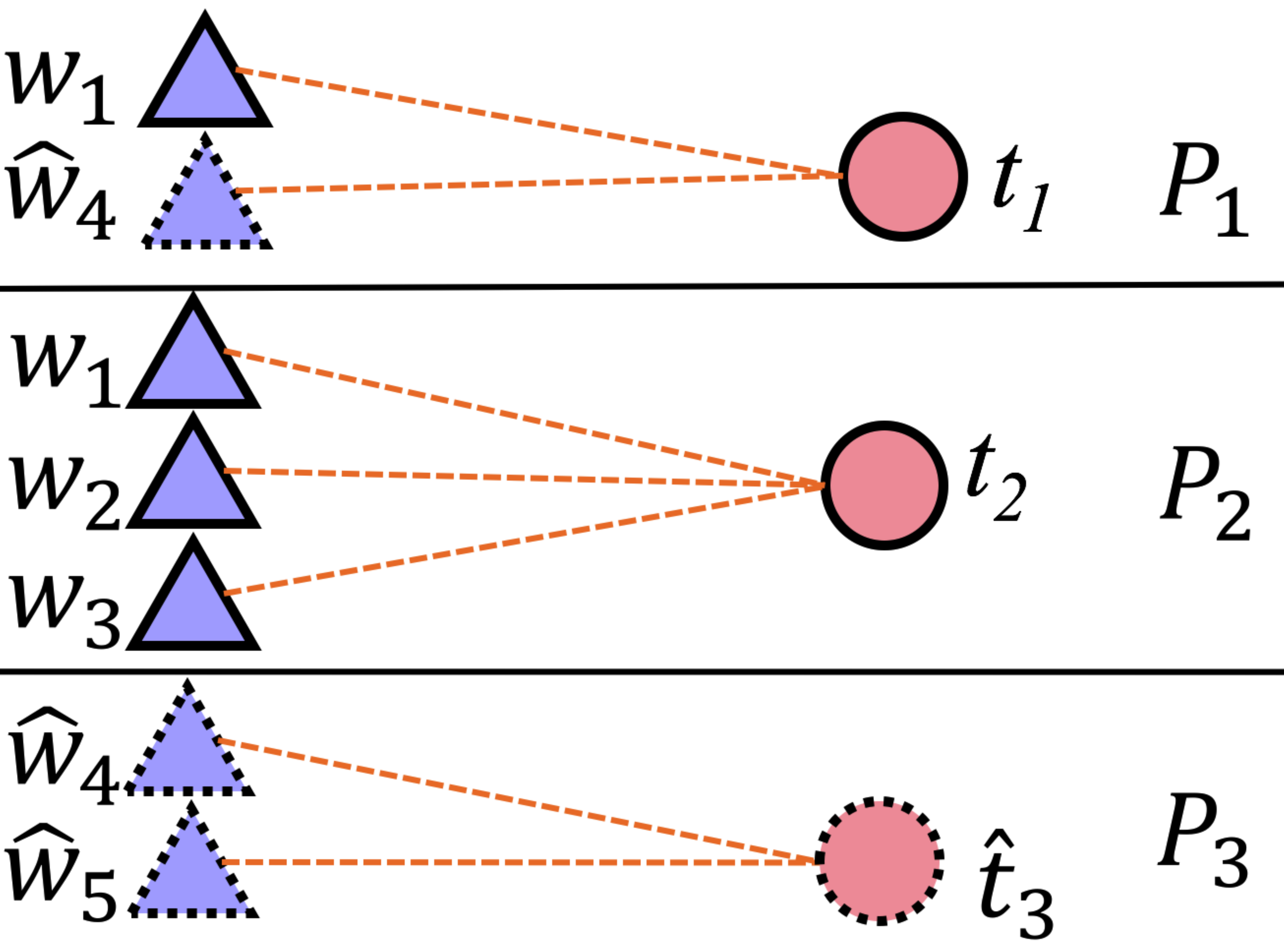}}
        \label{subfig:decomposed}}\vspace{-2ex}
    \caption{\small Illustration of Decomposing the MQA Problem.}\vspace{-4ex}
    \label{fig:decomposing}
\end{figure}

\vspace{0.5ex}\noindent {\bf The MQA Decomposition Algorithm.}
Figure \ref{alg:decomposing} illustrates the pseudo code of our
MQA decomposition algorithm, namely {\sf MQA\_Decomposition},
which decomposes the MQA problem (with $m'$ tasks), and returns
$g$ MQA subproblems, $M_s$ (each having $\lceil m'/g \rceil$
tasks).

Specifically, we first initialize $g$ empty subproblems, $M_s$,
where $1\leq s\leq g$ (lines 1-2). Then, we find all valid
worker-and-task assignment pairs $\langle \tilde{w_i},
\tilde{t_j}\rangle$ for both current and predicted workers/tasks in
sets $W$ and $T$, respectively (line 3).

Next, we want to iteratively retrieve $g$ subproblems, $M_s$, from
the original MQA problem (lines 4-8). That is, for the $s-th$
iteration, we first obtain an anchor task $\tilde{t_j}$ and its
$(\lceil m/g \rceil -1)$ nearest tasks, and add them to set
$T_p^{(s)}$ (line 5), where anchor tasks $\tilde{t_j}$ are chosen in
a \textit{sweeping style} (starting with the smallest longitude, or
mean of the longitude for future tasks; in the case that multiple
tasks have the same longitude, we choose the one with smallest
latitude).

For each task $\tilde{t_j}\in T_p^{(s)}$, we obtain all its related
workers $\tilde{w_i}$ who can reach task $\tilde{t_j}$, and add
pairs $\langle \tilde{w_i}, \tilde{t_j} \rangle$ to subproblem $M_s$
(lines 6-8). Finally, we return the $g$ decomposed subproblems $M_s$
(for $1\leq s\leq g$) (line 9).

\subsection{The MQA Merge Algorithm}
\label{subsec:merge}

As mentioned earlier, we can execute the decomposition algorithm to
recursive divide the MQA problem, until each subproblem only
involves one single task (which can be easily processed by the
greedy algorithm). After we obtain solutions to the decomposed MQA
subproblems (i.e., a number of selected assignment pairs in
subproblems), we need to merge these solutions into the one to the
original MQA problem.

\vspace{0.5ex}\noindent {\bf Resolving Worker-and-Task Assignment
Conflicts.} During the merge process, some workers are conflicting,
that is, they are assigned to different tasks in the solutions to
distinct subproblems at the same time. This contradicts with the
requirement that each worker can only be assigned with at most one
spatial task at each time instance. Thus, to merge solutions of these
conflicting subproblems, we resolve the conflicts.

Next, we use an example to illustrate how to resolve the conflicts
between two (or multiple) pairs $\langle \tilde{w_i}, \tilde{t_j}
\rangle$ and $\langle \tilde{w_i}, \tilde{t_b} \rangle$ (w.r.t. the
conflicting worker $\tilde{w_i}$), by selecting one ``best'' pair
with low traveling cost and high quality score.

\vspace{1ex}\noindent \textbf{Example 5 (The Merge of Subproblems)}
{\it In the example of Figure \ref{subfig:decomposed}, assume that
in subproblems $M_1$ and $M_2$, we selected pairs $\langle w_1, t_1
\rangle$ and $\langle w_1, t_2 \rangle$ as the best assignment,
respectively, which contains the conflicting worker $w_1$. When we
merge the two subproblems $M_1$ and $M_2$, we need to resolve such a
conflict by deciding which task should be assigned to the
conflicting worker $w_1$. By using Lemmas \ref{lemma:dominate_pair}
and \ref{lemma:probabilistic_dominate}, we can first prune pairs
that are not dominated by others. Then, among the remaining
candidate pairs, we can select one best pair satisfying
Eq.~(\ref{eq:prune_high_travel_cost}) and maximizing
Eq.~(\ref{eq:max_quality_prob}). In this example, assume that pair
$\langle w_1, t_1\rangle$ dominates pair $\langle w_1, t_2 \rangle$
by Lemma \ref{lemma:dominate_pair}. Then, we will assign worker
$w_1$ with task $t_1$ (since $\langle w_1, t_1\rangle$ is the best
pair), and find another worker (e.g., $w_2$) with the largest
quality score under the budget constraint to do task $t_2$. This
way, after solving conflicts, we obtain two updated pairs $\langle
w_1, t_1\rangle$ and $\langle w_2, t_2\rangle$ for subproblems $M_1$
and $M_2$, respectively.

}

\begin{figure}[t!]
	\begin{center}\vspace{-4ex}
		\begin{tabular}{l}
			\parbox{3.1in}{
				\begin{scriptsize} 
					\begin{tabbing}
						12\=12\=12\=12\=12\=12\=12\=12\=12\=12\=12\=\kill
						{\bf Procedure {\sf MQA\_Decomposition}} \{ \\
						\> {\bf Input:} $n'$ current/future workers $\tilde{w_i}$ in $W$, $m'$ current/future spatial tasks $\tilde{t_j}$\\
						\> \> \> \>   in $T$, and the number of subproblems $g$\\
						\> {\bf Output:} the decomposed MQA subproblems, $M_s$ (for $1\leq s\leq g$)\\
						\> (1) \> \> \textbf{for} $s$ = 1 to $g$\\
						\> (2) \> \> \> $M_s = \emptyset$\\
						\> (3) \> \> compute all valid worker-and-task pairs $\langle \tilde{w_i}, \tilde{t_j}\rangle$ from $W$ and $T$\\
						\> (4) \> \> \textbf{for} $s = 1$ to $g$\\
						\> (5) \> \> \> add an anchor task $\tilde{t_j}$ and find its $(\lceil m'/g \rceil -1)$ nearest tasks to set $T_p^{(s)}$ \\
						\>  \> \>  \> \> // the task, $\tilde{t_j}$, whose longitude (or mean of the longitude) is the smallest\\
						\> (6) \> \> \> \textbf{for} each current/future task $\tilde{t_j} \in T_p^{(s)}$\\
						\> (7) \> \> \> \> obtain all valid workers $\tilde{w_i}$ that can reach task $\tilde{t_j}$\\
						\> (8) \> \> \> \> add these pairs $\langle \tilde{w_i}, \tilde{t_j} \rangle$ to subproblem $M_s$\\
						\> (9) \> \>  \textbf{return} subproblems $M_1$, $M_2$, ..., and $M_g$\}\\
					\end{tabbing}
				\end{scriptsize}
			}
		\end{tabular}
	\end{center}
	\vspace{-5ex}
	\caption{\small The MQA Problem Decomposition Algorithm.}
	\vspace{-4ex}
	\label{alg:decomposing}
\end{figure}

\vspace{0.5ex}\noindent {\bf The MQA Merge Algorithm.} Figure
\ref{alg:conflict_reconcile} illustrates the MQA merge algorithm,
namely {\sf MQA\_Merge}, which resolves the conflicts between the
current assignment instance set $I_p$ (that we have merged
subproblems $M$) and that, $I_p^{(s)}$, of subproblem $M_s$, and
returns a merged set without conflicts.

First, we obtain a set, $W_c$, of conflicting workers between $I_p$
and $I_p^{(s)}$ (line 1), which are assigned with different tasks in
different subproblems, $M$ and $M_s$. Then, in each iteration, we
select one conflicting worker $\tilde{w_i}\in W_c$ with the highest
traveling cost in $I_p^{(s)}$, and choose one best pair between
$\langle \tilde{w_i}, \tilde{t_j} \rangle \in I_p$ and $\langle
\tilde{w_i}, \tilde{t_k}\rangle \in I_p^{(s)}$, in terms of budget
and quality score (which can be achieved by checking Lemmas
\ref{lemma:dominate_pair} and \ref{lemma:probabilistic_dominate},
and finding the one that satisfies
Eq.~(\ref{eq:prune_high_travel_cost}) and maximizes
Eq.~(\ref{eq:max_quality_prob}) (lines 2-4).
When $\langle \tilde{w_i}, \tilde{t_k}\rangle$ in subproblem $M_s$
is selected as the best pair, we can resolve the conflicts by
replacing $\langle \tilde{w_i}, \tilde{t_j} \rangle$ with $\langle
\tilde{w'_i}, \tilde{t_j} \rangle$ in $I_p$; otherwise, we can
replace $\langle \tilde{w_i}, \tilde{t_k}\rangle$ with $\langle
\tilde{w''_i}, \tilde{t_k} \rangle$ in $I_p^{(s)}$ (lines 5-8).
Then, we remove worker $\tilde{w_i}$ from set $W_c$ (line 9).

After resolving all conflicting workers in $W_c$ between $I_p$ and
$I_p^{(s)}$, we can merge them together, and return an updated
merged set $I_p$ (lines 10-11).

\begin{figure}[t!]
    \begin{center}
        \begin{tabular}{l}
            \parbox{3.1in}{
                \begin{scriptsize} \vspace{-4ex}
                    \begin{tabbing}
                        12\=12\=12\=12\=12\=12\=12\=12\=12\=12\=12\=\kill
                        {\bf Procedure {\sf MQA\_Merge}} \{ \\
                        \> {\bf Input:} the current assignment instance set, $I_p$, of the merged subproblems $M$, \\
                        \> \> \> \> and the assignment instance set, $I_p^{(s)}$, of subproblem $M_s$\\
                        \> {\bf Output:} a merged worker-and-task assignment instance set, $I_p$\\
                        \> (1) \> \> let $W_c$ be a set of conflicting workers between $I_p$ and $I_p^{(s)}$\\
                        \> (2) \> \> \textbf{while} $W_c \neq \emptyset$\\
                        \> (3) \> \> \> choose a worker $\tilde{w_i}\in W_c$ with the highest traveling cost in $I_p^{(s)}$\\
                        \> \> \> \> \textit{// assume $\tilde{w_i}$ is assigned to $\tilde{t_j}$ in $I_p$ and to $\tilde{t_k}$ in $I_p^{(s)}$}\\
                        \> (4) \> \> \> select one best pair between $\langle \tilde{w_i}, \tilde{t_j} \rangle \in I_p$ and $\langle \tilde{w_i}, \tilde{t_k}\rangle \in
                        I_p^{(s)}$\\
                        \> \> \> \> \> \> \> \textit{// by using Lemmas \ref{lemma:dominate_pair} and \ref{lemma:probabilistic_dominate}, and finding the one satisfying} \\
                        \> \> \> \> \> \> \> \textit{// Eq.~(\ref{eq:prune_high_travel_cost}) and
                        maximizing Eq.~(\ref{eq:max_quality_prob})}\\
                        \> (5) \> \> \> \textbf{if} pair $\langle \tilde{w_i}, \tilde{t_k}\rangle$ in subproblem $M_s$ is selected\\
                        \> (6) \> \> \> \> find another best worker $\tilde{w'_i}$ in $M$ and substitute $\langle \tilde{w'_i}, \tilde{t_j} \rangle$ in $I_p$\\
                        \> (7) \> \> \> \textbf{else} \\
                        \> (8) \> \> \> \> find another best worker $\tilde{w''_i}$ in $M_s$ and substitute $\langle \tilde{w''_i}, \tilde{t_k} \rangle$ in $I_p^{(s)}$\\
                        \> (9) \> \> \> $W_c = W_c - \{\tilde{w_i}\}$\\
                        \> (10)\> \> $I_p = I_p \cup I_p^{(s)}$\\
                        \> (11) \> \>  \textbf{return} $I_p$\}
                    \end{tabbing}
                \end{scriptsize}
            }
        \end{tabular}
    \end{center}\vspace{-3ex}
    \caption{\small The Merge Algorithm.}
    \vspace{-3ex}
    \label{alg:conflict_reconcile}
\end{figure}

\subsection{The D\&C Algorithm}
\label{subsec:D&C}

Up to now, we have discussed how to decompose and merge subproblems.
In this section, we will illustrate the detailed
\textit{MQA divide-and-conquer} (D\&C) algorithm, which partitions the
original MQA problem into subproblems, recursively solves each
subproblem, and merges assignment results of subproblems by
resolving conflicts and adjusting the assignments under the budget
constraint.

Figure \ref{alg:dc} shows the pseudo code of our D\&C algorithm,
namely procedure {\sf MQA\_D\&C}. We first initialize an empty set
$rlt$, which is used for store candidate pairs chosen by our D\&C
algorithm (line 1). Then, we utilize a novel cost model (discussed
in Appendix C to estimate the best number of the decomposed
subproblems, $g$, with respect to current/future sets, $W$ and $T$,
of workers and tasks, respectively (line 2). With this parameter
$g$, we can invoke the {\sf MQA\_Decomposition} algorithm (as
mentioned in Figure \ref{alg:decomposing}), and obtain $g$
subproblems $M_s$ (line 3).

For each subproblem $M_s$, if $M_s$ involves more than 1 task, then
we can recursively call procedure {\sf MQA\_D\&C} $(\cdot)$ to
obtain the best assignment pairs from subproblem $M_s$ (lines 4-6).
Otherwise, if subproblem $M_s$  only contains one single spatial
task $\tilde{t_j}$, then we apply the greedy algorithm (in Figure
\ref{alg:greedy}) to select one ``best'' worker for task
$\tilde{t_j}$ (lines 7-8). Here, the best worker means, the
corresponding pair has the highest quality score under the budget
constraint $B_{max}$.

After that, the selected assignment pairs in the $s$-th subproblem
are kept in set $rlt^{(s)}$, where $1\leq s\leq g$. Then, we can
invoke procedure {\sf MQA\_Merge} $(\cdot)$ to merge these $g$
sets $rlt^{(s)}$ into a set $rlt$, by resolving the conflicts (lines
9-11).

Due to the budget constraint $B_{max}$, we may still need to adjust
assignment pairs in set $rlt$ such that the total traveling cost is
below the maximum budget $B_{max}$. If the upper bound of the
traveling cost in set $rlt$ does not exceed budget $B_{max}$, then
we can directly return $rlt$ as $I_p$ (lines 12-13). Otherwise,
similar to GREEDY, we need to select ``best''
assignment pairs from set $rlt$ that are under the budget constraint
$B_{max}$ (maximizing the total quality score), and add them to the
set $I_p$, by calling procedure {\sf
MQA\_Budget\_Constrained\_Selection} (lines 14-15).
In particular, to adjust the budget, we select a ``best'' pair from
set $rlt$ each time that satisfies the budget constraint $B_{max}$
and with the highest quality score.
Please refer to details of procedure {\sf
MQA\_Budget\_Constrained\_Selection} in lines 17-28 of Figure
\ref{alg:dc}.

\noindent {\bf Discussions on Estimating the Best Number, $g$, of
the Decomposed Subproblems.} In order to reduce the computation cost
in our D\&C algorithm, we aim to select a best $g$ value that
minimizes the processing cost, in light of our proposed cost model.
Specifically, we formally model the computation cost,
$cost_{D\&C}$, of the D\&C algorithm, with respect to $g$, take the
derivative of $cost_{D\&C}$ over $g$, and then let the derivative be
0. This way, we can find the best $g$ value that minimizes the cost
of the D\&C algorithm. For details, please refer to Appendix C.

\begin{figure}[t!]
    \begin{center}
        \begin{tabular}{l}
            \parbox{3.1in}{
                \begin{scriptsize} \vspace{-4ex}
                    \begin{tabbing}
                        12\=12\=12\=12\=12\=12\=12\=12\=12\=12\=12\=\kill
                        {\bf Procedure {\sf MQA\_D\&C}} \{ \\
                        \> {\bf Input:} $n'$ current/future workers in $W$, and $m'$ current/future spatial tasks \\
                        \> \> \>\>   in $T$,  and a maximum budget $B_{max}$\\
                        \> {\bf Output:} an assignment instance set, $I_p$, with current/future workers/tasks\\
                        \> (1) \> \> $rlt = \emptyset$\\
                        \> (2) \> \> estimate the best number of subproblems, $g$, w.r.t. $W$ and $T$\\
                        \> (3) \> \> invoke {\sf MQA\_Decomposition}$(W, T, g)$, and obtain $g$ subproblems $M_s$\\
                        \> (4) \> \> \textbf{for} $s=1$ to $g$\\
                        \> (5) \> \> \> \textbf{if} the number of tasks in subproblem $M_s$ is greater than 1\\
                        \> (6) \> \> \> \> $rlt^{(s)}$ = {\sf MQA\_D\&C}($W (M_s)$, $T(M_s)$, $B_{max}$)\\
                        \> (7) \> \> \> \textbf{else}\\
                        \> (8) \> \> \> \> $rlt^{(s)} =$ {\sf MQA\_Greedy}($W (M_s)$, $T(M_s)$, $B_{max}$) \text{ }\\
                        \> (9) \> \> \textbf{for} $s=1$ to $g$\\
                        \> (10) \> \> \> find the next subproblem, $M_s$\\
                        \> (11) \> \> \> $rlt$ = {\sf MQA\_Merge} ($rlt$, $rlt^{(s)}$) \\
                        \> (12)\> \> \textbf{if} the upper bound of the traveling cost of $rlt$ $\leq B_{max}$\\
                        \> (13)\> \> \> \textbf{return} $rlt$\\
                        \> (14)\> \> \textbf{else}\\
                        \> (15)\> \> \> $I_p$ = {\sf MQA\_Budget\_Constrained\_Selection} ($rlt$, $B_{max}$)\\
                        \> (16)\> \> \textbf{return} $I_p$\}\\\\
                        {\bf Procedure {\sf MQA\_Budget\_Constrained\_Selection}} \{ \\
                        \> {\bf Input:} candidate pairs in $rlt$ and a maximum budget $B_{max}$\\
                        \> {\bf Output:} a worker-and-task assignment instance set, $I_p$, under the budget constraint\\
                        \> (17)\> \> $I_p=\emptyset$;\\
                        \> (18)\> \> \textbf{for} $k$ = 1 to $|rlt|$\\
                        \> (19)\> \> \> $S_p = \emptyset$\\
                        \> (20)\> \> \> \textbf{for} each assignment pair $\langle \tilde{w_i}, \tilde{t_j}\rangle \in rlt$\\
                        \> (21)\> \> \> \> \textbf{if} $\langle \tilde{w_i}, \tilde{t_j}\rangle$ has $lb\_\tilde{c_{ij}}$ greater than the remaining budget, then continue;\\
                        \> (22)\> \> \> \> \textbf{if} pair $\langle \tilde{w_i}, \tilde{t_j}\rangle$ cannot be pruned w.r.t. $S_p$ by Lemma \ref{lemma:dominate_pair}\\
                        \> (23)\> \> \> \> \> \textbf{if} pair $\langle \tilde{w_i}, \tilde{t_j}\rangle$ cannot be pruned w.r.t. $S_p$  by Lemma \ref{lemma:probabilistic_dominate}\\
                        \> (24)\> \> \> \> \> \> add $\langle \tilde{w_i}, \tilde{t_j}\rangle$ to $S_p$\\
                        \> (25)\> \> \> \> \> \> prune other candidate pairs in $S_p$ with $\langle \tilde{w_i}, \tilde{t_j}\rangle$\\
                        \> (26)\> \> \> select one best assignment pair $\langle \tilde{w_i}, \tilde{t_j}\rangle$ in $S_p$ satisfying the budget  \\
                        \>     \> \> \> \> constraint $B_{max}$ in Eq.~(\ref{eq:prune_high_travel_cost}) and with the highest probability  \\
                        \>     \> \> \> \> $M_{q, max} (\langle \tilde{w_i}, \tilde{t_j}\rangle)$ in Eq.~(\ref{eq:max_quality_prob})\\
                        \> (27)\> \> \> add the selected pair $\langle \tilde{w_i}, \tilde{t_j}\rangle$ to $I_p$\\
                        \> (28)\> \>  \textbf{return} $I_p$\}
                    \end{tabbing}
                \end{scriptsize}
            }
        \end{tabular}
    \end{center}\vspace{-3ex}
    \caption{\small The Divide-and-Conquer Algorithm.}
    \vspace{-5ex}
    \label{alg:dc}
\end{figure}

\section{Experimental Study}
\label{sec:exp}

\vspace{0.5ex}\noindent\textbf{Real/Synthetic Data Sets.} We tested
our proposed MQA processing algorithms over both real and
synthetic data sets. Specifically, for real data sets, we used two
check-in data sets, Gowalla \cite{cho2011friendship} and Foursquare
\cite{levandoski2012lars}. In the Gowalla data set, there are
196,591 nodes (users), with 6,442,890 check-in records. In the
Foursquare data set, there are 2,153,471 users, 1,143,092 venues,
and 1,021,970 check-ins, extracted from the Foursquare application
through the public API. Since most assignments happen in the same
cities, we extract check-in records within the area of San Francisco
(with latitude from \ang{37.709} to \ang{-122.503} and longitude
from \ang{37.839} to \ang{-122.373}), which has 8,481 Foursquare
check-ins, and 149,683 Gowalla check-ins for 6,143 users. We use
check-in records of Foursquare to initialize the location and
arrival time of tasks in the spatial crowdsourcing system, and
configure workers using the check-ins records from Gowalla. In other
words, we have 6,143 workers and 8,481 spatial tasks in the
experiments over real data. For simplicity, we first linearly map
check-in locations from Gowalla and Foursquare into a $[0,1]^2$ data
space, and then scale the arrival times of workers/tasks in the real
data accordingly. In order to generate workers/tasks for each time instance,
we evenly divide the entire time span of check-ins from two real
data sets into $R$ subintervals ($\in P$), and utilize check-ins in
each subinterval to initialize workers/tasks for the corresponding
time instance.

For synthetic data sets, we generate workers/tasks that join the
spatial crowdsourcing system for every time instance in the time instance set
$P$ as follows. We randomly produce locations of workers and tasks
in a 2D data space $[0, 1]^2$, following Gaussian $\mathcal{N}(0.5,
1^2)$ and Zipf distributions (skewness = 0.3), respectively. We also
test synthetic worker/task data with other distribution combinations
(e.g., Uniform-Zipf) and achieve similar results (see Appendix D).

For both real and synthetic data sets, we simulate the velocity
$v_i$ of each worker $w_i$ with Gaussian distribution $\mathcal{N}(\frac{v^-+v^+}{2},
(v^+-v^-)^2)$ within range
$[v^-, v^+]$, for $0< v^- \leq v^+ <1$, and the unit price $C$
w.r.t. the traveling distance $dist(\cdot, \cdot)$ varies from $5$
to $25$. Regarding the time constraint (i.e., the deadline $e_j$) of
spatial tasks $t_j$, we produce the arrival deadlines of tasks
within the range $[e^-, e^+]$, which are given by the remaining time
for workers to arrive at tasks after these tasks join the system.
Moreover, for the total quality score, $q_{ij}$, of worker-and-task
assignments, we randomly generate $q_{ij}$ with Gaussian
distributions within $[q^-, q^+]$. In our experiments, we also test
the size, $w$, of the sliding window with values from $1$ to $5$,
and the number, $R$, of time instances in $P$ from $10$ to
$25$.

\begin{table}[t]
	\begin{center}\vspace{-6ex}
		\caption{\small Experimental Settings.} \label{table2}\vspace{-2ex}
		{\small\scriptsize
			\begin{tabular}{l|l}
				{\bf \qquad \qquad \quad Parameters} & {\bf \qquad \qquad \qquad Values} \\ \hline \hline
				the size of sliding windows $w$ & 1, 2, \textbf{3}, 4, 5 \\
				the budget $B$  & 100, \textbf{200}, 300, 400, 500\\
				the quality range $[q^-, q^+]$  & [0.25, 0.5], [0.5, 1], \textbf{[1, 2]}, [2, 3], [3, 4]\\
				the deadline range $[e^-, e^+]$  & [0.25, 0.5], [0.5, 1], \textbf{[1, 2]}, [2, 3], [3, 4]\\
				the velocity range $[v^-, v^+]$   & [0.1, 0.2], \textbf{[0.2, 0.3]}, [0.3, 0.4], [0.4, 0.5]\\
				the unit price w.r.t. distance $C$ & 5, \textbf{10}, 15, 20\\
				the number, $R$, of time instances & 10, \textbf{15}, 20, 25 \\
				the number, $m$, of tasks  & 1K,  \textbf{3K}, 5K, 8K, 10K\\
				the number, $n$, of workers  & 1K, \textbf{3K}, 5K, 8K, 10K \\
				\hline
			\end{tabular}
		}
	\end{center}\vspace{-8ex}
\end{table}

\vspace{0.5ex}\noindent\textbf{Measures and Competitors.} We
evaluate the effectiveness and efficiency of our MQA processing
approaches, in terms of the overall quality score and the CPU time.
Specifically, the overall quality score is defined in
Eq.~(\ref{eq:tbc_sc}), which can measure the quality of the
assignment strategy, and the CPU time is given by the average time
cost of performing MQA assignments at each time instance.

In our MQA problem, regarding the effectiveness, we will compare
our MQA approaches with a straightforward method that conducts the
assignment over current and future time instances separately (i.e., without
prediction), in terms of the overall quality score. Moreover, we will
also compare our MQA approaches, the \textit{MQA greedy} (GREEDY) and \textit{MQA divide-and-conquer} (D\&C) algorithms, with a random (RANDOM) method (which randomly assigns workers to
spatial tasks under the budget constraint). Since RANDOM
does not take into account the quality of tasks, it is expected to
achieve worse quality than our MQA approaches (although it is
expected to be more efficient than MQA approaches).

\vspace{0.5ex}\noindent\textbf{Experimental Settings.} Table
\ref{table2} shows our experimental settings, where the default
values of parameters are in bold font. In subsequent experiments,
each time we vary one parameter, while setting others to their
default values. All our experiments were run on an Intel Xeon X5675
CPU @3.07 GHZ with 32 GB RAM in Java.

\vspace{-1ex}
\subsection{Effectiveness of the MQA Approaches}
\label{subsec:effectiveness}

\vspace{0.5ex}\noindent {\bf The Comparison of the Prediction
Accuracy.} We first evaluate the prediction accuracy of future
workers/tasks in our MQA approach, by comparing the estimated
numbers, $est$, of workers/tasks in cells with actual ones, $act$,
in terms of the relative error (i.e., defined as
$\frac{|est-act|}{act}$), where the size, $w$, of the sliding window
to do the prediction (via linear regressions) varies from 1 to 5. In Figure \ref{fig:relative_errors}, we present the average relative error of our grid-based prediction method, which is the result of dividing the summation of the relative errors of all the cells by the number of cells (e.g., 400 cells).  For both synthetic
(marked with S) and real data (marked with R), average relative errors for
different window sizes $w$ are not very sensitive to $w$. Only for
real data, the average relative error of predicting the number of workers
slightly increases with larger $w$ value. This is because the
distribution of workers changes quickly over time in real data,
which leads to larger prediction error by using wider window size
$w$. Nonetheless, average relative errors of predicting the number of
workers/tasks remain low (i.e., less than 5.5\%) over all
real/synthetic data for different window sizes $w$, which indicates
good accuracy of our grid-based prediction approach.

We also conducted experiments on the synthetic dataset with varying window size $w$ from 1 to 5 on three different workers distributions (e.g., Gaussian, Zipf and Uniform) to show the influence of the distributions of workers on accuracies. Due to space limitations, we put the results in Appendix F.

\begin{figure}[t!]\centering\vspace{-4ex}
	\scalebox{0.22}[0.22]{\includegraphics{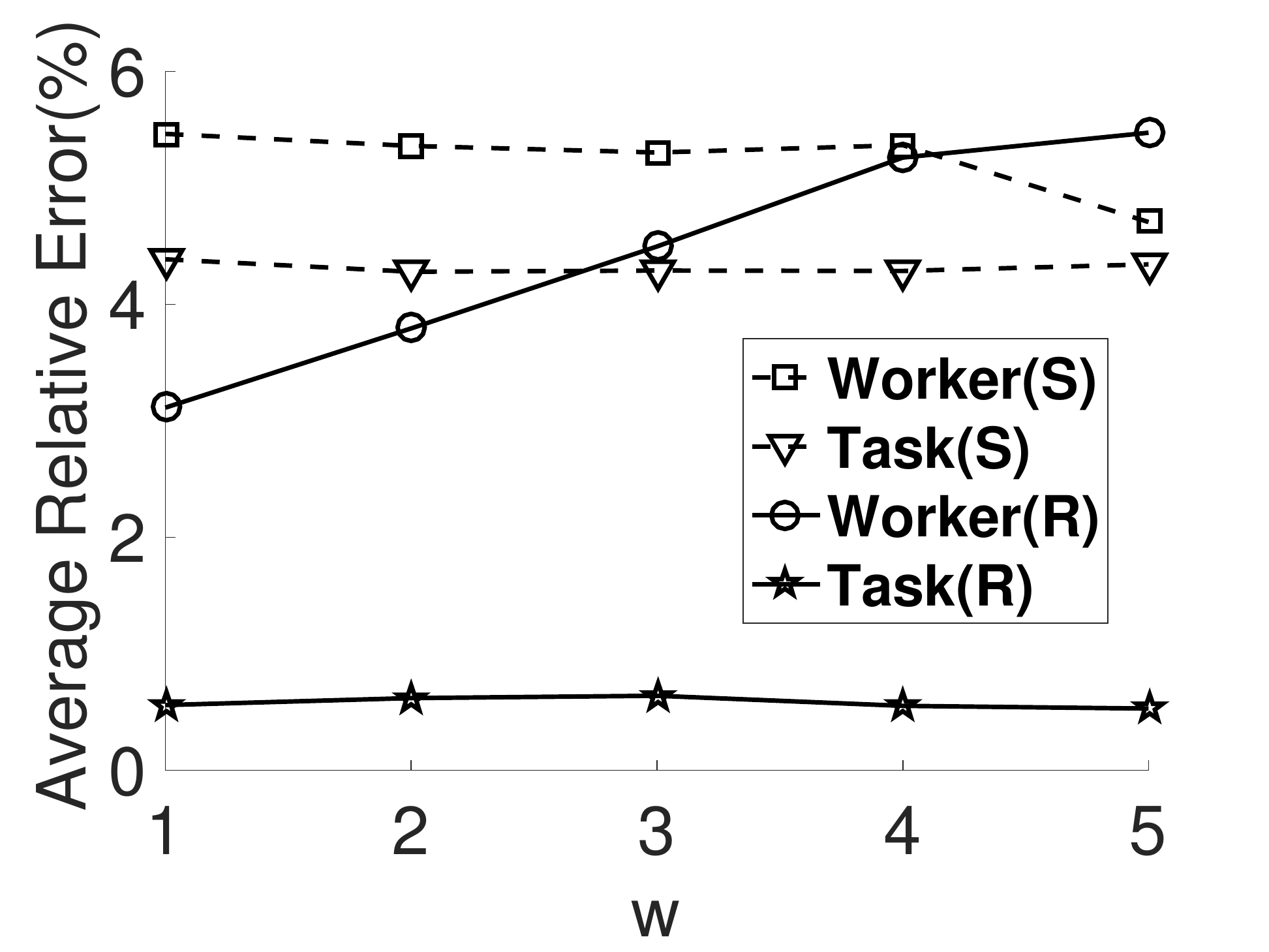}}\vspace{-2ex}
	\caption{\small The Prediction Accuracy vs. Window Size $w$.}\vspace{-5ex}
	\label{fig:relative_errors}
\end{figure}

\vspace{0.5ex}\noindent {\bf Comparison with a Straightforward
Method.} Figure \ref{fig:budget} compares the quality score of our
MQA approaches (with predicted workers/tasks) with that of the
straightforward method which selects assignments in current and next
time instances separately (without predictions), where budget $B$ varies
from $100$ to $500$. We denote MQA approaches with prediction as
GREEDY\_WP, D\&C\_WP, and RANDOM\_WP, and those without prediction
as GREEDY\_WoP, D\&C\_WoP, and RANDOM\_WoP, respectively.

In Figure \ref{subfig:b_score}, we can see that, the quality scores
of MQA with predictions (with solid lines) are higher than that of
MQA without prediction (with dash lines), for different budget
$B$. This indicates the effectiveness of our MQA approaches over
current/predicted workers/tasks, which can achieve better assignment
strategy than the ones without prediction (i.e., local optimality).
Moreover, either with or without predictions, D\&C incurs higher
quality scores than GREEDY (since D\&C is carefully designed to find
assignments with high quality scores via divide-and-conquer), and
RANDOM has the lowest score, which implies good quality of our
proposed assignment strategies.

Figure \ref{subfig:b_cpu} illustrates the average running time of
our GREEDY and D\&C approaches, compared with RANDOM, for each
time instance. In the figure, due to the prediction and merge costs,
D\&C\_WP requires the highest CPU time to solve the MQA problem,
which trades the efficiency for the accuracy. When the budget $B$
increases, the running time of D\&C decreases. This is because
larger budget $B$ leads to lower cost of selecting assignment pairs
under the budget constraint (i.e., lines 12-15 in procedure {\sf
MQA\_D\&C} of Figure \ref{alg:dc}). For other approaches, the time
cost remains low  (i.e., less than 5 seconds) for different $B$
values.

Due to space limitation, we put the results on varying other parameters in Appendix G.

\vspace{-1ex}
\subsection{Performance of the MQA Approaches}
\label{subsec:efficiency}

\vspace{0.5ex}\noindent {\bf The MQA Performance vs. the Range,
$[q^-, q^+]$, of Quality Score $q_{ij}$.} Figure \ref{fig:quality}
illustrates the experimental results on different ranges, $[q^-,
q^+]$, of quality score $q_{ij}$ from $[0.25,0.5]$ to $[3,4]$ on
real data. In Figure \ref{subfig:quality_score}, with the increase
of score ranges, quality scores of all the three approaches
increase. D\&C has higher quality score than GREEDY, which are both
higher than RANDOM. From Figure \ref{subfig:quality_cpu}, RANDOM is
the fastest (however, with the lowest quality score), since it
randomly selects assignments without considering the quality score
maximization. D\&C has higher running time than GREEDY (however,
higher quality scores than GREEDY). Nonetheless, the running times
of GREEDY remain low.

\vspace{0.5ex}\noindent {\bf The MQA Performance vs. the Range,
$[e^-, e^+]$, of Tasks' Deadlines $e_j$.} Figure \ref{fig:deadline}
shows the effect of the range, $[e^-, e^+]$, of tasks' deadlines
$e_j$ on the MQA performance over real data, where $[e^-, e^+]$
changes from $[0.25, 0.5]$ to $[2, 3]$. In Figure
\ref{subfig:deadline_score}, when the range $[e^-, e^+]$ becomes
larger, quality scores of all three approaches also increase. Since
a more relaxed (larger) deadline $e_j$ of a task $t_j$ can be
performed by more valid workers, it thus leads to higher quality
score (that can be achieved) and processing time (as confirmed by
Figure \ref{subfig:deadline_cpu}). Similar to previous results, D\&C
can achieve higher quality scores than GREEDY, and both of them
outperform RANDOM. Furthermore, GREEDY needs higher time cost than
RANDOM, and has lower running time than D\&C.

\begin{figure}[t!]
	\centering\vspace{-4ex}
	\subfigure[][{\scriptsize Quality Score}]{
		\scalebox{0.2}[0.2]{\includegraphics{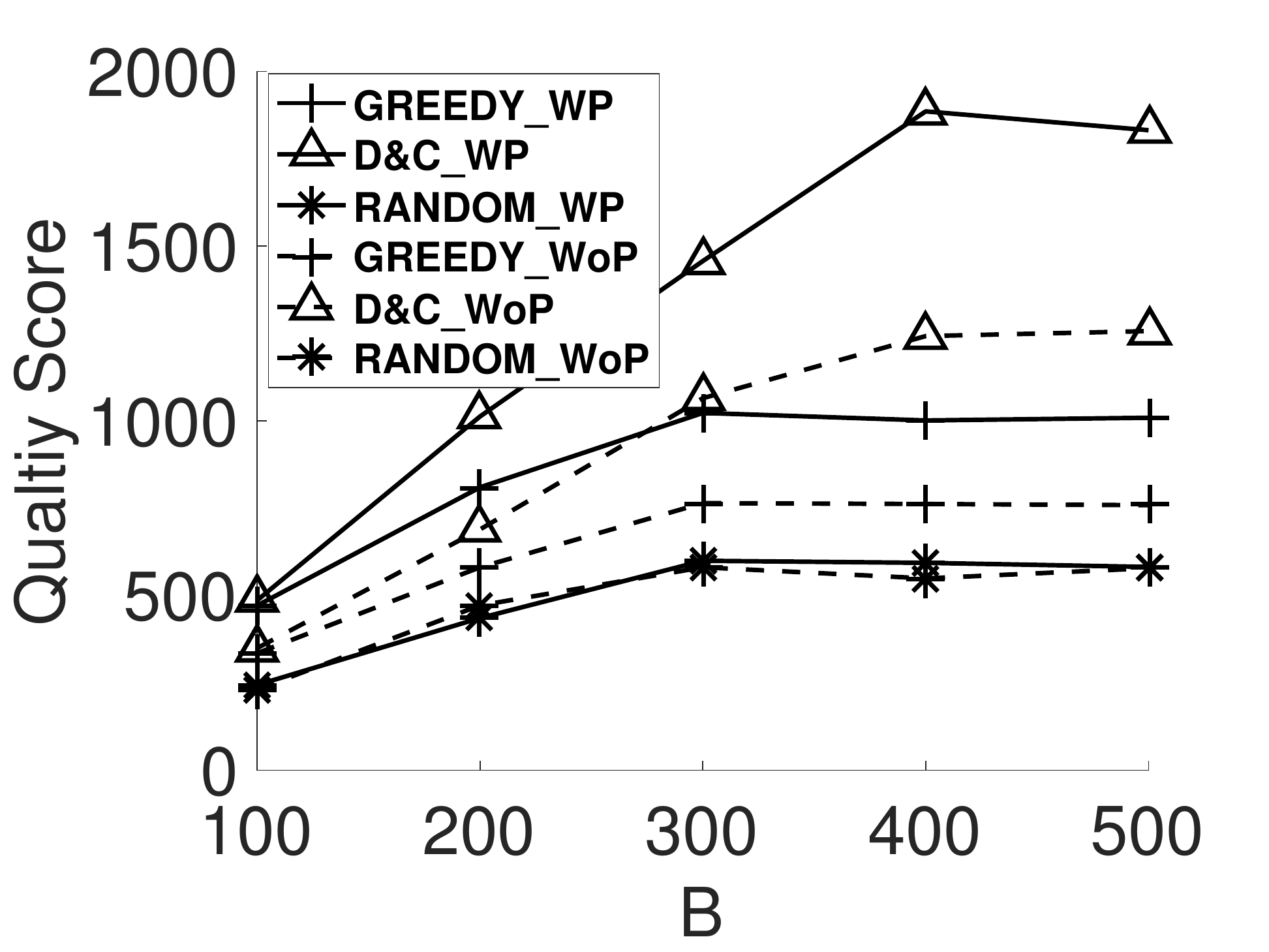}}
		\label{subfig:b_score}}
	\subfigure[][{\scriptsize Running Time}]{
		\scalebox{0.2}[0.2]{\includegraphics{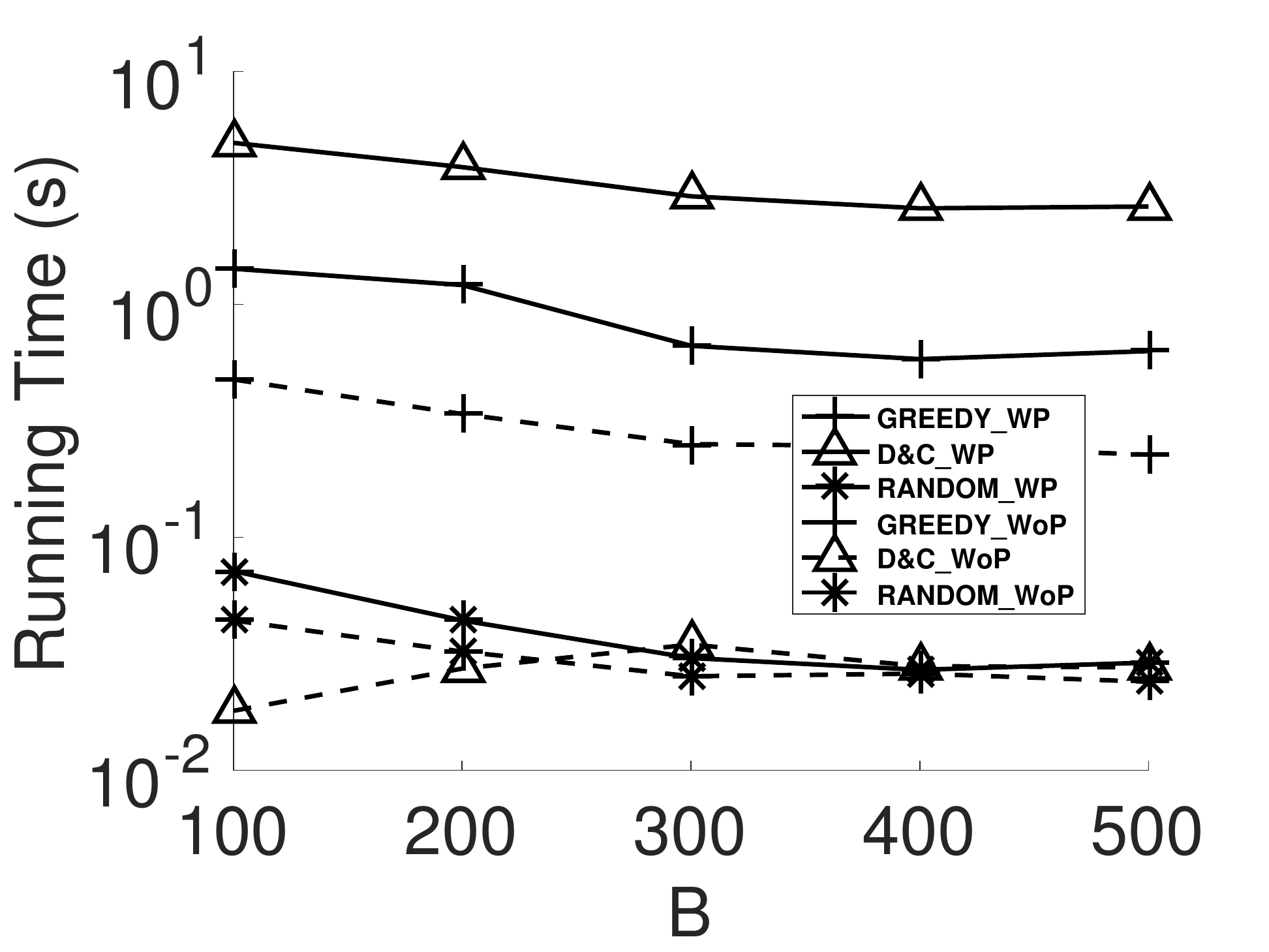}}
		\label{subfig:b_cpu}}\vspace{-2ex}
	\caption{\small Effect of the Budget $B$(Synthetic Data).}\vspace{-1ex}
	\label{fig:budget}
\end{figure}

\begin{figure}[t!]
	\centering\vspace{-2ex}
	\subfigure[][{\scriptsize Quality Score}]{
		\scalebox{0.2}[0.2]{\includegraphics{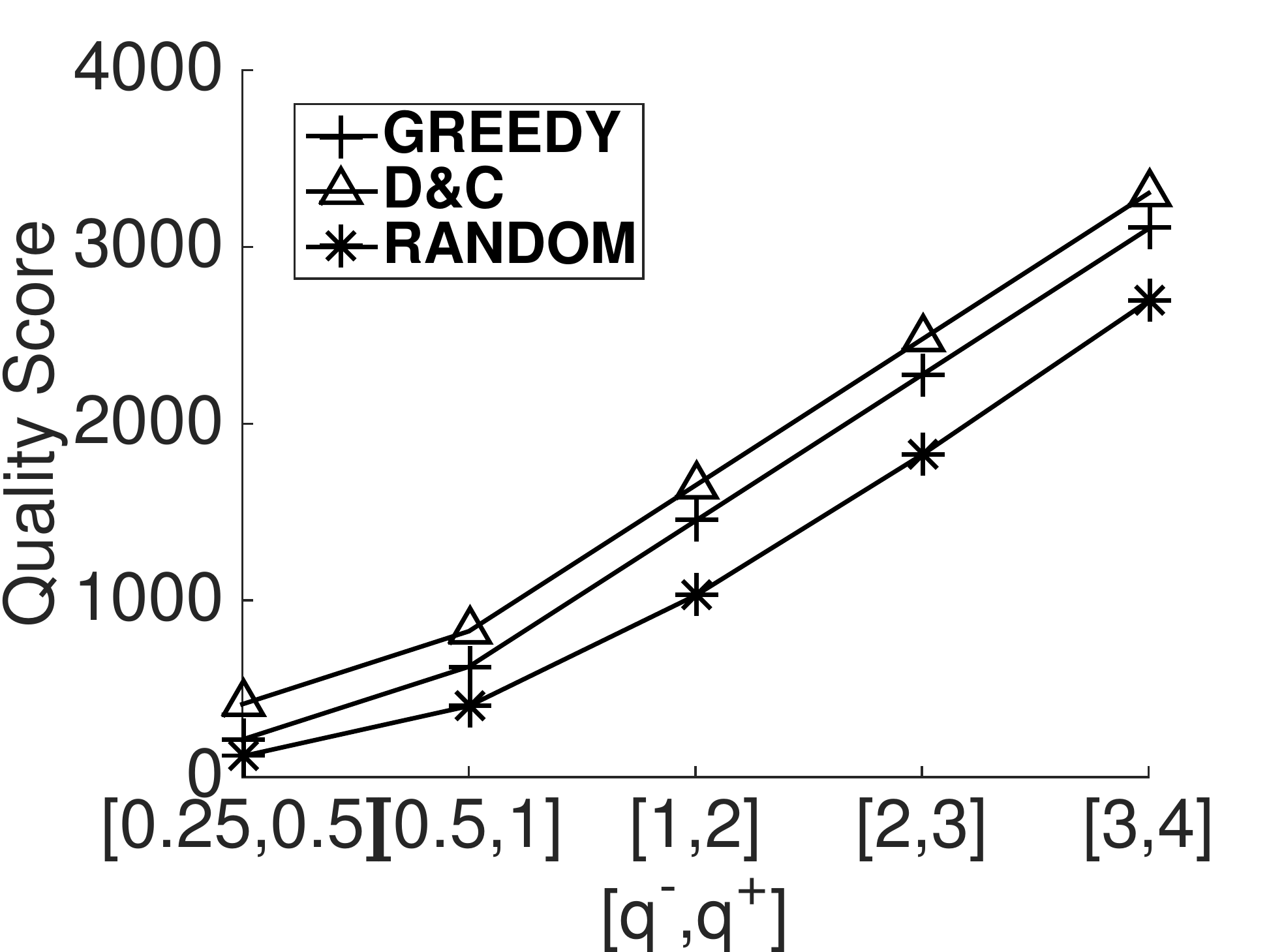}}
		\label{subfig:quality_score}}
	\subfigure[][{\scriptsize Running Time}]{
		\scalebox{0.2}[0.2]{\includegraphics{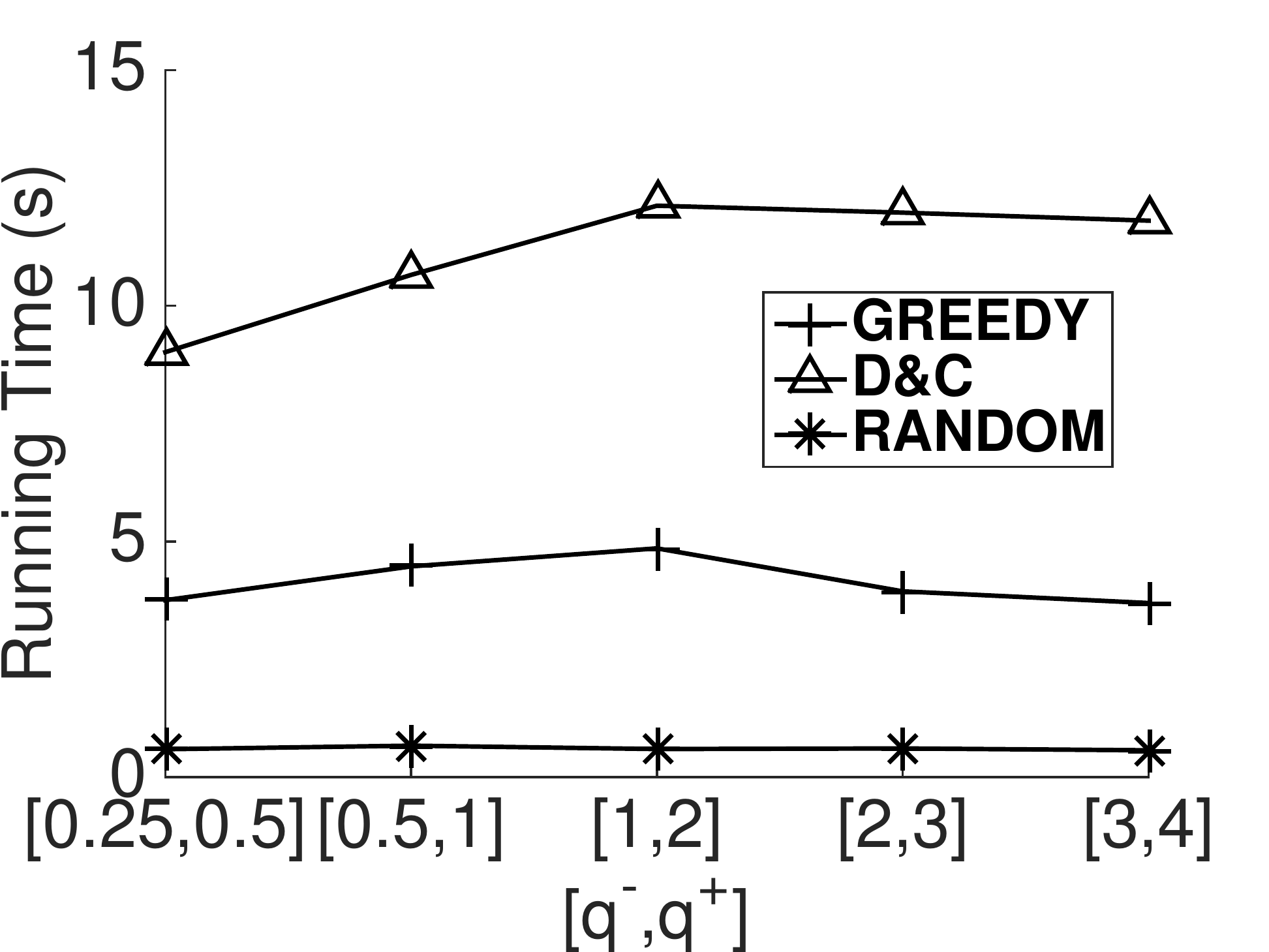}}
		\label{subfig:quality_cpu}}\vspace{-2ex}
	\caption{\small Effect of the Range of Quality Score $q_{ij}$  (Real Data).}\vspace{-5ex}
	\label{fig:quality}
\end{figure}

\vspace{0.5ex}\noindent {\bf The MQA Performance vs. the Range,
$[v^-, v^+]$, of Workers' Velocities $v_i$.} Figure
\ref{fig:velocity} presents the MQA performance with different
ranges, $[v^-, v^+]$, of workers' velocities $v_i$ from $[0.1, 0.2]$
to $[0.4, 0.5]$ on synthetic data, where default values are used for
other parameters. In Figure \ref{subfig:velocity_score}, when the
value range, $[v^-, v^+]$, of velocities of workers increases, the
total quality scores of all three approaches decrease. When the
range of velocities increases, some worker-and-task pairs with long
distances may become valid, which may quickly consume the available
budget $B$ and in turn reduce the number of selected pairs. Thus,
for larger workers' velocities, the resulting overall quality score
for the selected pairs in GREEDY and D\&C approaches decreases. With
different velocities, D\&C always has higher quality scores than
GREEDY, followed by RANDOM. In Figure \ref{subfig:velocity_cpu}, the
running times of GREEDY and D\&C increase for larger workers'
velocities, since there are more worker-and-task assignments to
process. Moreover, RANDOM has the smallest time cost (however, the
worst quality score), whereas GREEDY runs faster than D\&C, and
slower than RANDOM.

\vspace{0.5ex}\noindent {\bf The MQA Performance vs. the Number,
$m$, of Tasks .} Figure \ref{fig:task} varies the total number, $m$,
of spatial tasks for $R$ ($=15$ by default) time instance from $1K$ to
$10K$ on synthetic data sets, where other parameters are set to
their default values. From the experimental results, with the
increase of $m$, the total quality scores and time costs of all the
three approaches both increase smoothly. This is reasonable, since
more tasks lead to more valid pairs, which incur higher quality
score for selected assignment pairs, and require higher time cost to
process. D\&C can achieve higher quality score than GREEDY, which
indicates the effectiveness of our proposed D\&C approach. Moreover,
RANDOM has the worst quality score among the three tested
approaches.

\begin{figure}[t!]
	\centering\vspace{-4ex}
	\subfigure[][{\scriptsize Quality Score}]{
		\scalebox{0.2}[0.2]{\includegraphics{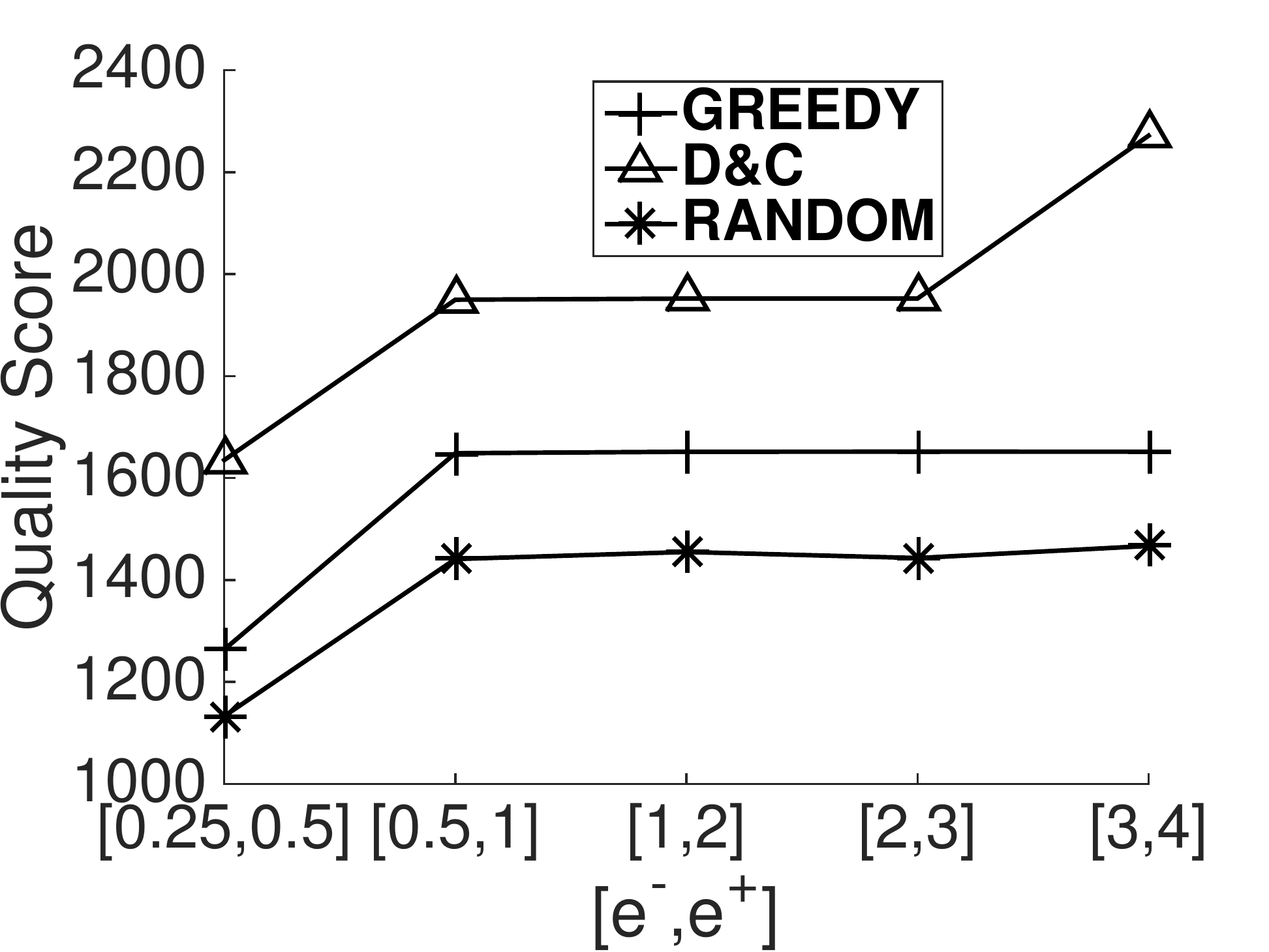}}
		\label{subfig:deadline_score}}
	\subfigure[][{\scriptsize Running Time}]{
		\scalebox{0.2}[0.2]{\includegraphics{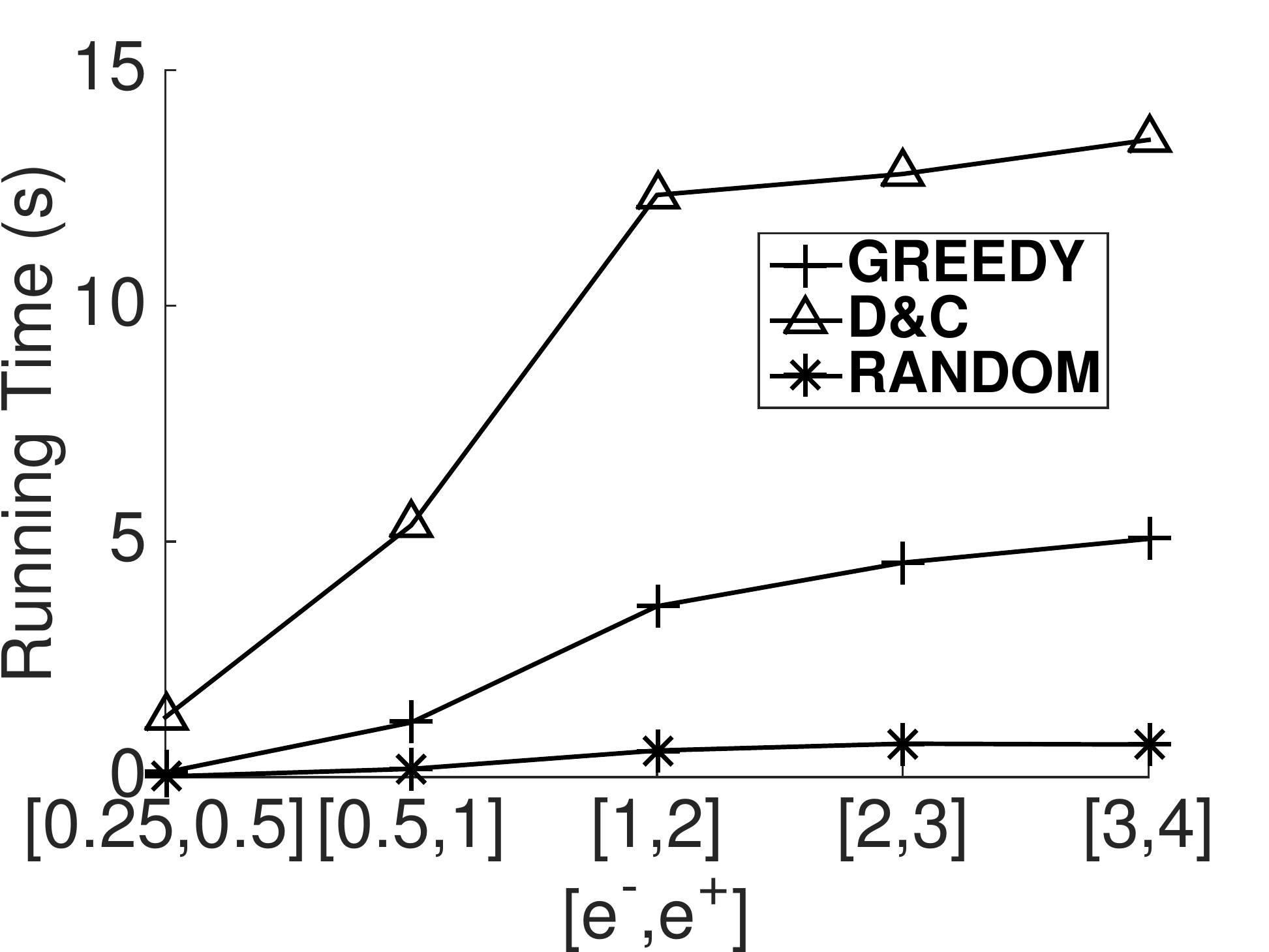}}
		\label{subfig:deadline_cpu}}\vspace{-2ex}
	\caption{\small Effect of the Range of Tasks' Deadlines $e_j$ (Real Data).}\vspace{-2ex}
	\label{fig:deadline}
\end{figure}

\begin{figure}[t]
	\centering
	\subfigure[][{\scriptsize Quality Score}]{
		\scalebox{0.2}[0.2]{\includegraphics{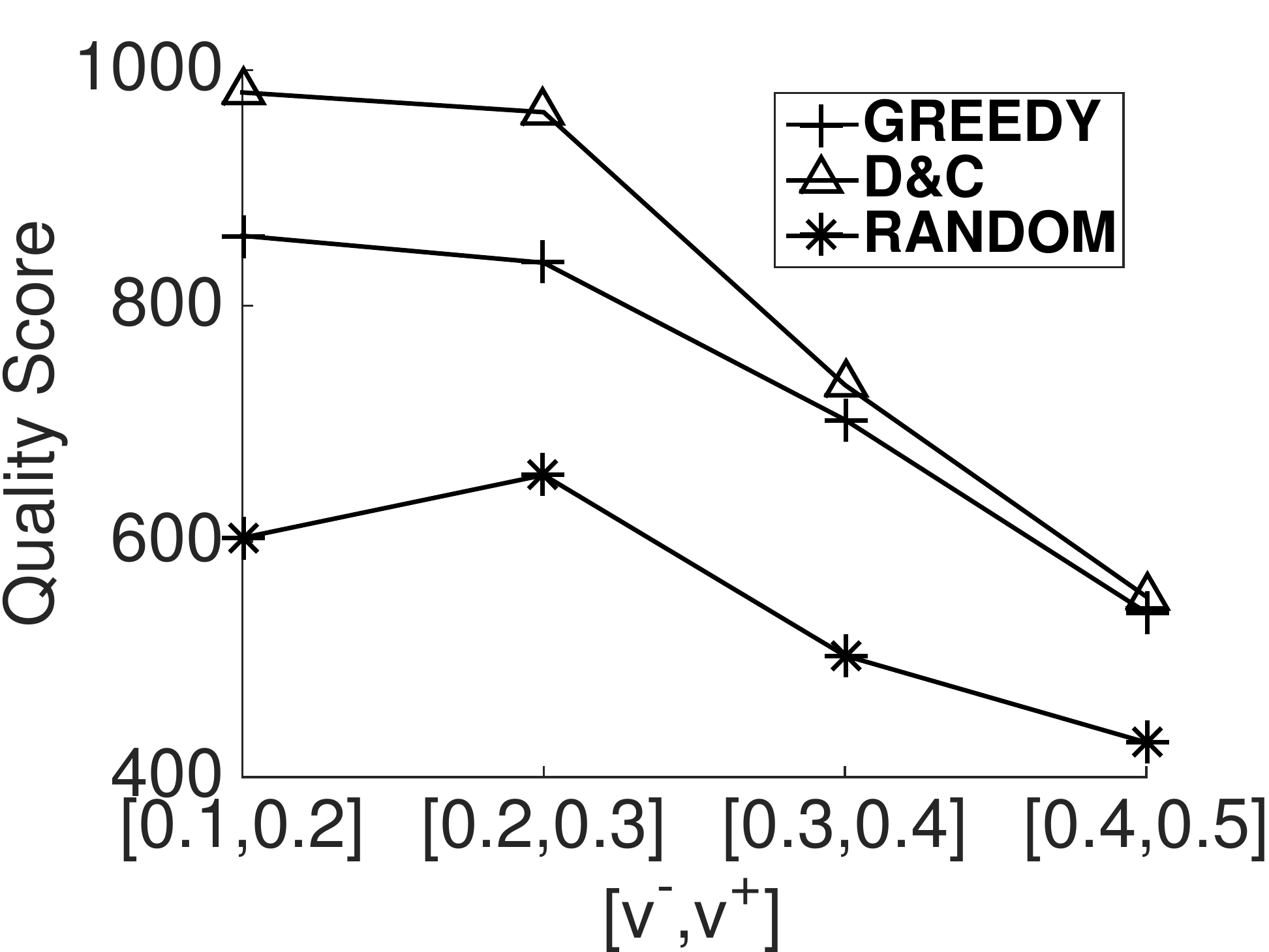}}
		\label{subfig:velocity_score}}
	\subfigure[][{\scriptsize Running Time}]{
		\scalebox{0.2}[0.2]{\includegraphics{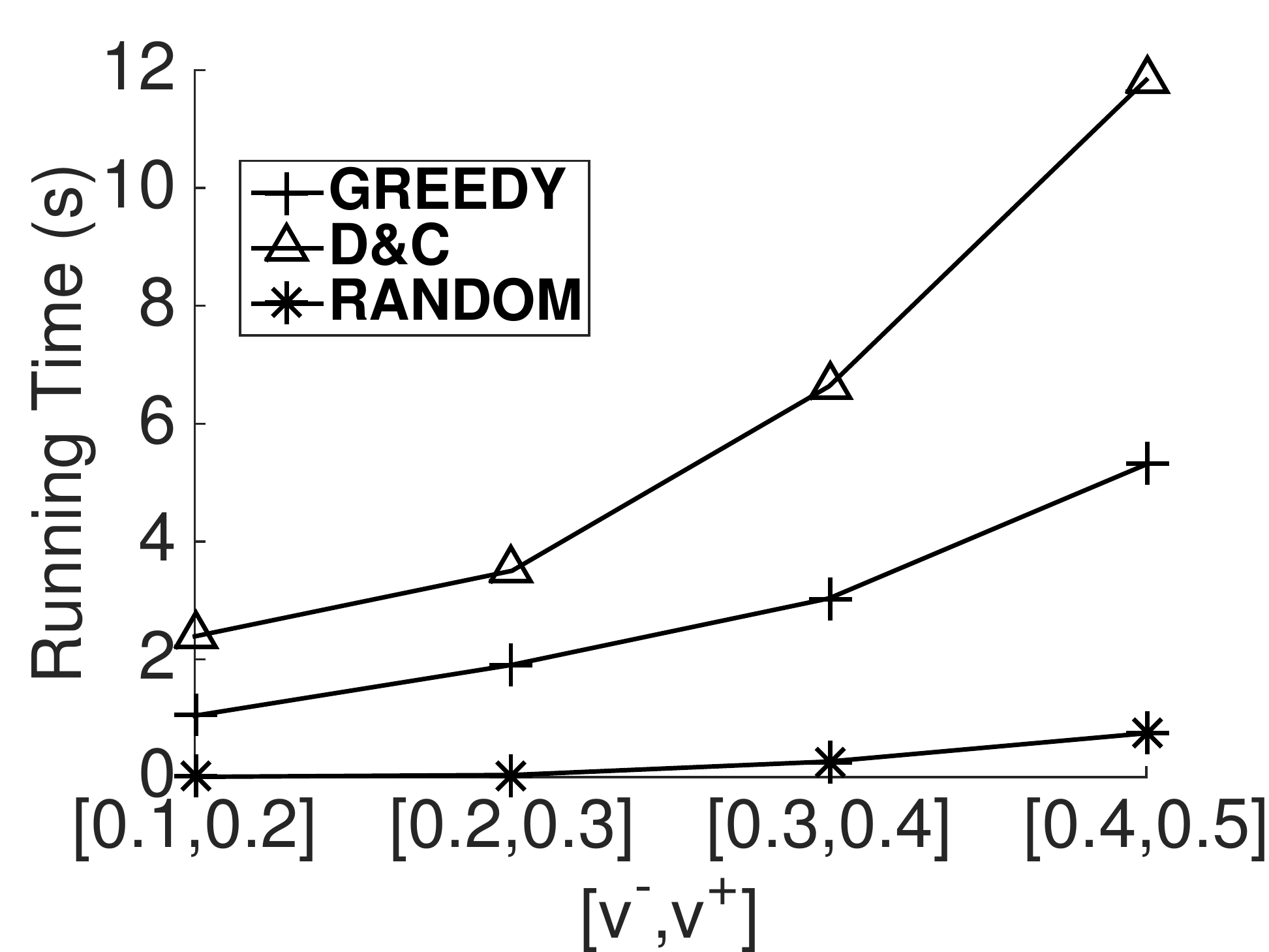}}
		\label{subfig:velocity_cpu}}\vspace{-2ex}
	\caption{\small Effect of the Range of Velocities $[v^-, v^+]$ (Synthetic Data).}\vspace{-5ex}
	\label{fig:velocity}
\end{figure}

\vspace{0.5ex}\noindent {\bf The MQA Performance vs. the Number,
$n$, of Workers.} Figure \ref{fig:worker} examines the effect of the
number, $n$, of workers for $R$ ($=15$ by default) time instances on the
MQA performance, where $n$ varies from $1K$ to $10K$ and other
parameters are set to default values. Similarly, both overall quality
scores and running times of three tested approaches increase, for
larger $n$ values. Our GREEDY and D\&C algorithms can achieve better
quality scores than RANDOM. Further, GREEDY has lower time costs than
D\&C. Nonetheless, the time cost of our approaches smoothly grows
with the increasing $n$ values, which indicates good scalability of
our MQA approaches.

Due to space limitations, please refer to the experimental 
results of the effects of the number,
$R$, of time instances and  the unit
price $C$ w.r.t. distance $dist(w_i, t_j)$ in Appendix E.

\section{related work}
\label{sec:related}

There are many previous studies in crowdsourcing systems
\cite{bulut2011crowdsourcing}, which usually allow
workers to accept task requests and accomplish tasks online.
However, these workers do not have to travel to some sites to perform
tasks. In contrast, the spatial crowdsourcing system
\cite{deng2013maximizing, kazemi2012geocrowd} requires workers
traveling to locations of spatial tasks, and completes tasks such as
taking photos/videos. For instance, some related studies
\cite{cornelius2008anonysense, kanhere2011participatory} studied the
problem of using smart devices (taken by workers) to collect data in
real-world applications.

Kazemi and Shahabi \cite{kazemi2012geocrowd} classified the spatial
crowdsourcing systems from two perspectives: workers' motivation and
publishing models. Regarding the workers' motivation, there are two
types, reward-based and self-incentivised, which inspires workers to
do tasks by rewards or volunteering, respectively. Moreover, based
on the publishing models, there are two modes, \textit{worker
selected tasks} (WST) \cite{deng2013maximizing} and
\textit{server assigned tasks} (SAT)
\cite{kazemi2012geocrowd,cheng2014reliable, liu2016cost, cheng2016multi}, in which tasks are accepted by workers or assigned by workers,
respectively. What is more, to handle the requests more quickly, existing studies \cite{tong2016exp, tong2016online, tong2017online} proposed methods to online process spatial crowdsourcing requests with quality guarantees.

Our MQA problem is reward-based and follows the SAT mode. Prior
studies in the SAT mode
\cite{kazemi2012geocrowd,cheng2014reliable, liu2016cost, cheng2016multi}
assigned existing workers to tasks in the spatial crowdsourcing
system with distinct goals, for example, maximizing the number of
the completed tasks on the server side \cite{kazemi2012geocrowd}, minimizing the traveling cost for completing a set of given tasks \cite{liu2016cost}, or the reliability-and-diversity score of
assignments \cite{cheng2014reliable}. In contrast, our MQA problem
in this paper has a different goal, that is, maximizing the overall
quality score of assignments and under the budget constraint. As a
result, we design specific \textit{MQA greedy} and \textit{MQA divide-and-conquer} algorithms for our
MQA problem, that maximize the quality score (under the budget
constraint) rather than other metrics (e.g., the
reliability-and-diversity score in \cite{cheng2014reliable}), which
cannot directly borrow from previous works.

Most importantly, different from the algorithms in existing studies \cite{cheng2014reliable, cheng2016multi} that deal with deterministic workers and tasks, our MQA approaches need to handle predicted workers and tasks, whose locations and other attributes (e.g., distances, quality scores, etc.) are variables (rather than fixed values). Thus, our
MQA approaches aims to design the assignment strategy over both
current and predicted workers/tasks, which has not been investigated
before. Therefore, we cannot directly apply previous techniques
(proposed for workers/tasks without prediction) to tackle our MQA
problem.

\begin{figure}[t!]
	\centering\vspace{-5ex}
	\subfigure[][{\scriptsize Quality Score}]{
		\scalebox{0.2}[0.2]{\includegraphics{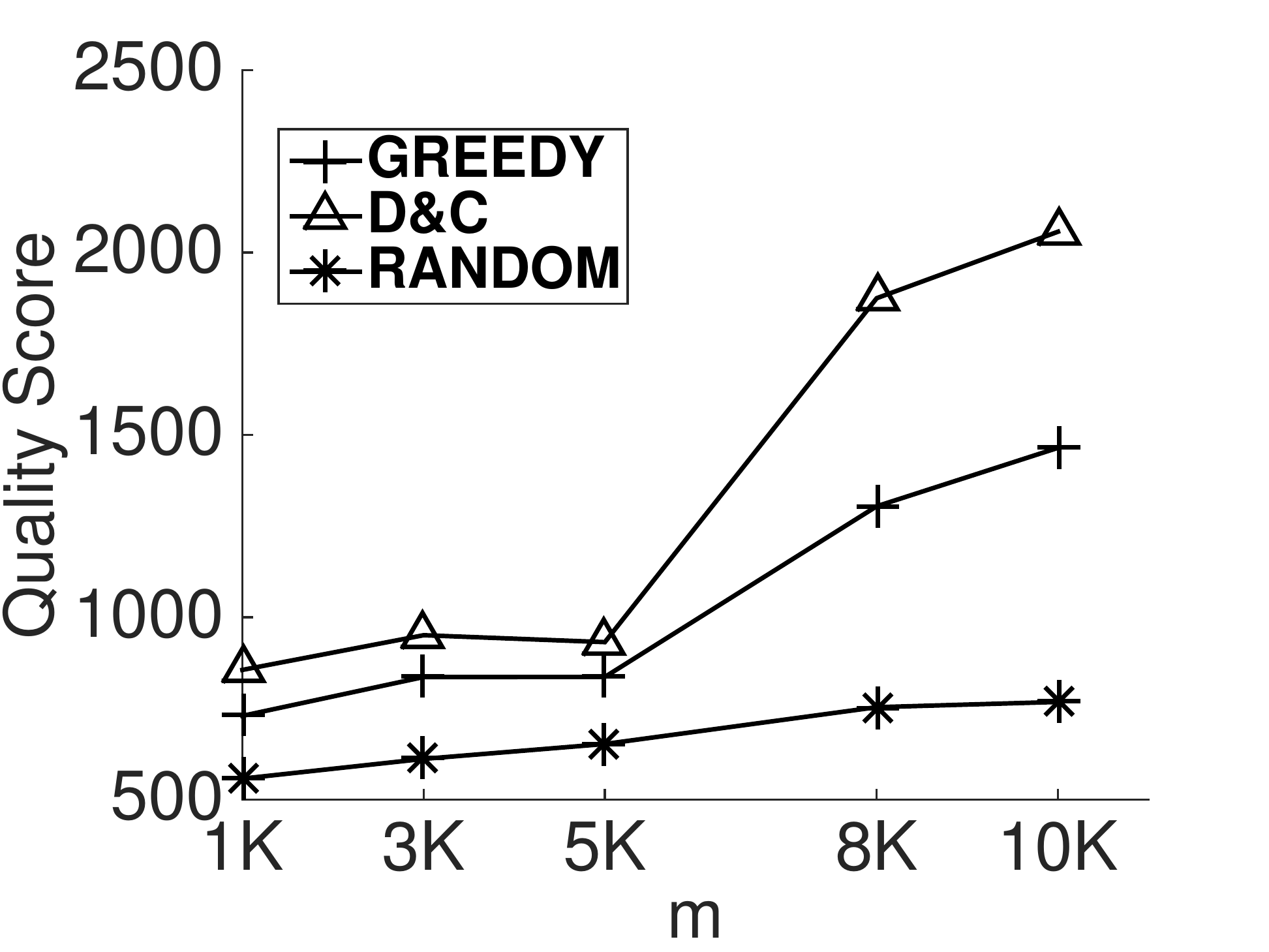}}
		\label{subfig:task_score}}
	\subfigure[][{\scriptsize Running Time}]{
		\scalebox{0.2}[0.2]{\includegraphics{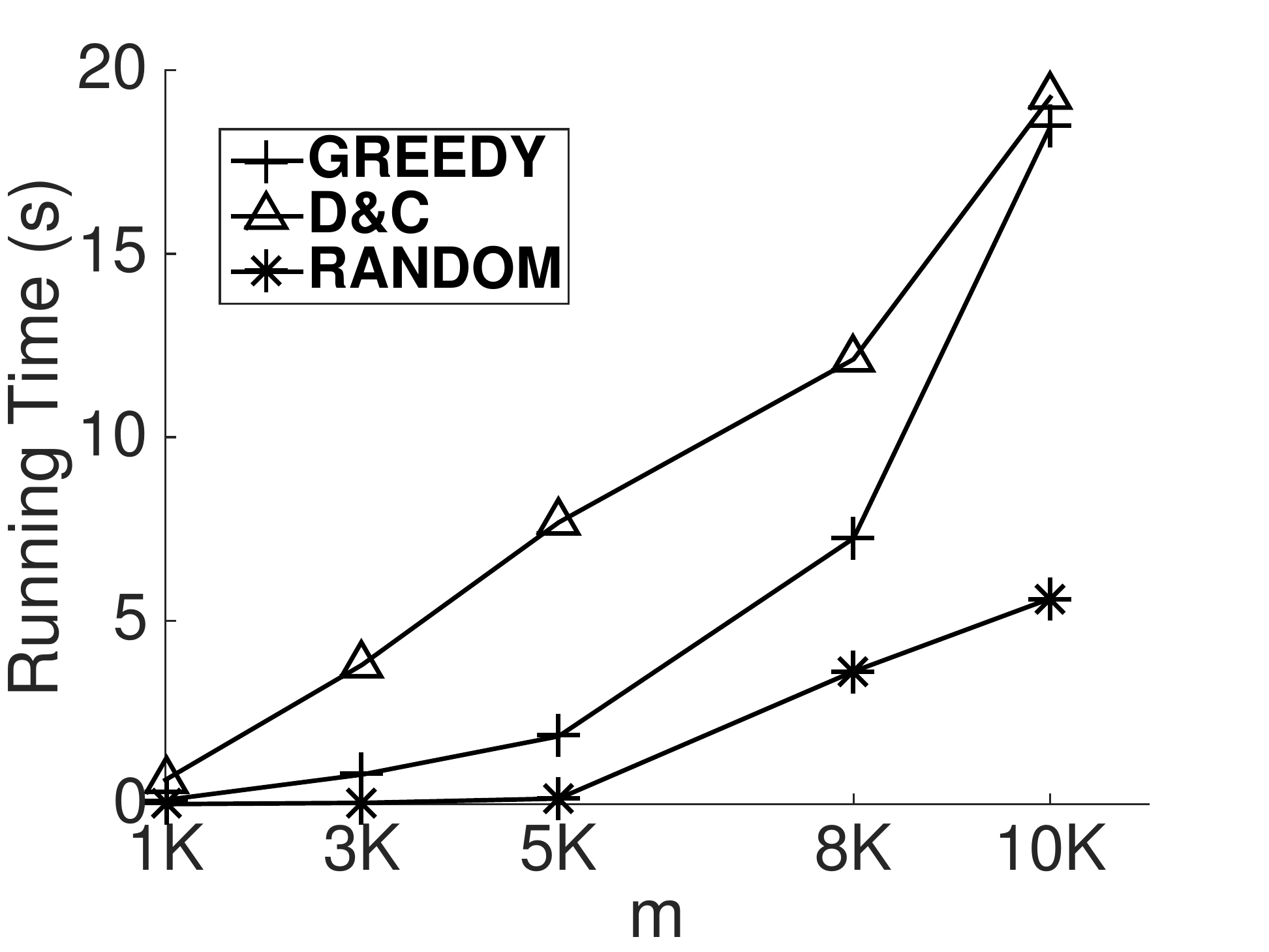}}
		\label{subfig:task_cpu}}\vspace{-2ex}
	\caption{\small Effect of the Number, $m$, of Tasks  (Synthetic Data).}
	\label{fig:task}\vspace{-4ex}
\end{figure}

\vspace{-1ex}
\section{conclusion}
\label{sec:conclusion}

In this paper, we studied a spatial crowdsourcing problem,
named \textit{maximum quality task assignment} (MQA), which
assigns moving workers to spatial tasks satisfying the budget
constraint of the traveling cost and achieving a high overall
quality score. In order to provide better global assignments, our MQA approaches are based on the assignment
selection strategy over both current and (predicted) future
workers/tasks. We propose an accurate prediction
approach to estimate both quality/location distributions of
workers/tasks. We prove that the MQA problem is NP-hard, and thus
intractable. Alternatively, we propose efficient heuristics, including \textit{MQA greedy} and \textit{MQA divide-and-conquer} approaches, to
obtain better global assignments. Extensive experiments have been conducted to
confirm the efficiency and effectiveness of our proposed MQA
processing approaches.

\section{Acknowledgment}

Peng Cheng and Lei Chen are supported in part by Hong Kong RGC Project N HKUST637/13, NSFC Grant No. 61328202, NSFC Guang Dong Grant No. U1301253, National Grand Fundamental Research 973 Program of China under Grant 2014CB340303, HKUST-SSTSP FP305, Microsoft Research Asia Gift Grant, Google Faculty
Award 2013 and ITS170.  Xiang Lian is supported by Lian Start Up No. 220981. Cyrus Shahabi's work has been funded in part by NSF grants IIS-1320149 and CNS-1461963.

\begin{figure}[t!]
	\centering\vspace{-5ex}
	\subfigure[][{\scriptsize Quality Score}]{
		\scalebox{0.2}[0.2]{\includegraphics{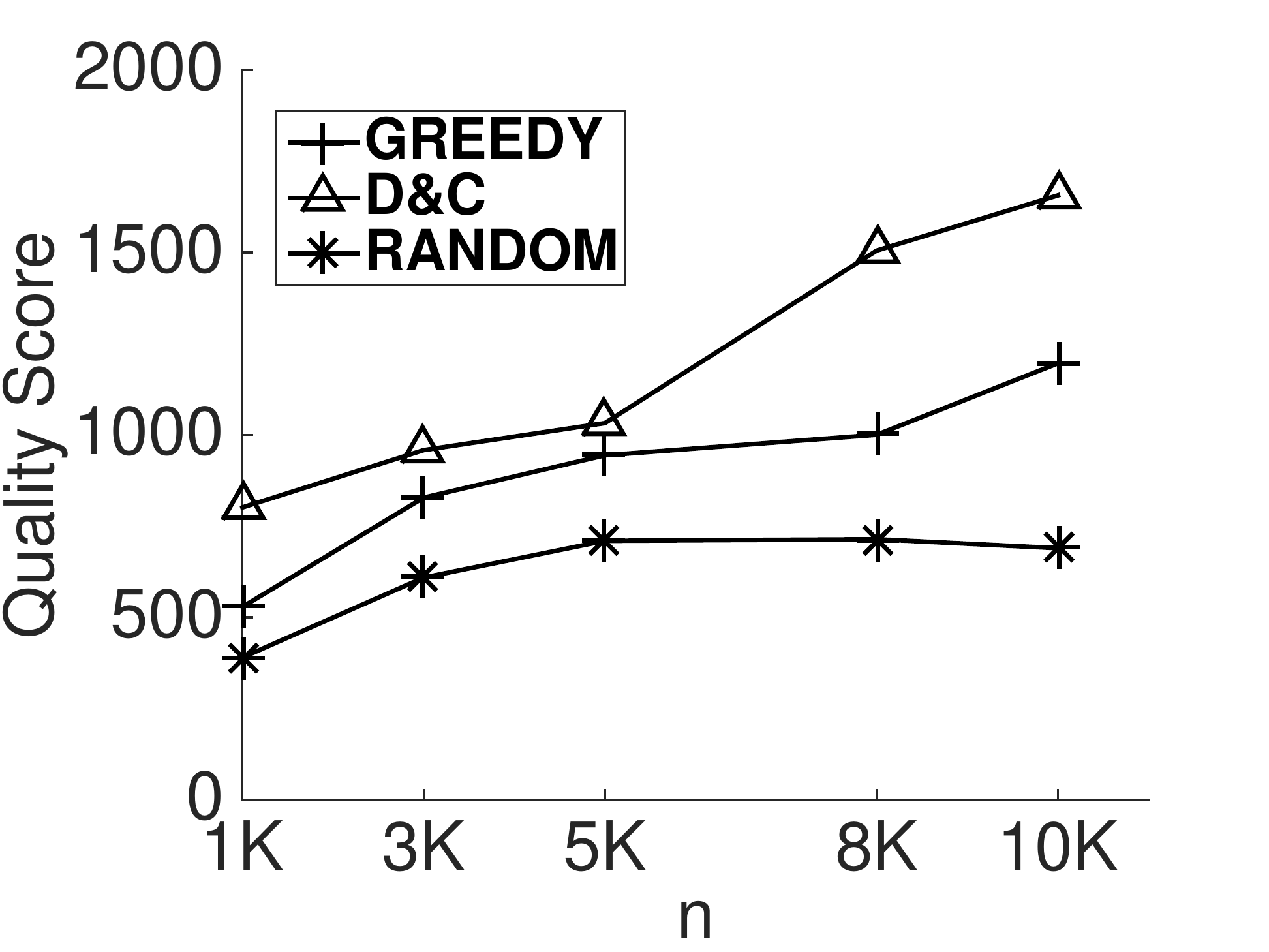}}
		\label{subfig:worker_score}}
	\subfigure[][{\scriptsize Running Time}]{
		\scalebox{0.2}[0.2]{\includegraphics{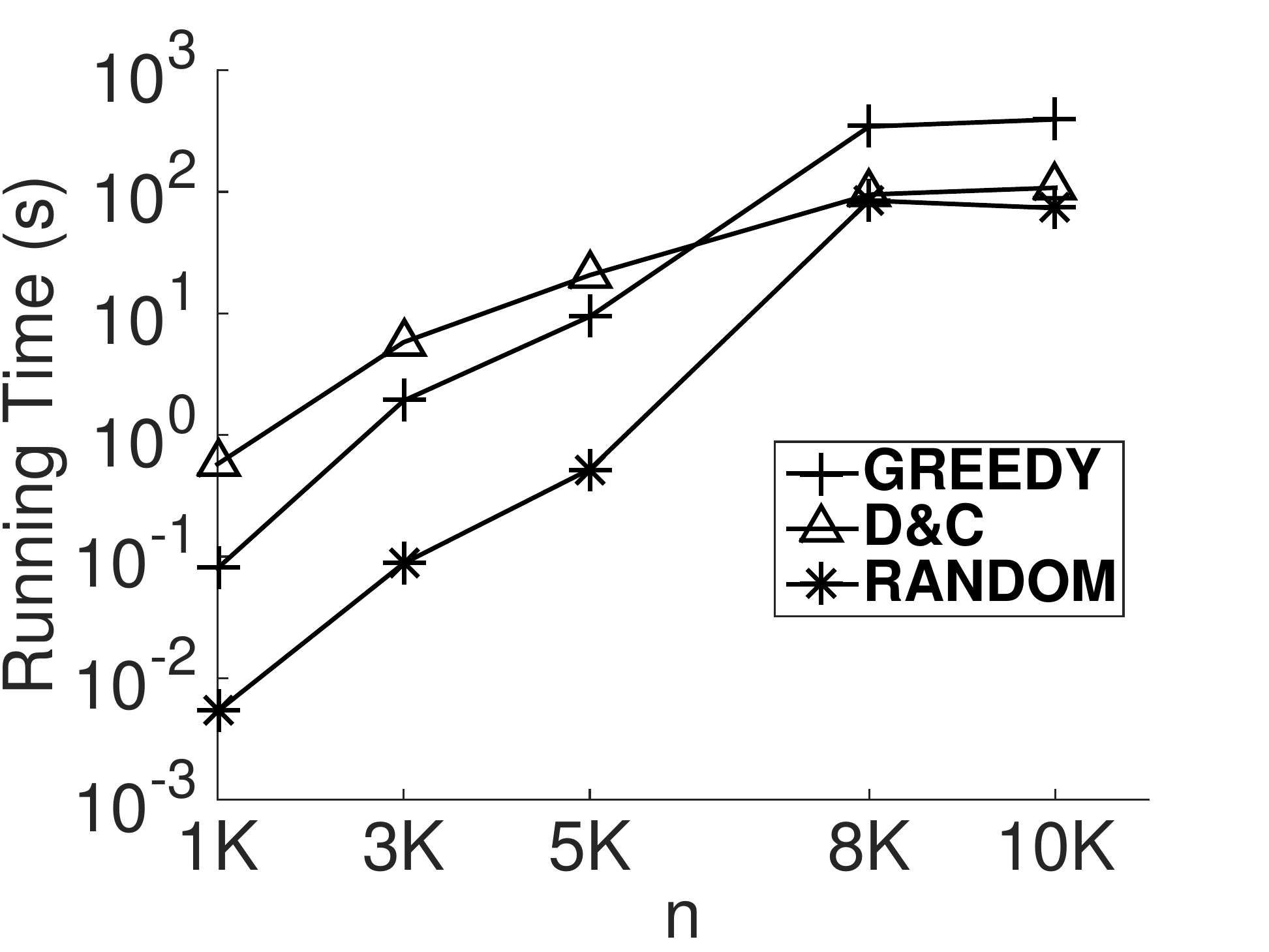}}
		\label{subfig:worker_cpu}}\vspace{-2ex}
	\caption{\small Effect of the Number, $n$, of Workers (Synthetic Data).}\vspace{-4ex}
	\label{fig:worker}
\end{figure}

\bibliographystyle{abbrv}
\let\xxx=\bibitem\def\bibitem{\par\vspace{-1mm}\xxx}
\bibliography{references/add}

\balance
\newpage

\appendix

\subsection{Proof of Lemma 2.1}
\begin{proof}
	We prove the lemma by a reduction from the 0-1 knapsack problem. A
	0-1 knapsack problem can be described as follows: Given a set, $C$,
	of $n$ items $a_i$ numbered from 1 up to $n$, each with a weight
	$w_i$ and a value $v_i$, along with a maximum weight capacity $W$,
	the 0-1 knapsack problem is to find a subset $C'$ of $C$ that
	maximizes $\sum_{a_i \in C'}v_i$ subjected to $\sum_{a_i \in C'}w_i
	\leq W$.

	For a given 0-1 knapsack problem, we can transform it to an instance
	of MQA as follows: at timestamp $p$, we give $n$ pairs of worker
	and task, such that for each pair of worker and task $\langle w_i,
	t_i\rangle$, the traveling cost $c_{ii} = w_i$ and the quality score
	$q_{ii} = v_i$. Also, we set the global budget $B=W$. At the same
	time, for any pair of worker and task $\langle w_i, t_j\rangle$
	where $t_j \neq t_i$, we make $c_{ij} \gg c_{ii}$ and $q_{ij} \leq
	q_{ii}$. Thus, in the assignment result, it is only possible to
	select worker-and-task pairs with same subscripts. Then, for this
	MQA instance, we want to achieve an assignment instance set $I_p$
	of some pairs of worker and task with same subscripts that maximizes
	the quality score $\sum_{\forall \langle w_i, t_i\rangle \in
		I_p}q_{ii}$ subjected to $\sum_{\forall \langle w_i, t_i\rangle \in
		I_p}c_{ii} \leq B$.
	
	Given this mapping, we can show that the 0-1 knapsack problem
	instance can be solved, if and only if the transformed MQA problem
	can be solved.
	
	This way, we can reduce 0-1 knapsack problem to the MQA problem.
	Since 0-1 knapsack problem is known to be NP-hard
	\cite{vazirani2013approximation}, MQA is also NP-hard, which
	completes our proof.\vspace{-1ex}
\end{proof}

\subsection{The Pseudo Code of the Grid-based Prediction Algorithm}

Figure \ref{alg:grid-predict} shows the pseudo code of our
grid-based prediction algorithm, namely {\sf MQA\_Prediction},
which predicts the number of workers/tasks in each cell, $cell_i$,
of the grid index by using the linear regression (lines 3-4), and
generates worker/task samples with the estimated sizes (lines 5-6).

    \begin{figure}[h]
        \begin{center}
            \begin{tabular}{l}
                \parbox{3.1in}{
                    \begin{scriptsize} \vspace{-4ex}
                        \begin{tabbing}
                            12\=12\=12\=12\=12\=12\=12\=12\=12\=12\=12\=\kill
                            {\bf Procedure {\sf MQA\_Prediction}} \{ \\
                            \> {\bf Input:} worker sets $\mathbb{W} = \{W_{p-w+1}, ..., W_p\}$ and task sets $\mathbb{T}=\{T_{p-w+1},$  \\
                            \>\>\>\> $...,$ $T_p\}$ at the latest $w$ time instances, and a grid index $\mathcal{I}$\\
                            \> {\bf Output:} future workers and tasks in $W_{p+1}$ and $T_{p+1}$, respectively, for the next\\
                            \> \> \> \> \>  time instance at timestamp $(p+1)$\\
                            \> (1) \> \> let $W_{p+1} = \emptyset$ and $T_{p+1} = \emptyset$\\
                            \> (2) \> \> \textbf{for} each cell $cell_i$ in $\mathcal{I}$\\
                            \> (3) \> \> \> estimate the future number, $|W_{p+1}^{(i)}|$, of workers in $cell_i$ by the \textit{linear regression}\\
                            \> (4) \> \> \> estimate the future number, $|T_{p+1}^{(i)}|$, of tasks in $cell_i$ by the \textit{linear regression}\\
                            \> (5) \> \> \> randomly generate $|W_{p+1}^{(i)}|$ worker samples for $cell_i$, and add them to $W_{p+1}$\\
                            \> (6) \> \> \> randomly generate $|T_{p+1}^{(i)}|$ task samples for $cell_i$, and add them to $T_{p+1}$\\
                            \> (7) \> \> \textbf{return} $W_{p+1}$ and $T_{p+1}$\\
                            \}
                        \end{tabbing}
                    \end{scriptsize}
                }
            \end{tabular}
        \end{center}\vspace{-2ex}
        \caption{\small The Grid-Based Worker/Task Prediction Algorithm.}

        \label{alg:grid-predict}
    \end{figure}

\subsection{Cost-Model-Based Estimation of the Best Number of the Decomposed Subproblems}
\label{subsec:DC_cost_model}

\noindent {\bf The Cost, $F_D$, of Decomposing Subproblems.} From
Algorithm {\sf MQA\_Decomposition} (in Figure
\ref{alg:decomposing}), we first need to retrieve all valid
worker-and-task assignment pairs (line 3), whose cost is $O(m'\cdot
n')$, where $m'$ and $n'$ are the numbers of both current/future
tasks and workers, respectively. Then, we will divide each problem
into $g$ subproblems, whose cost is given by $O(m'\cdot g+m')$ on
each level. For level $k$, we have $m'/g^k$ tasks in each subproblem
$M^{(k)}_i$. We will further partition it into $g$ more subproblems,
$M^{(k+1)}_j$, recursively, and each one will have $m'/g^{k+1}$
tasks. In order to obtain $m'/g^{k+1}$ tasks in each subproblem
$M^{(k+1)}_j$, we first need to find the anchor task, which needs
$O(m'/g^{k})$ cost, and then retrieve the rest tasks, which needs
$O(m/g^{k+1})$ cost. Moreover, since we have $g^{k+1}$ subproblems
on level $(k+1)$, the cost of decomposing tasks on level $k$ is
given by $O(m'\cdot g+m')$.

Since there are totally $log_g(m')$ levels, the total cost of
decomposing the MQA problem is given by:

{\scriptsize
    $$F_D=  m'\cdot n+(m' \cdot g+m')\cdot log_g(m').$$
    \vspace{-3ex}
}

\begin{figure}[ht!]\centering\vspace{-2ex}
	\scalebox{0.31}[0.31]{\includegraphics{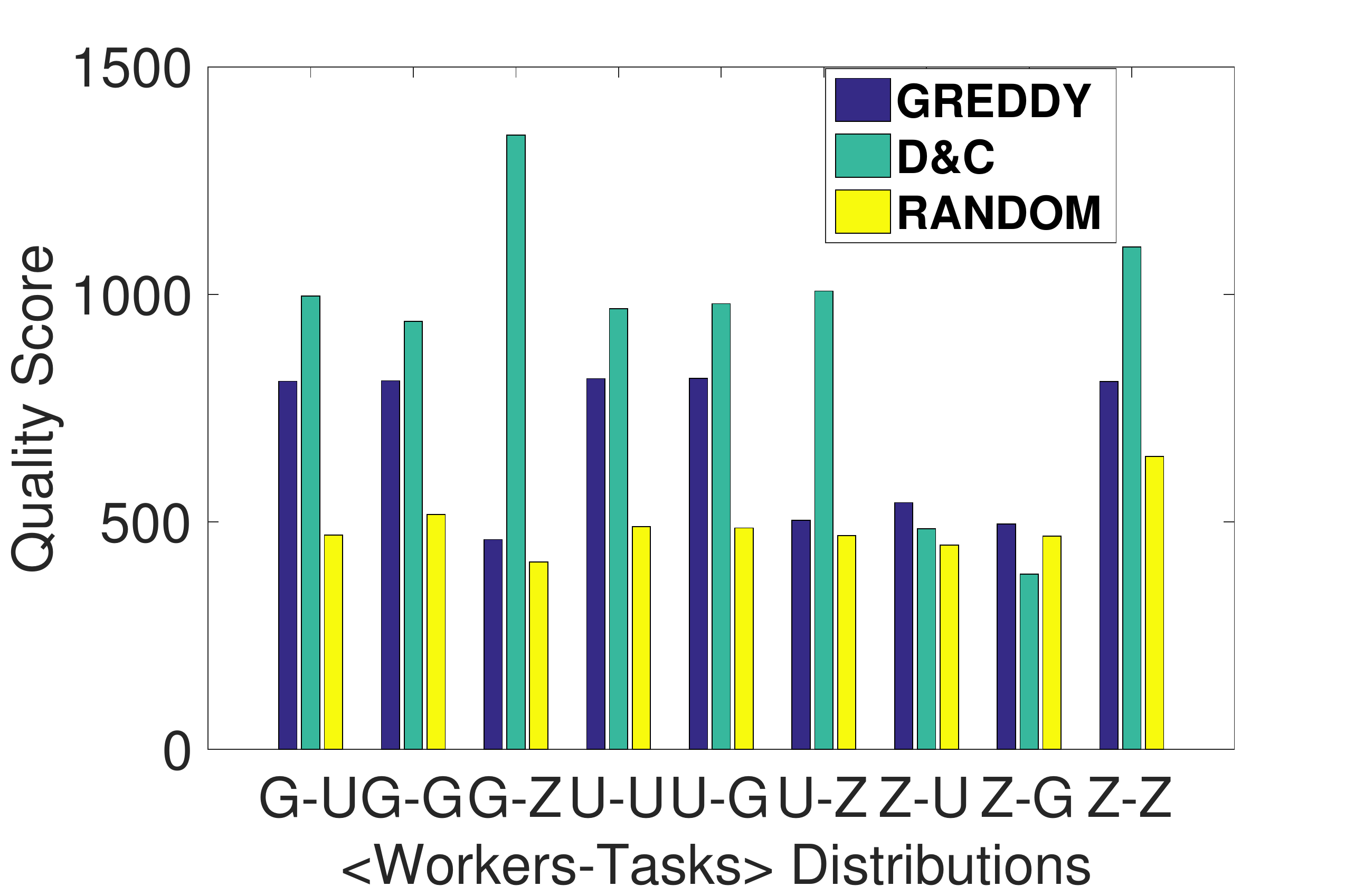}}
	\caption{\small Effect of Distributions of Workers  and Tasks on Quality Score (Synthetic Data).}
	\label{fig:distribution_score}
\end{figure}

\begin{figure}[ht!]\centering\vspace{-2ex}
	\scalebox{0.29}[0.29]{\includegraphics{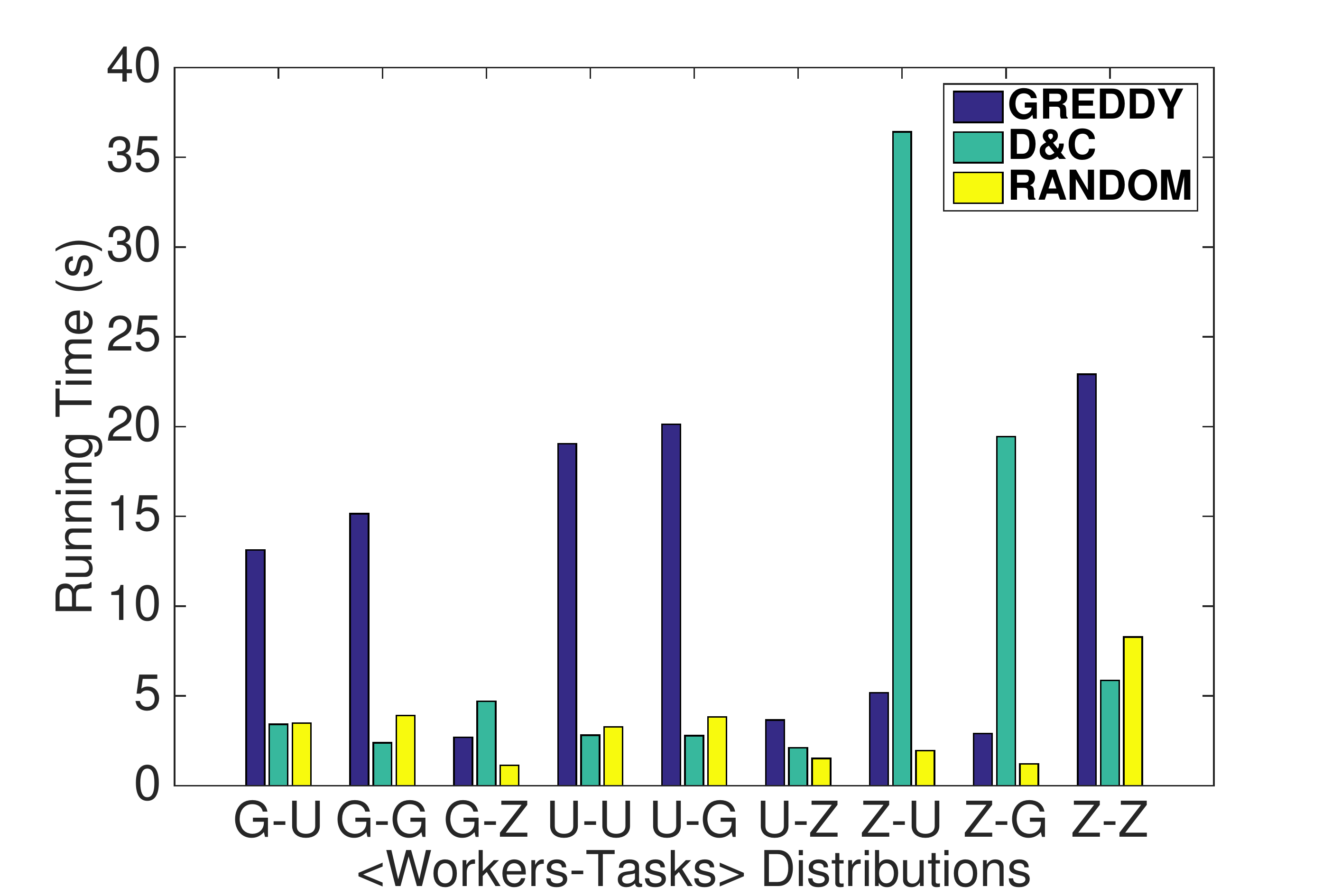}}
	\caption{\small Effect of Distributions of Workers  and Tasks on Running Time (Synthetic Data).}
	\label{fig:distribution_cpu}
\end{figure}

\noindent {\bf The Cost, $F_C$, of Recursively Conquering
    Subproblems.}

Let function $F_C(x)$ be the total cost of recursively
conquering a subproblem, which contains $x$ spatial tasks. Then, we
have the following recursive function:

{\scriptsize
    $$F_C(m') = g\cdot F_C(\left\lceil \frac{m'}{g}\right\rceil).$$
    \vspace{-3ex}
}

Assume that $deg_t$ is the average number of tasks in the assignment
instance set $I_p$. Then, the base case of function $F_C(x)$ is the
case that $x=1$, in which we greedily select one worker-and-task
pair from $deg_t$ pairs to achieve the highest quality score (via
the \textit{greedy algorithm}). Thus, we have $F_C(1) = 2deg_t^2$.

From the recursive function $F_C(x)$ and its base case, we can
obtain the total cost of the recursive invocation on levels from 1
to $log_g(m)$ below:

{\scriptsize
    $$\sum_{k = 1}^{\log_g(m')}F_c(m'/g^k) =
    \frac{2(m'-1)deg_t^2}{g-1}.$$ \vspace{-3ex}
}

\noindent {\bf The Cost, $F_M$, of Merging Subproblems.} Next, we
provide the cost, $F_M$, of merging subproblems by resolving
conflicts. For level $k$, we have $g^k$ subproblems and $m'/g^k$
tasks in each subproblem $M^{(k)}_i$. When merging the result of one
subproblem $M^{(k)}_i$ to the current maintained assignment instance
set $I_p$, there are at most $m'/g^k$ conflicted workers as each
task is only assigned with one worker. In addition, for level $k$,
we need to merge the results of $(g^k - 1)$  subproblems to the
current maintained assignment instance set. Moreover, when resolving
conflict of one worker, we may need to greedily pick one available
worker from $deg_t$ workers, which costs $2deg_t^2$.

Therefore, the worst-case cost of resolving conflicts during resolving conflicts of workers can be given by:
{\scriptsize
    $$F_M = \sum_{k=1}^{\log_g(m')}2deg_t^2(g^k - 1)\frac{m}{g^k}= 2deg_t^2(\frac{m\log m}{\log g} - \frac{g(m-1)}{g-1}).$$\vspace{-3ex}
}

\noindent {\bf The Cost, $F_B$, of the Budget Adjustment on
    Subproblems.} Then, we provide the cost, $F_B$, of the budget
adjustment in line 15 of the D\&C algorithm (in Figure
\ref{alg:dc}), or lines 17-28 in procedure {\sf
    MQA\_Budget\_Constrained\_Selection} (Figure
\ref{alg:conflict_reconcile}). Since each task is assigned with at
most one worker, the number of candidate pairs is at most same
as the number of tasks. For a subproblem with $m'/g^k$ tasks on
level $k$, lines 20-25 of Figure \ref{alg:conflict_reconcile} need
at most $(m'/g^k) ^2$ cost.  In addition, line 26 needs $|S_p|$
cost, which is also at most $(m'/g^k)^2$ cost. There are at most
$(m'/g^k)$ iterations, each of which selects at least one pair.

Therefore, the cost of the budget adjustment while merging
subproblems can be given by:\vspace{-3ex}

{\scriptsize
    \begin{eqnarray}
    F_B =
    \sum_{k=0}^{\log_g(m')}g^k\cdot2(\frac{m'}{g^k})^3=\sum_{k=0}^{\log_g(m')}
    \frac{2m'^2}{g^{2k}}= \frac{2g^2(m'^2-1)}{g^2-1}.
    \end{eqnarray}\vspace{-2ex}
}

\noindent {\bf The Total Cost of the $g$-D\&C Approach.} The total
cost, $cost_{D\&C}$, of the D\&C algorithm can be given by summing
up four costs, $F_D$, $F_C$, $F_M$, and $F_B$. That is, we
have:

{\scriptsize
    \begin{eqnarray}
    cost_{D\&C} &=& F_D + \sum_{k = 1}^{log_g(m')}F_c(m'/g^k) + F_M + F_B. \label{eq:D&C_cost}
    \end{eqnarray}
    \vspace{-3ex}
}

We take the derivation of $cost_{D\&C}$ (given in
Eq.~(\ref{eq:D&C_cost})) over $g$, and let it be 0. In particular,
we have:

{\scriptsize
    \begin{eqnarray}
    \frac{\partial cost_{D\&C}}{\partial g}
    &=&\frac{m'\log m'(g\log g - g - 1 - 2deg_t^2)}{g\log^2 (g)} \notag\\
    && -\frac{4g(m'^2-1)}{(g^2-1)^2} =0.
    \end{eqnarray}\vspace{-3ex}
}

We notice that when $g=2$, $\frac{\partial cost_{D\&C}}{\partial g}$
is much smaller than 0 but increases quickly when $g$ grows. In
addition, $g$ can only be an integer. Then we can try the integers
like 2, 3, and so on, until $\frac{\partial cost_{D\&C}}{\partial
    g}$ is above 0.

\begin{figure}[ht!]
	\centering\vspace{-2ex}
	\subfigure[][{\scriptsize Quality Score}]{
		\scalebox{0.2}[0.2]{\includegraphics{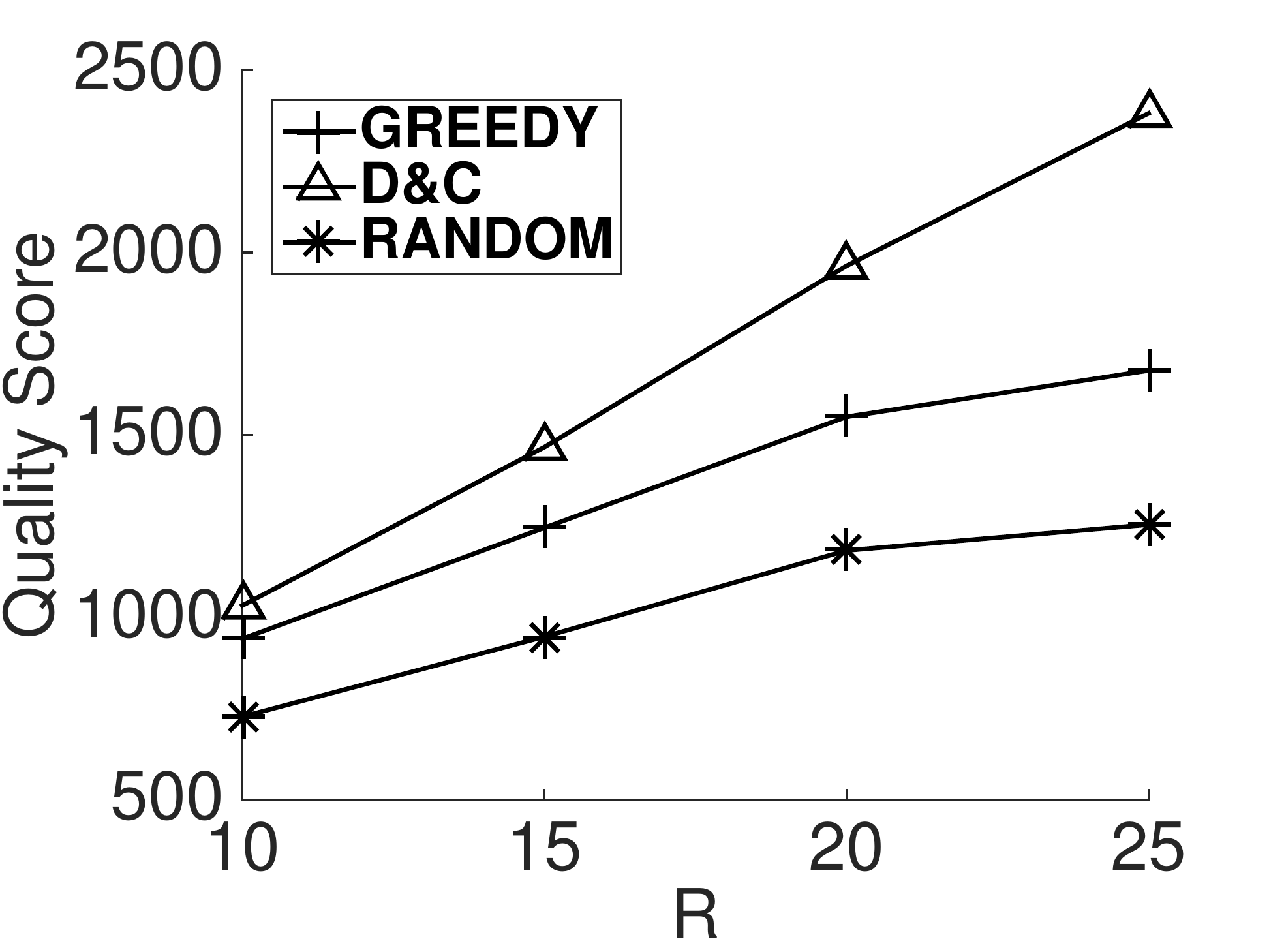}}
		\label{subfig:r_score}}
	\subfigure[][{\scriptsize Running Time}]{
		\scalebox{0.2}[0.2]{\includegraphics{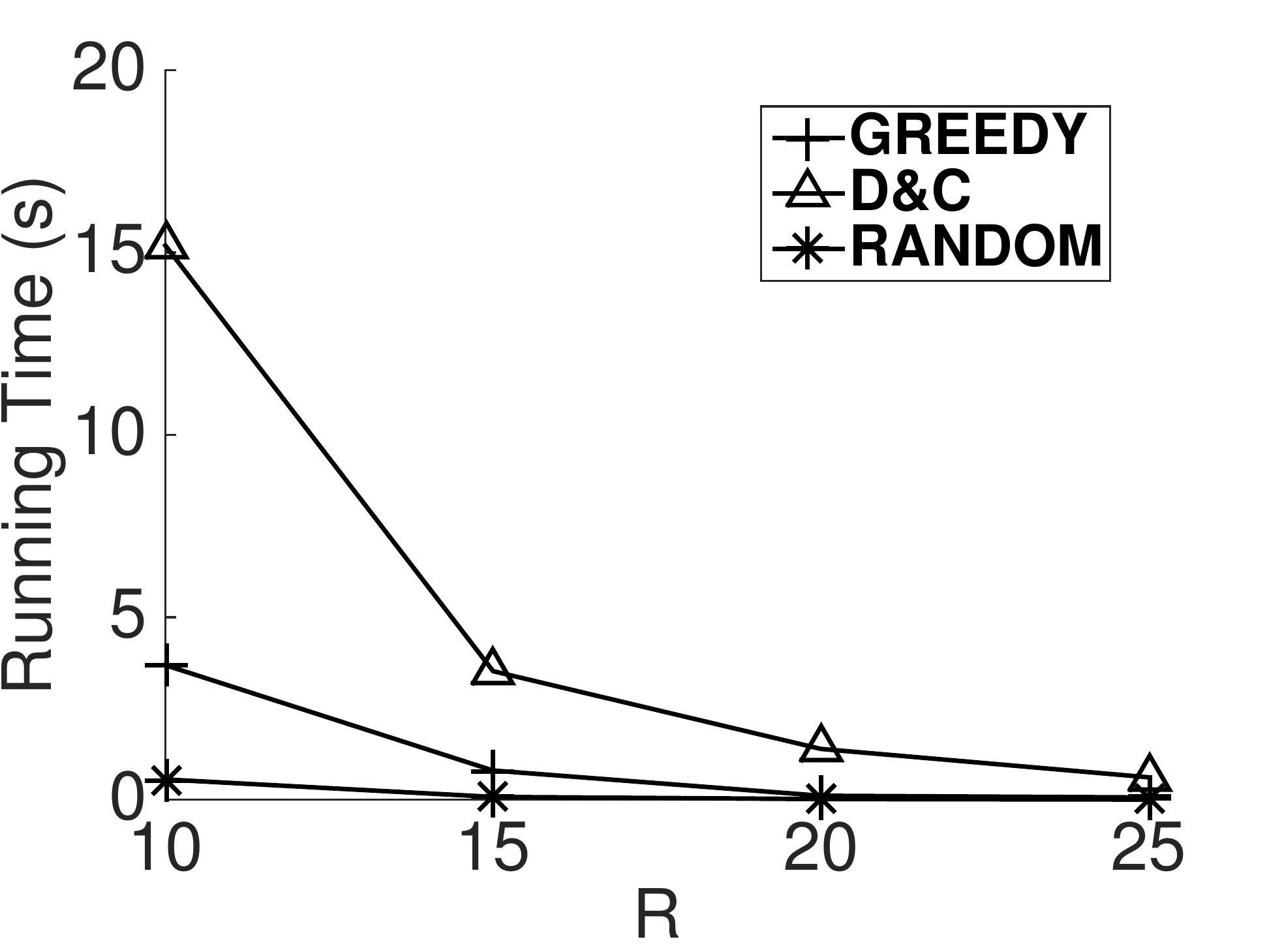}}
		\label{subfig:r_cpu}}\vspace{-2ex}
	\caption{\small Effect of the Number, $R$, of Time Instances (Synthetic Data).}
	\label{fig:round}
\end{figure}

\begin{figure}[ht!]
	\centering\vspace{-2ex}
	\subfigure[][{\scriptsize Quality Score}]{
		\scalebox{0.2}[0.2]{\includegraphics{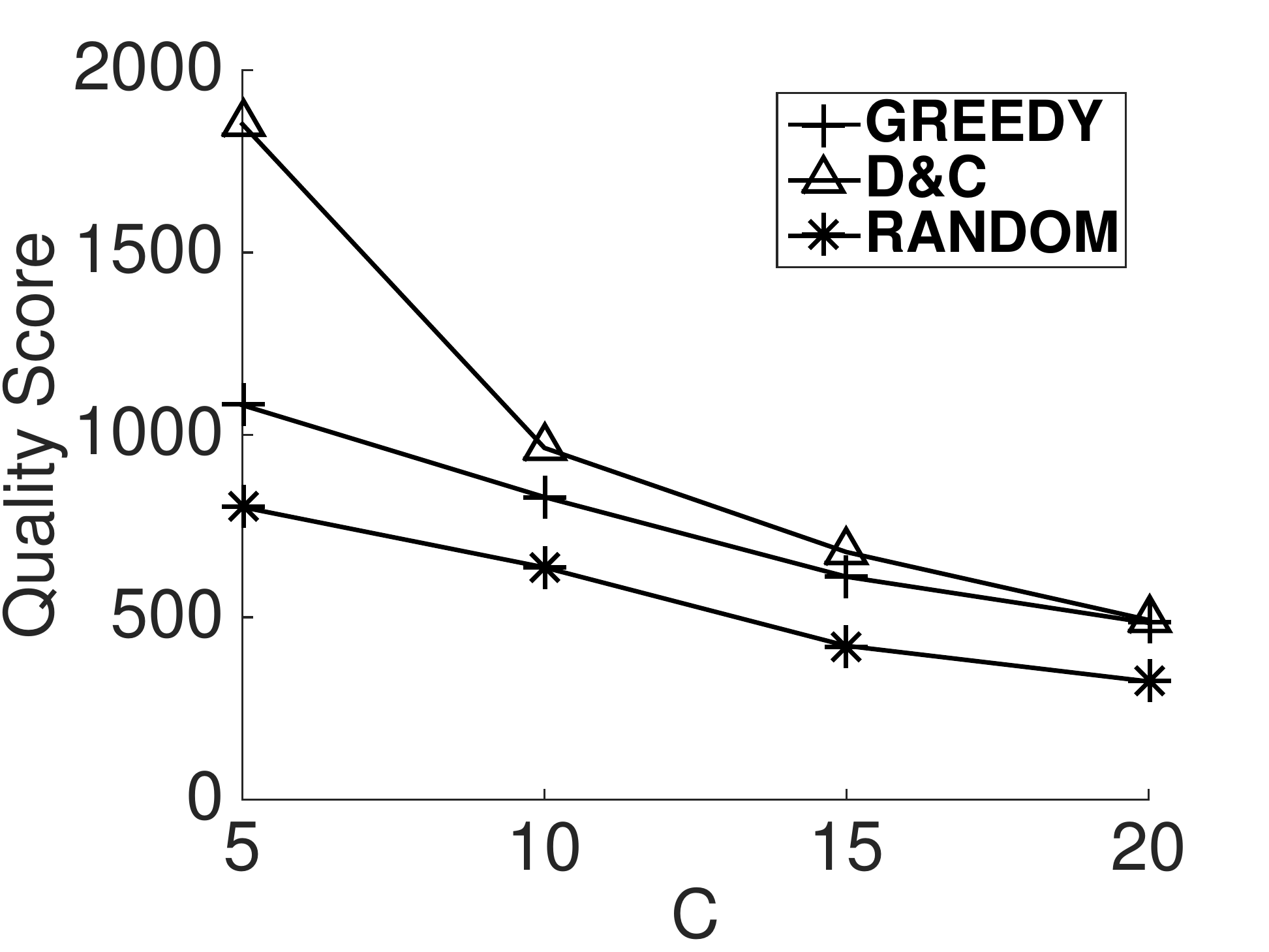}}
		\label{subfig:c_score}}
	\subfigure[][{\scriptsize Running Time}]{
		\scalebox{0.2}[0.2]{\includegraphics{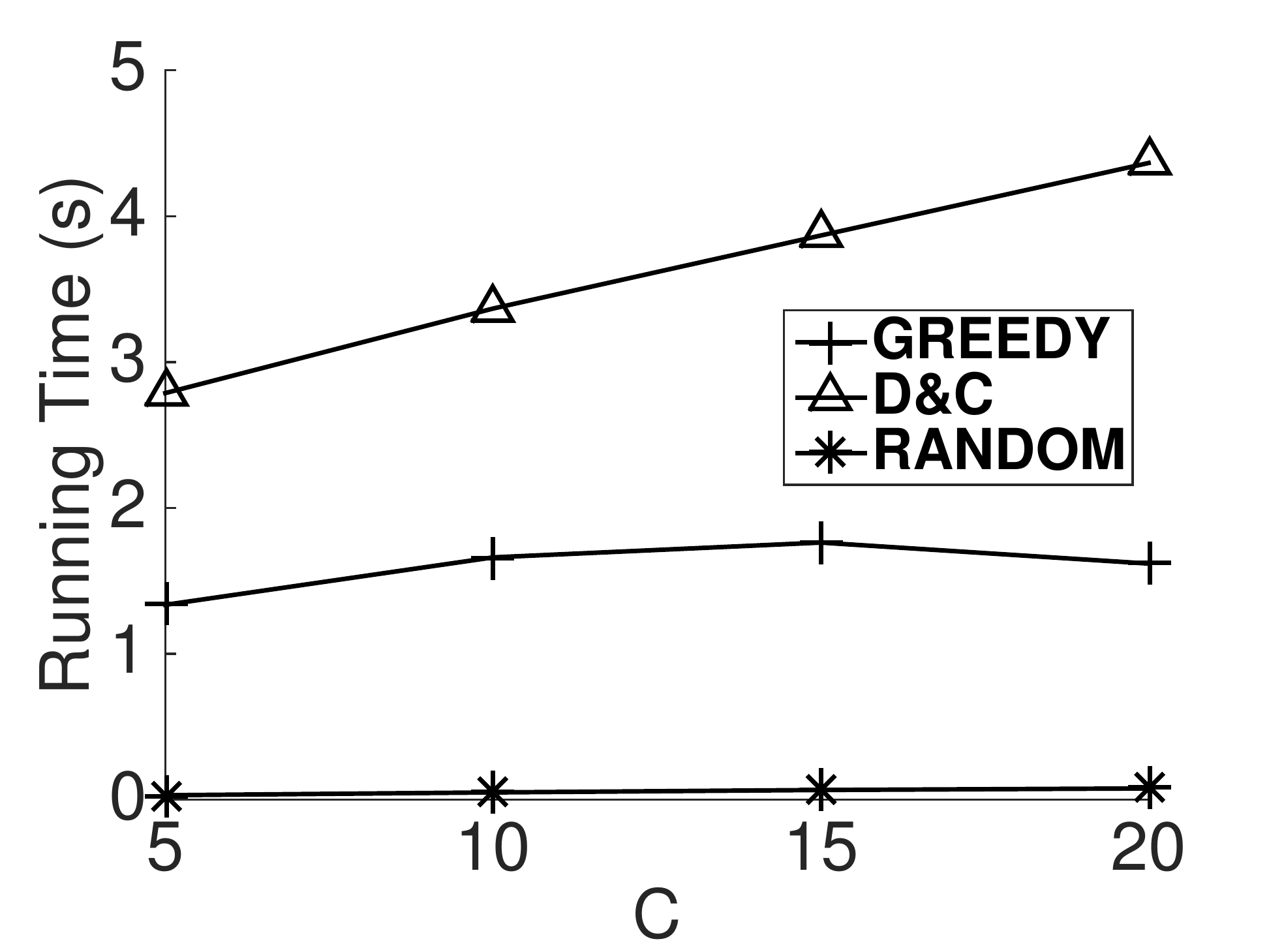}}
		\label{subfig:c_cpu}}\vspace{-2ex}
	\caption{\small Effect of the Unit Price $C$ w.r.t. Distance $dist(w_i, t_j)$  (Synthetic Data).}
	\label{fig:constant}
\end{figure}

\begin{figure}[ht!]
	\centering\vspace{-2ex}
	\subfigure[][{\scriptsize Quality Score (GAUS)}]{
		\scalebox{0.15}[0.15]{\includegraphics{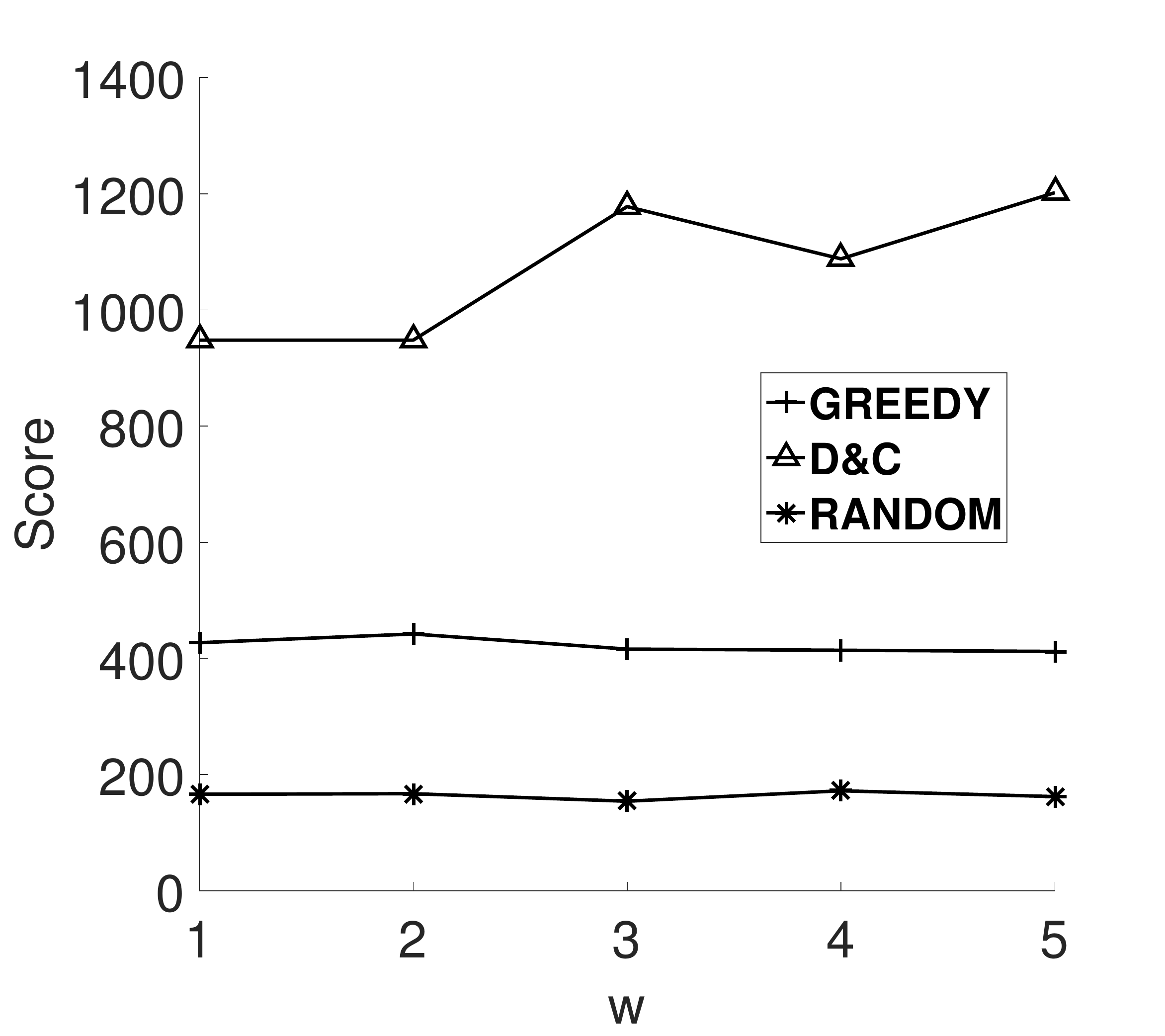}}
		\label{subfig:quality_score_gaussian}}
	\subfigure[][{\scriptsize Quality Score (UNIF)}]{
		\scalebox{0.15}[0.15]{\includegraphics{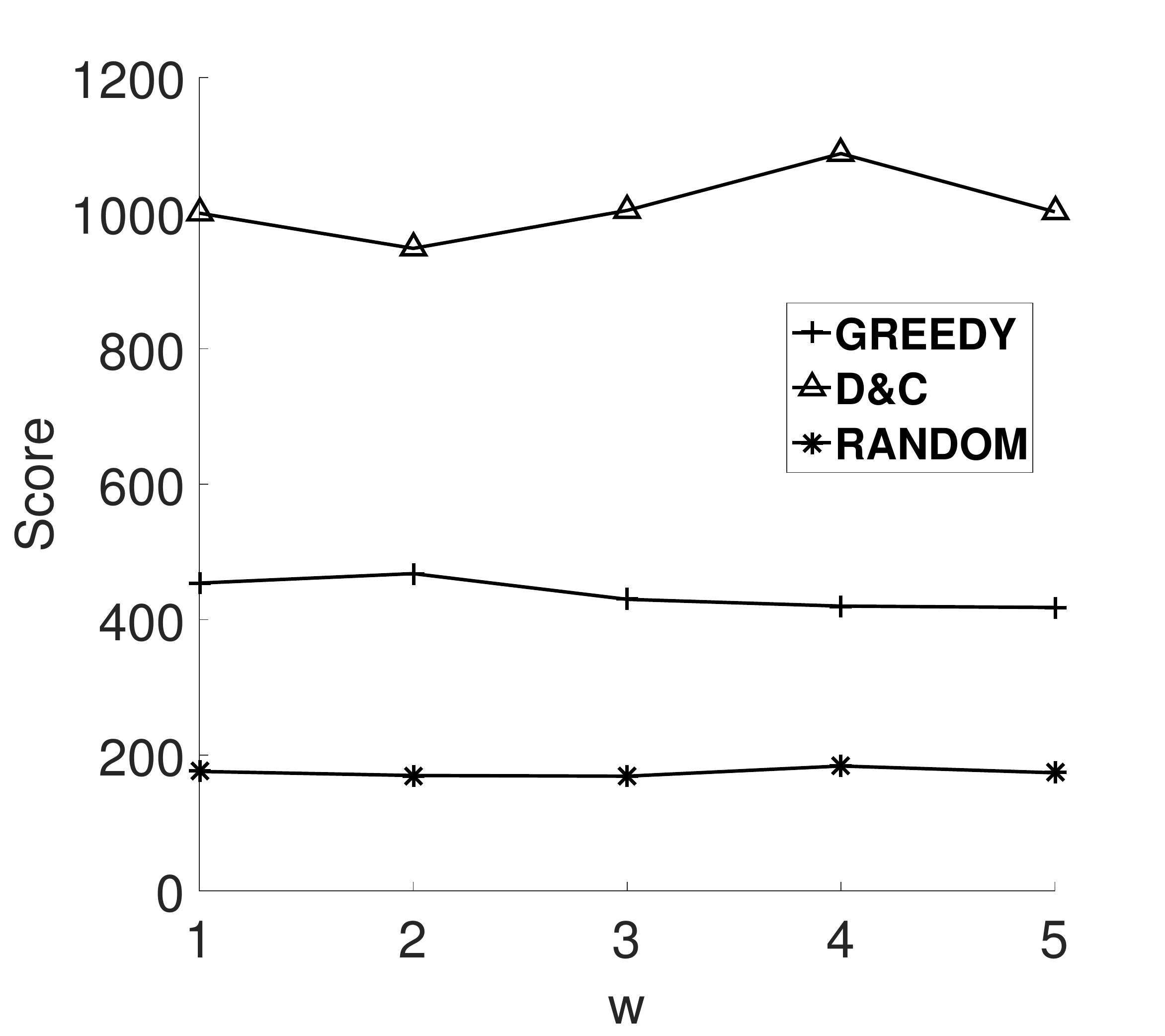}}
		\label{subfig:quality_score_uniform}}
	\subfigure[][{\scriptsize Quality Score (ZIPF)}]{
		\scalebox{0.15}[0.15]{\includegraphics{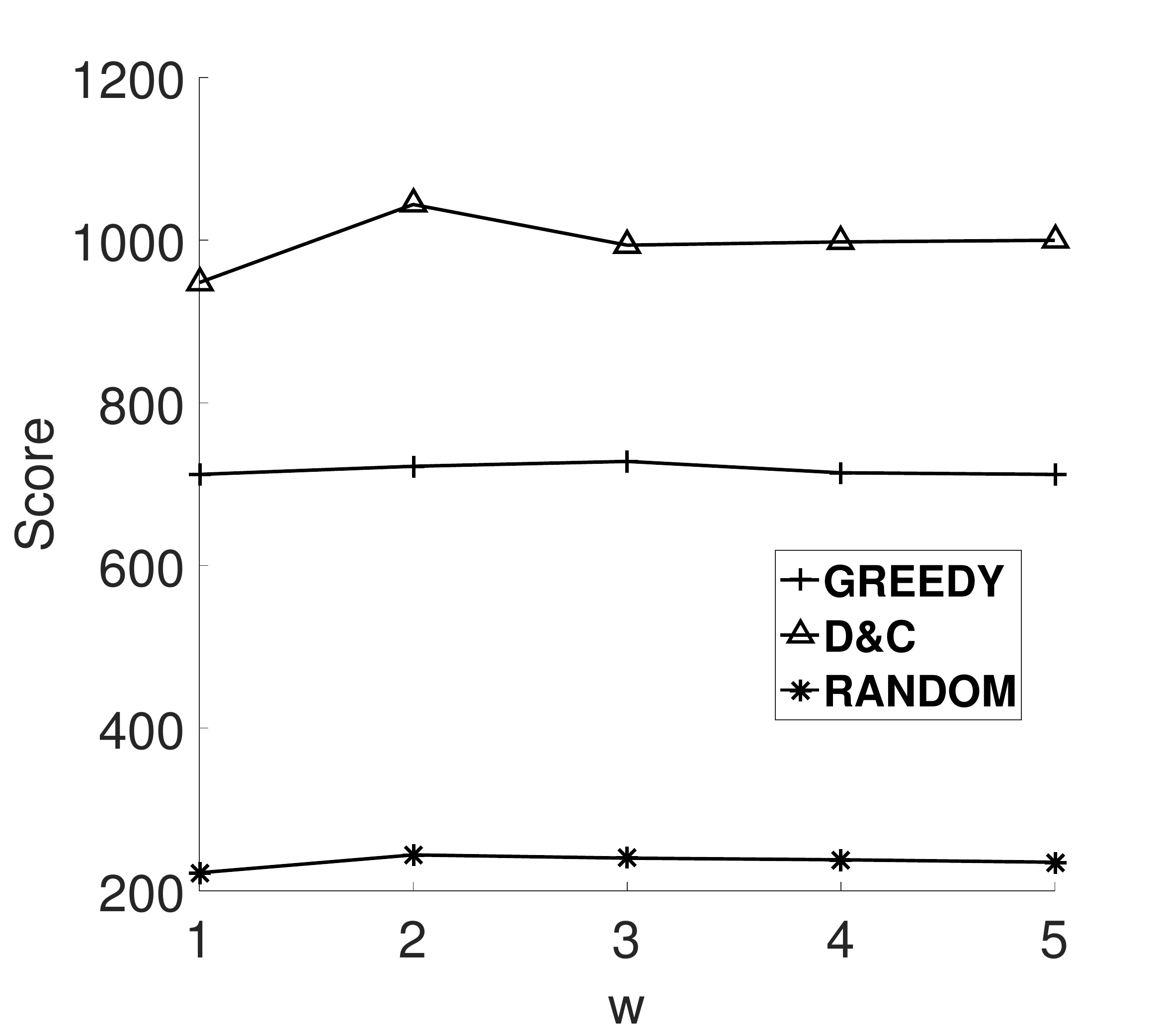}}
		\label{subfig:quality_score_zipf}}\vspace{-2ex}
	\caption{\small Effect of the Window Size $w$  (Synthetic Data).}
	\label{fig:windowsize}
\end{figure}

\subsection{Results with Different Worker-Task Distributions}

In this section, we present the experimental results for workers and
tasks with different location distributions, where parameters of
synthetic data are set to default values. We denote the Uniform
distribution as U, the Gaussian distribution as G, and the Zipf
distribution as Z. Then, for $\langle worker$$-$$task \rangle$
distributions, we tested the quality score and running time over 9
distribution combinations, including G-U, G-G, G-Z, U-U, U-G, U-Z,
Z-U, Z-G, and Z-Z, and the results are shown in Figures
\ref{fig:distribution_score} and \ref{fig:distribution_cpu}.

Similar to previous results, as shown in Figure
\ref{fig:distribution_score}, the D\&C algorithm can achieve the
highest quality score, compared with GREEDY and RANDOM, over all the
9 worker/task distribution combinations. For the running time, as
illustrated in Figure \ref{fig:distribution_cpu}, with different
combinations of worker/task distributions, D\&C can achieve low time
cost in most cases. Only for Z-U and Z-G, D\&C incurs higher time
cost than GREEDY and RANDOM, due to the unbalanced distributions of
workers and tasks. In particular, GREEDY and RANDOM iteratively
assign one valid pair and maintain the rest of valid pairs in each
iteration. When the distributions of workers and tasks are similar,
for example, G-G, U-U, and Z-Z, running times of GREEDY and RANDOM
become longer than D\&C. Especially, for Z-Z, almost all the workers
can reach all the tasks, which leads to the highest number of valid
pairs among all the 9 distribution combinations. As a result, both
GREEDY and RANDOM need much higher running time than that of other
distribution combinations. In general, with different worker and
task distributions, our GREEDY and D\&C can both achieve high
quality scores (with small time cost).

\subsection{The MQA Performance vs. the Number,
	$R$, of Time Instances and  the Unit
	Price $C$ w.r.t. Distance $dist(w_i, t_j)$}

\noindent {\bf The MQA Performance vs. the Number,
	$R$, of Time Instance.} 
Figure \ref{fig:round} reports the experimental
results for different numbers, $R$, of time instance from 10 to 25 on
synthetic data sets, where other parameters are set to default
values. In Figure \ref{subfig:r_score}, when the number, $R$, of
time instances increases, the total quality score of three MQA approaches
also increases. Since we consider a fixed time interval $P$ with
more time instances (each with budget $B$), the total quality score within
interval $P$ expects to increase for more time instances. D\&C can achieve
higher quality scores than GREEDY.

In Figure \ref{subfig:r_cpu}, when $R$ becomes larger, the running
time of all the three tested approaches decreases. This is because,
given $m$ tasks and $n$ workers within time interval $P$, for more
time instances, the average number of workers/tasks per time instance decreases,
which leads to lower time cost per time instance. Similar to previous
results, the running time of GREEDY is lower than that of D\&C.

\begin{figure}[ht!]
	\centering\vspace{-2ex}
	\subfigure[][{\scriptsize Quality Score}]{
		\scalebox{0.2}[0.2]{\includegraphics{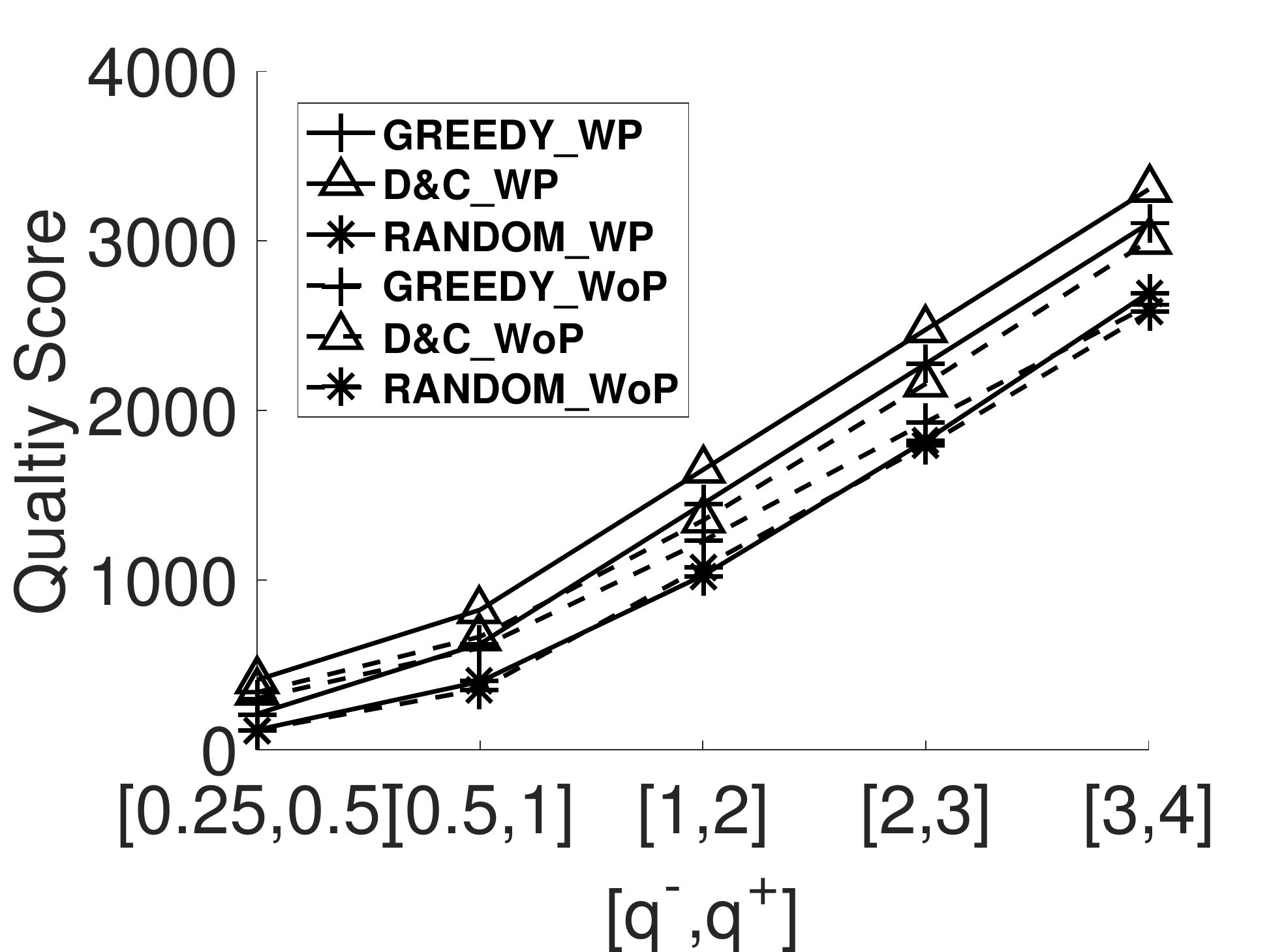}}
		\label{subfig_c:quality_score}}
	\subfigure[][{\scriptsize Running Time}]{
		\scalebox{0.2}[0.2]{\includegraphics{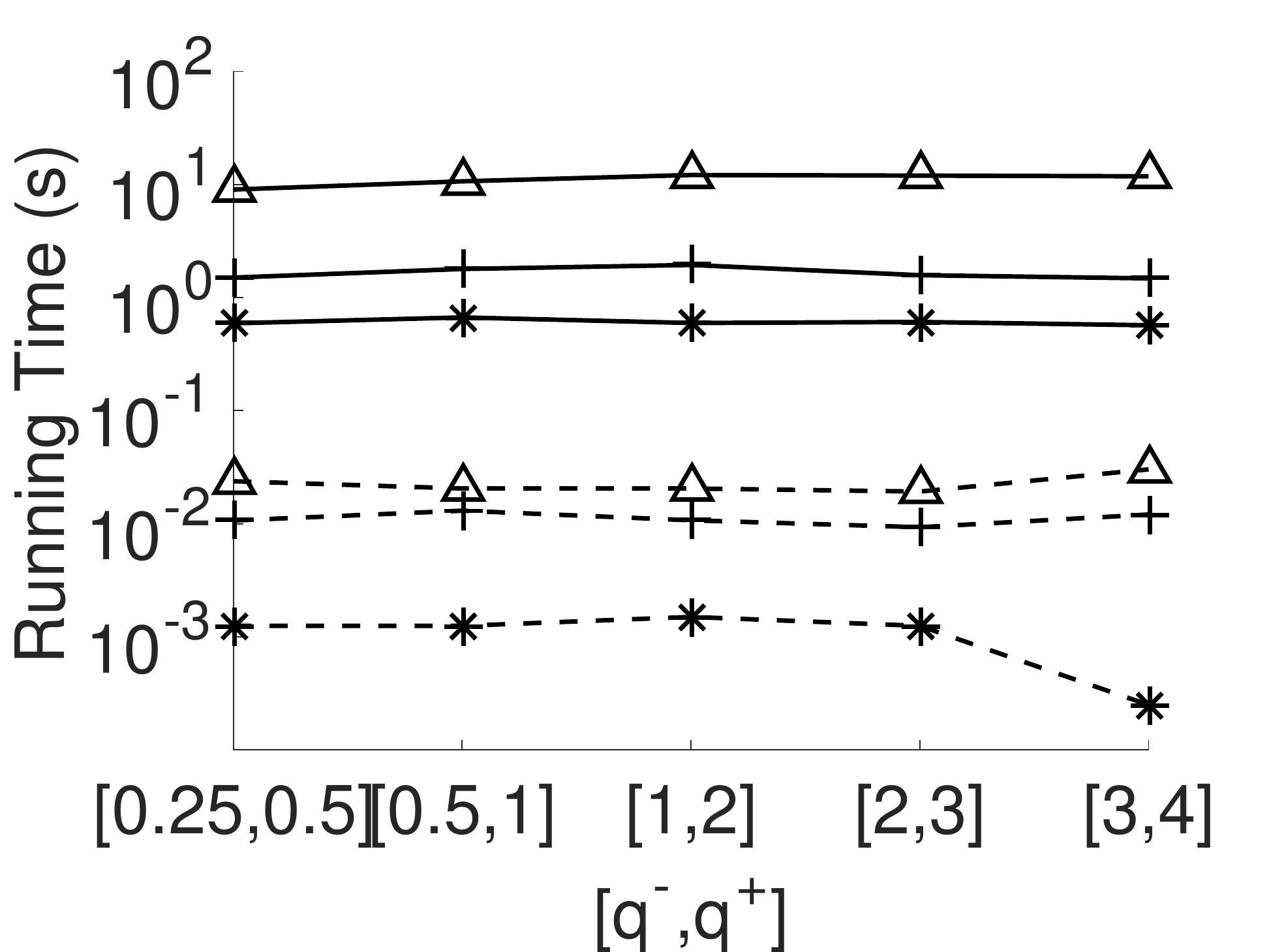}}
		\label{subfig_c:quality_cpu}}\vspace{-2ex}
	\caption{\small Effect of the Range of Quality Score $q_{ij}$  (Real Data).}\vspace{-2ex}
	\label{fig_c:quality}
\end{figure}

\begin{figure}[ht!]
	\centering\vspace{-2ex}
	\subfigure[][{\scriptsize Quality Score}]{
		\scalebox{0.2}[0.2]{\includegraphics{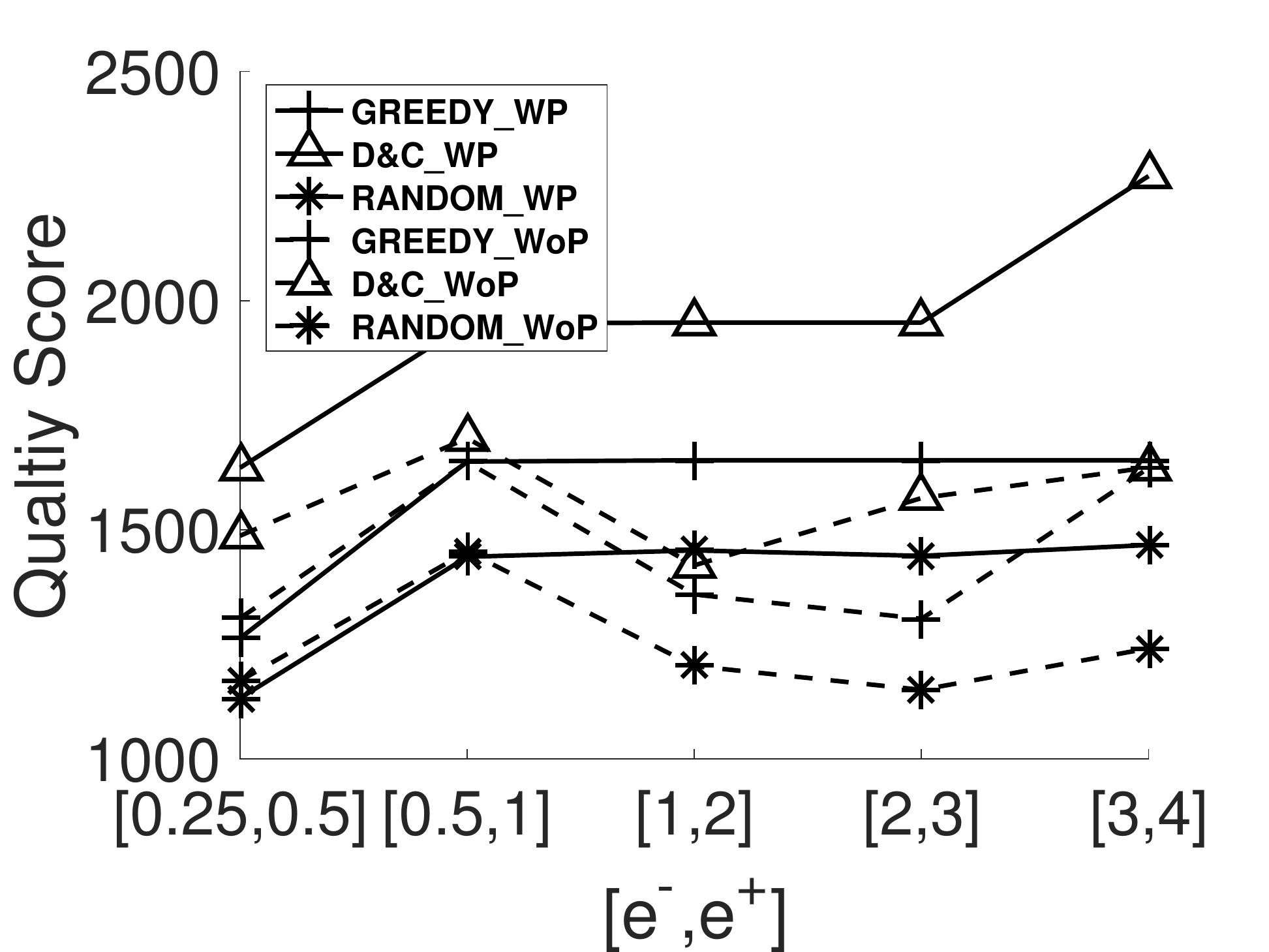}}
		\label{subfig_c:deadline_score}}
	\subfigure[][{\scriptsize Running Time}]{
		\scalebox{0.2}[0.2]{\includegraphics{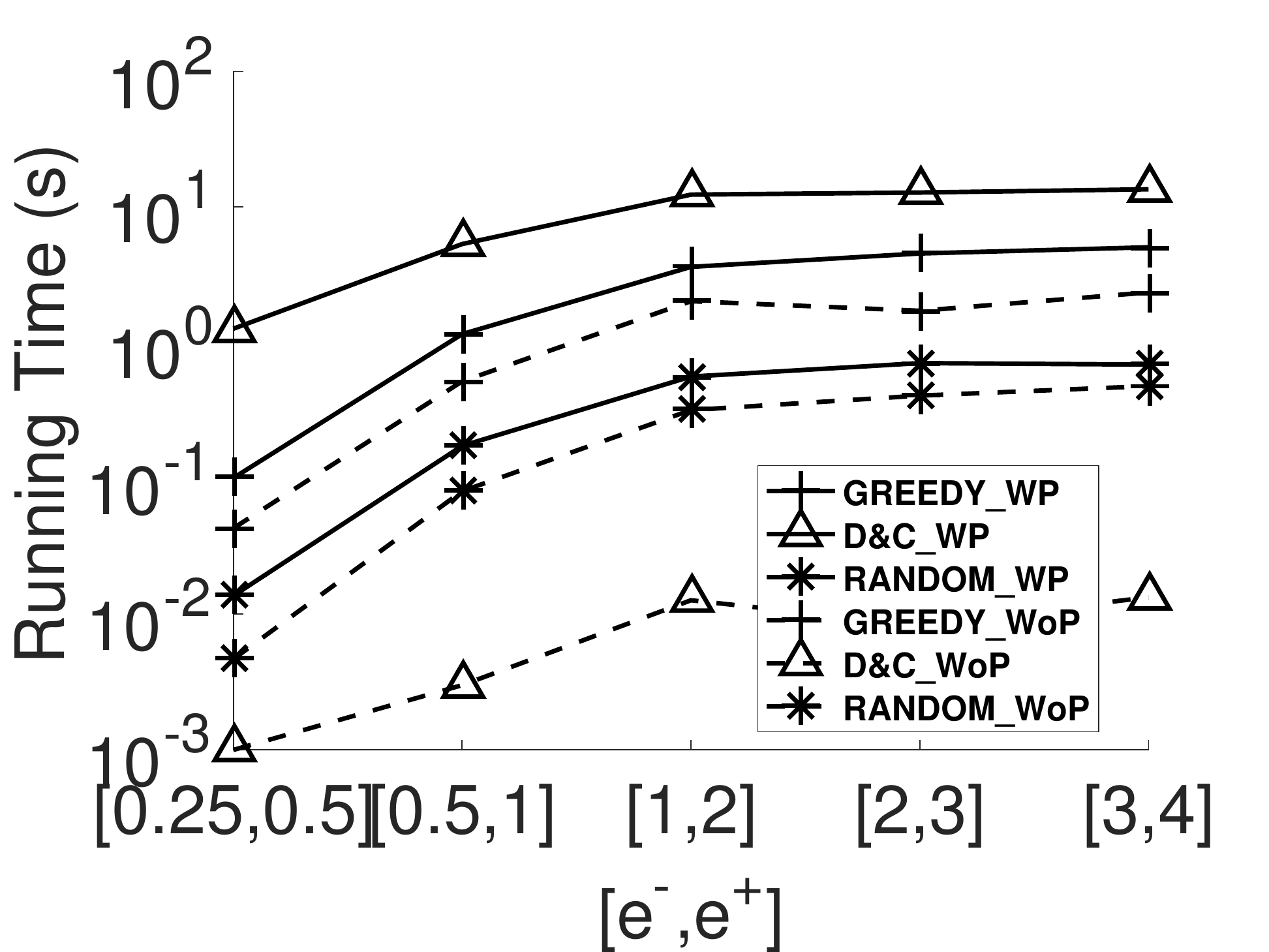}}
		\label{subfig_c:deadline_cpu}}\vspace{-2ex}
	\caption{\small Effect of the Range of Tasks' Deadlines $e_j$ (Real Data).}\vspace{-2ex}
	\label{fig_c:deadline}
\end{figure}

\begin{figure}[ht!]
	\centering\vspace{-2ex}
	\subfigure[][{\scriptsize Quality Score}]{
		\scalebox{0.2}[0.2]{\includegraphics{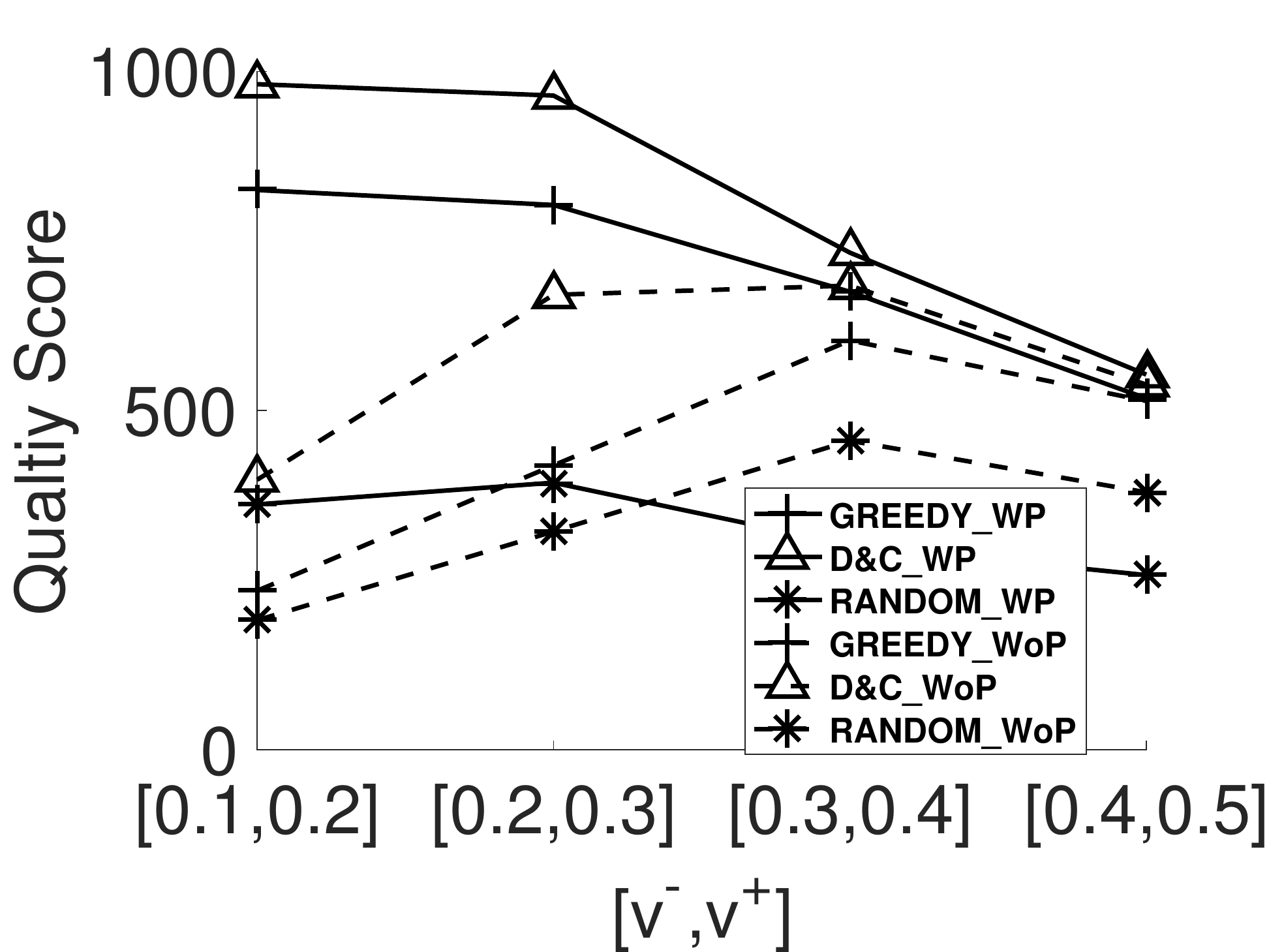}}
		\label{subfig_c:velocity_score}}
	\subfigure[][{\scriptsize Running Time}]{
		\scalebox{0.2}[0.2]{\includegraphics{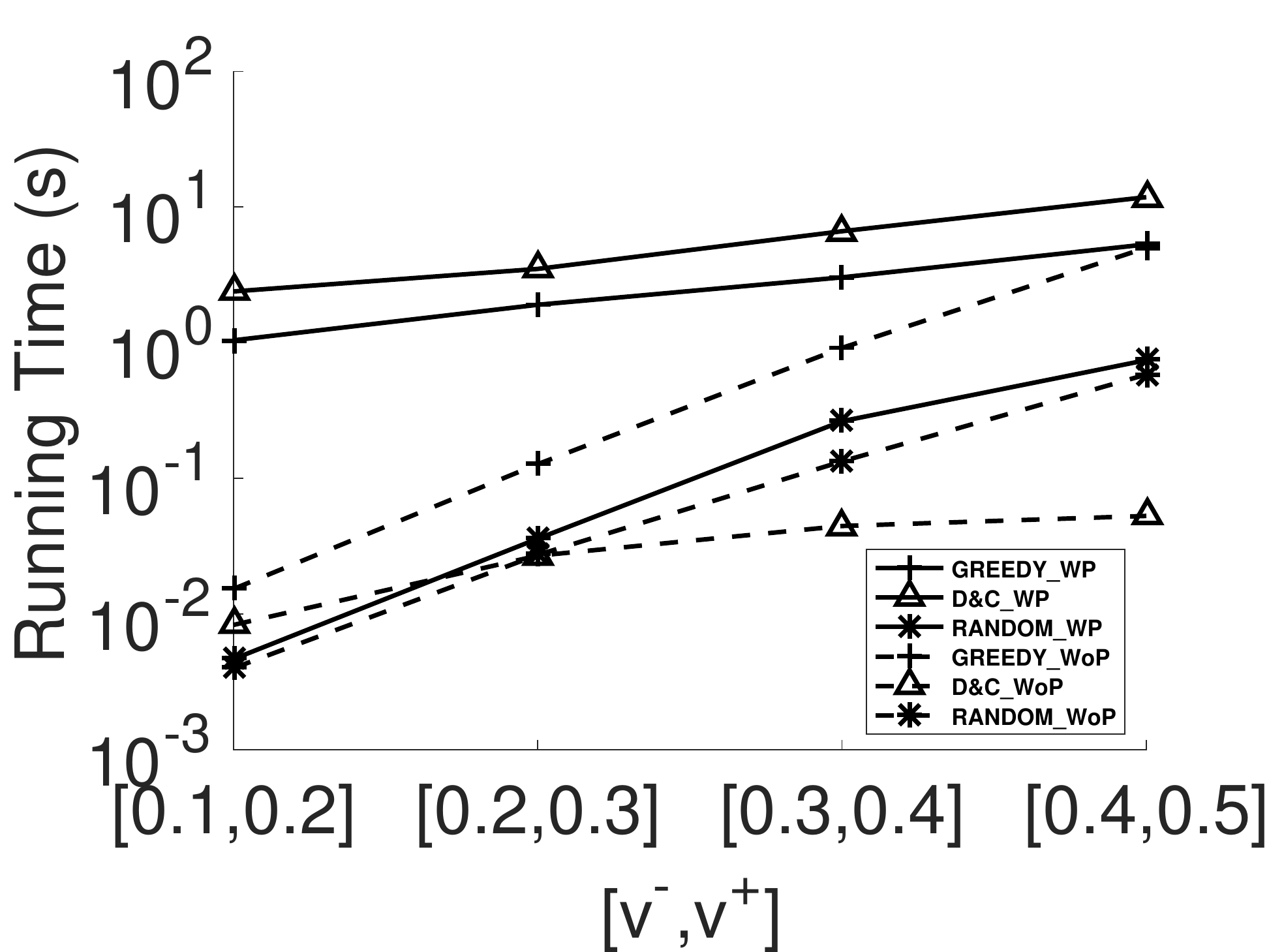}}
		\label{subfig_c:velocity_cpu}}\vspace{-2ex}
	\caption{\small Effect of the Range of Velocities $[v^-, v^+]$ (Synthetic Data).}\vspace{-2ex}
	\label{fig_c:velocity}
\end{figure}

\noindent {\bf The MQA Performance vs. the Unit
	Price $C$ w.r.t. Distance $dist(w_i, t_j)$.} Figure
\ref{fig:constant} illustrates the experimental results on different
unit prices $C$ w.r.t. distance $dist(w_i, t_j)$ from $5$ to $20$
over synthetic data, where other parameters are set to default
values. In Figure \ref{subfig:c_score}, when the unit price $C$
increases, the overall quality scores of all the three approaches
decrease. This is because for larger $C$, the number of valid
worker-and-task pairs for each time instance decreases, under the budget
constraint. Thus, the overall quality score of all the selected
assignments also expects to decrease for large $C$. Similar to
previous results, D\&C has higher quality scores than GREEDY. In
Figure \ref{subfig:c_cpu}, running times of GREEDY and RANDOM are
not very sensitive to $C$. However, with large $C$, the running time
of D\&C increases, since we need to check the constraint of budget
from lower divide-and-conquer levels, which increases the total
running time. For different $C$ values, GREEDY has lower running
times than D\&C.

\begin{figure}[ht!]
	\centering\vspace{-2ex}
	\subfigure[][{\scriptsize Quality Score}]{
		\scalebox{0.2}[0.2]{\includegraphics{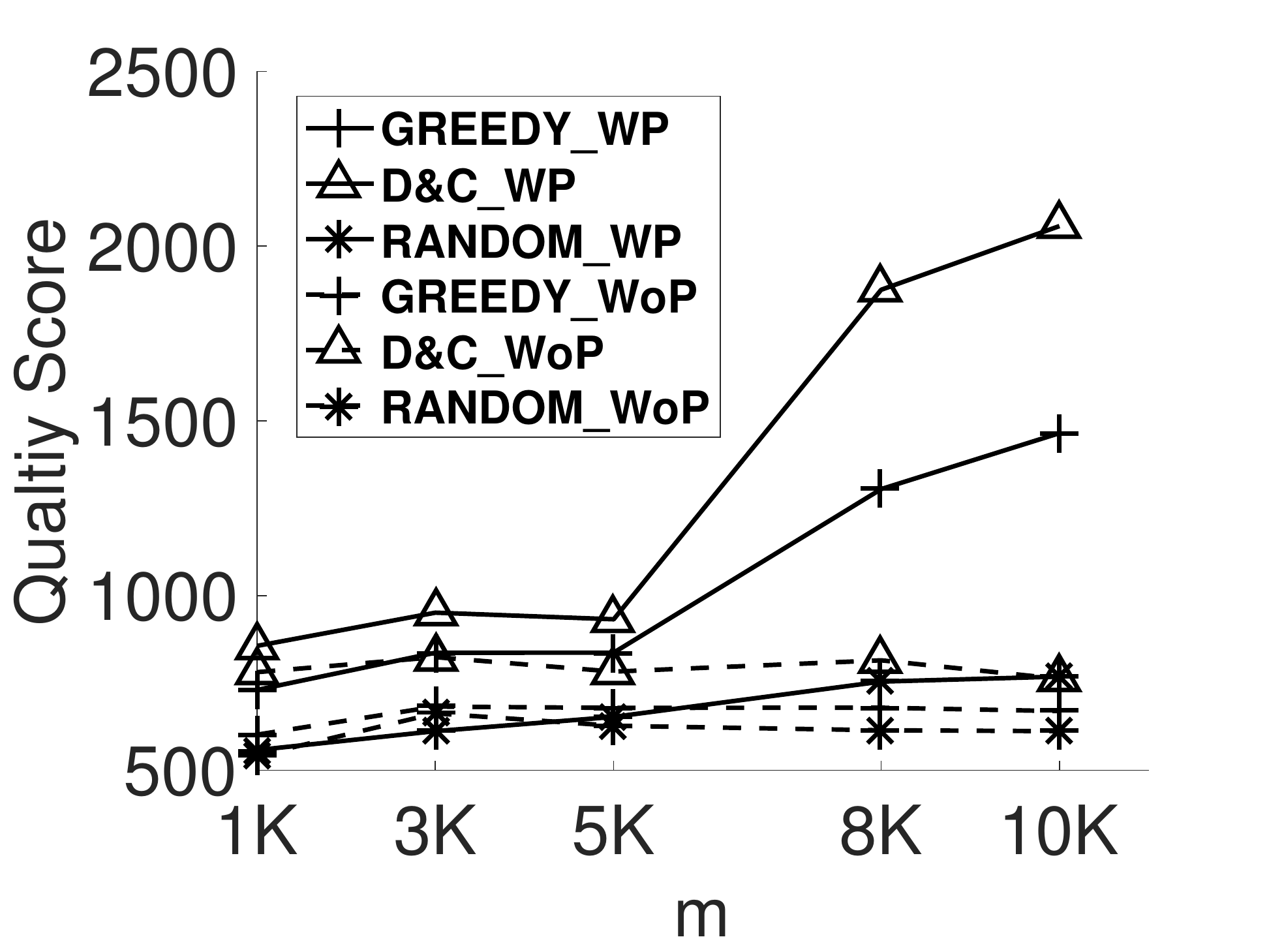}}
		\label{subfig_c:task_score}}
	\subfigure[][{\scriptsize Running Time}]{
		\scalebox{0.2}[0.2]{\includegraphics{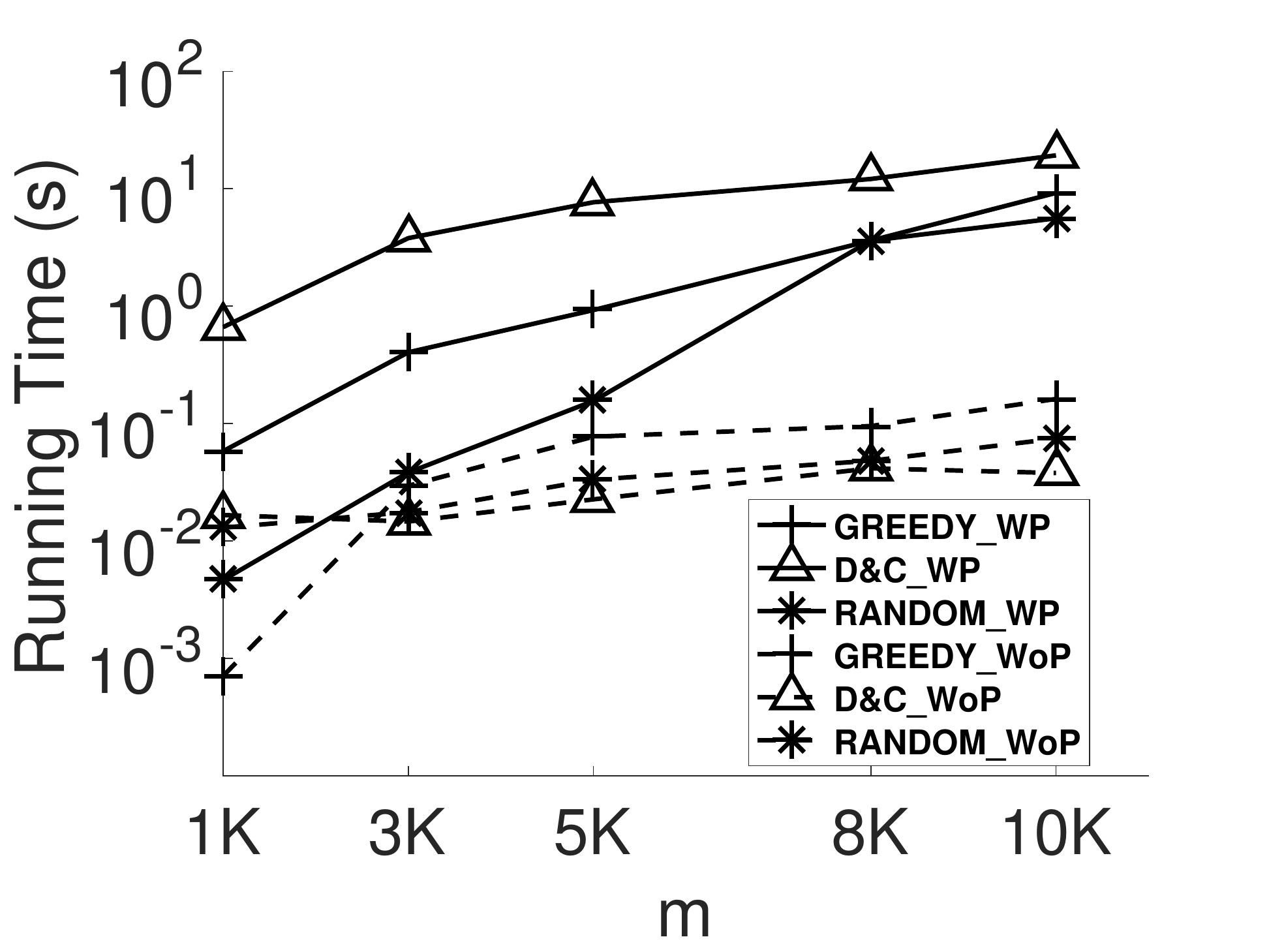}}
		\label{subfig_c:task_cpu}}\vspace{-2ex}
	\caption{\small Effect of the Number, $m$, of Tasks  (Synthetic Data).}\vspace{-2ex}
	\label{fig_c:task}
\end{figure}

\begin{figure}[ht!]
	\centering\vspace{-2ex}
	\subfigure[][{\scriptsize Quality Score}]{
		\scalebox{0.2}[0.2]{\includegraphics{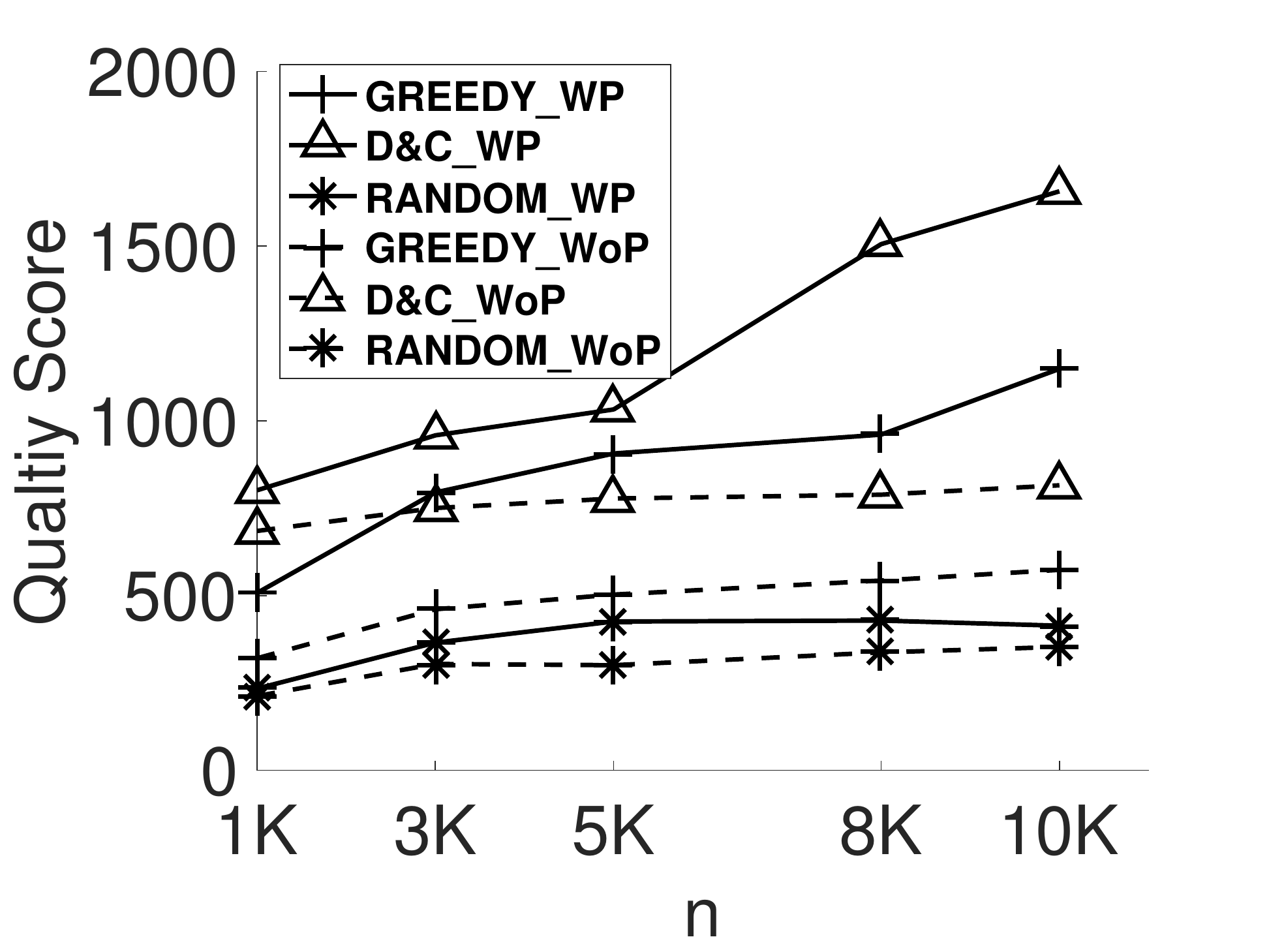}}
		\label{subfig_c:worker_score}}
	\subfigure[][{\scriptsize Running Time}]{
		\scalebox{0.2}[0.2]{\includegraphics{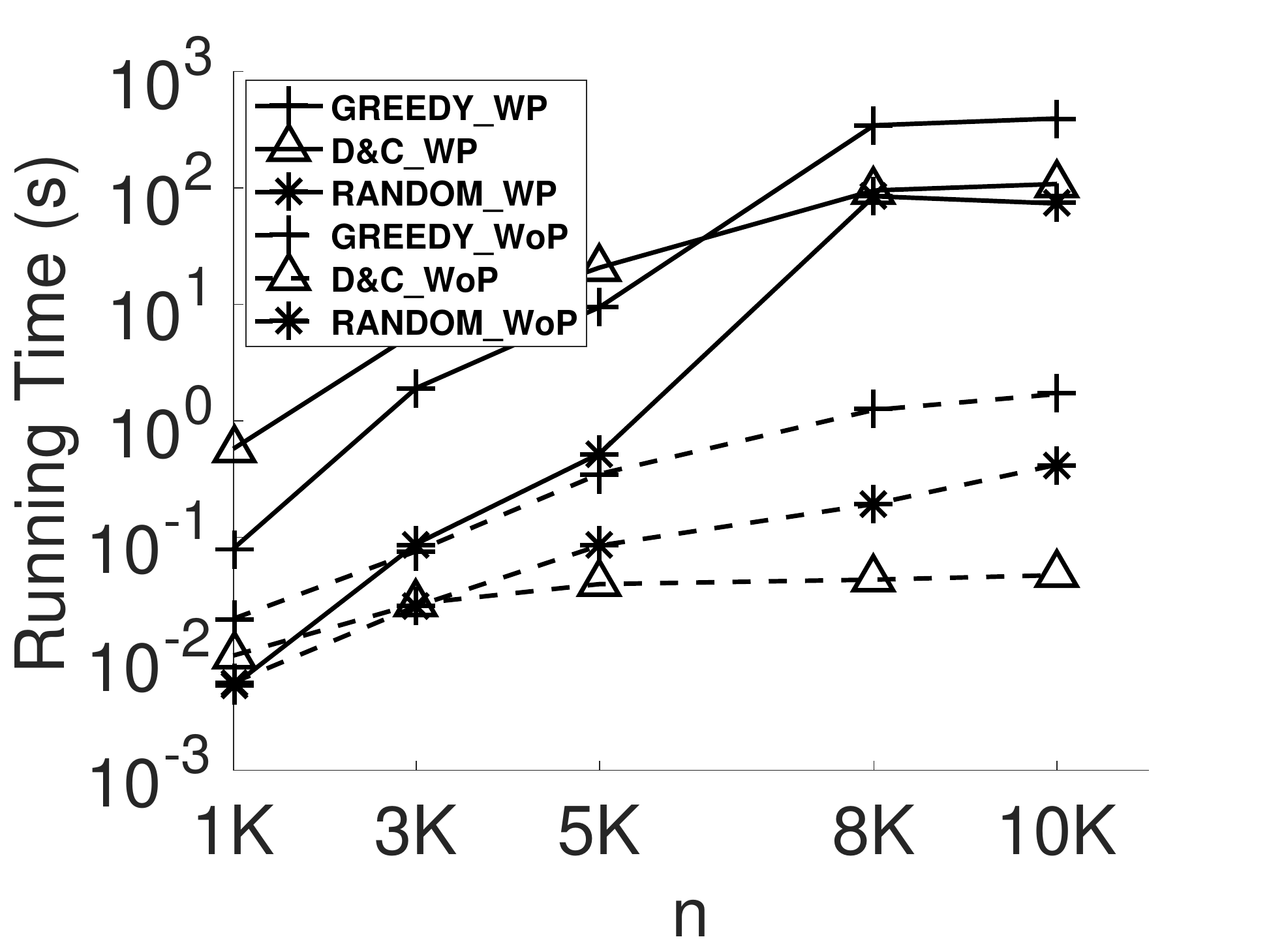}}
		\label{subfig_c:worker_cpu}}\vspace{-2ex}
	\caption{\small Effect of the Number, $n$, of Workers (Synthetic Data).}\vspace{-2ex}
	\label{fig_c:worker}
\end{figure}

\subsection{The MQA Performance vs. the Window Size $w$}

We show the results of quality scores by varying window size on different workers distributions in Figure \ref{fig:windowsize}. We can see the sliding window size affects the quality score slightly for GREEDY and RANDOM. For D\&C, it can achieve the highest quality score when the window size equals to 3 on workers with Gaussian Distribution (as shown in Figure \ref{subfig:quality_score_gaussian}), and equals to 4 and 2 on workers with Uniform and Zipf distribution respectively.

\subsection{Results of Comparison with Straightforward Methods}

Figure \ref{fig_c:quality} to Figure \ref{fig_c:worker} compare the quality scores and running times of our
MQA approaches (with predicted workers/tasks) with that of the
straightforward method which selects assignments at current and next
time instances separately (without predictions) by varying the range [$q^−$, $q^+$] of quality score $q_{ij}$, the range [$e^-$, $e^+$] of tasks' deadlines $e_j$, the range [$v^-$, $v^+$] of workers' velocities $v_i$, the number of tasks $m$ and the number of workers $n$. We denote MQA approaches with prediction as
GREEDY\_WP, D\&C\_WP, and RANDOM\_WP, and those without prediction
as GREEDY\_WoP, D\&C\_WoP, and RANDOM\_WoP, respectively.

\end{document}